\newcommand{\be}{\begin{equation}}
\newcommand{\ee}{\end{equation}}
\newcommand{\pa}{\partial}
\newcommand{\al}{\alpha}
\newcommand{\De}{\Delta}
\newcommand{\de}{\delta}
\newcommand{\f}{\phi}
\newcommand{\La}{\Lambda}
\newcommand{\la}{\lambda}
\newcommand{\si}{\sigma}
\newcommand{\cB}{{\cal B}}
\newcommand{\cC}{{\cal C}}
\newcommand{\cZ}{{\cal Z}}
\newcommand{\cO}{{\cal O}}
\newcommand{\cS}{{\cal S}}
\newcommand{\cE}{{\cal E}}
\newcommand{\cR}{{\cal R}}
\newcommand{\cF}{{\cal F}}
\newcommand{\cL}{{\cal L}}
\newcommand{\bR}{{\bf R}}
\newcommand{\bN}{{\bf N}}
\newcommand{\bn}{{\bf n}}
\newcommand{\bZ}{{\bf Z}}
\newcommand{\supp}{{\rm supp }}
\newcommand{\norm}[1]{\left|\left|#1\right|\right|}  
\newcommand{\abs}[1]{\left|#1\right|}                
\newtheorem{thm}{Theorem}
\newtheorem{lem}[thm]{Lemma}
\newtheorem{remark}[thm]{Remark}
\numberwithin{equation}{section}   
\begin{document}
\title{Dipole$-$Dipole Correlations for the sine-Gordon Model}
\author{ Guowei Zhao\thanks{Email: \url{gzhao2@buffalo.edu}. Email after Sept 1, 2011: \url{zgwei@ustc.edu}}\\ Dept. of Mathematics \\SUNY at Buffalo \\Buffalo, N.Y. 14260}
\maketitle
\begin{abstract}
We consider the dipole-dipole correlations for the two-dimensional Coulomb gas/sine-Gordon model for $\beta>8\pi$ by a renormalization group method. First we re-establish the renormalization group analysis for the partition function using finite range decomposition of the covariance. Then we extend the analysis to the correlation functions. Finally, we show a power-law decay characteristic of the dipole gas.\\
\\
\textbf{Key Words:} Renormalization Group, Coulomb gas, sine-Gordon, Correlation
\end{abstract}
\section{Introduction}
The two dimensional Coulomb gas is a particle system in $\bR^2$ with electric charges  $\pm 1$ interacting with Coulomb potential. Its equilibrium state is defined by a Gibbs measure on configuration space. 
The equilibrium state can be identified with the vacuum for a sine-Gordon quantum field theory.

Since the work of Kosterlitz and Thouless \cite{KT73}, which predicts the phase transition point, 
several rigorous works  have appeared on Coulomb gas/sine-Gordon model. Remarkably, Fr\"ohlich and Spencer treated Coulomb lattice gas at large $\beta$ by an inductive method in \cite{FS81}, and then Dimock and Hurd gave a renormalization group  analysis on both  $\beta>8\pi$ (infrared) and $\beta<8\pi$ (ultraviolet) problems with small activity in \cite{DH00}. Recently, Falco studied the pressure along a small piece of Kosterlitz-Thouless line near $\beta=8\pi$.

In the present paper, we study the dipole-dipole correlation for the two-dimensional Coulomb gas/sine-Gordon model at low temperature by a renormalization group method. The proof is mainly composed by two parts: the first part is to re-establish the uniform bounds for the partition function of the continuum model via finite range decomposition of the covariance, the second part is to extend the  analysis to the correlation functions, and  prove a power-law decay for long-distance dipoles.

Now we start with the brief description of Coulomb gas/sine-Gordon model. The standard canonical partition function of the two dimensional Coulomb gas with inverse temperature $\beta$ and activity $z/2$ is given on the torus $\La_M=\bR^2/L^M\bZ^2$
\be
Z(\La_M) = \sum_{n=0}^{\infty}\frac{z^n}{2^nn!}\left[\sum_{q_1,\dots,q_n}\int_{\La_M\times\cdots\times\La_M} \exp \left(-\frac{\beta}{2}\sum_{i,j}q_iq_jv(x_i-x_j)\right)dx_1\dots dx_n\right]   
\ee
Here the summation is over unit charges $q_i=\pm 1$, $i=1,\dots,n$. The interaction between particles is given by the potential
 \be v(x-y)=\frac{1}{2\pi}\log\abs{x-y}\ee
And after sine-Gordon transformation, it can be formally rewritten as  the cutoff expression
\be \label{opdef}
Z=\int \exp\left(z\int_{\La_M}\ \cos(\phi(x))dx\right)\ d\mu_{\beta
v}(\phi). \ee
The integral $d\mu$ is with respect to Gaussian measure with mean 0 and covariance $\beta v$. For a precise definition, we approximate $v$ by 
\be  \label{vdef}
 v_{M}( x-y)=\sum_{j=0}^{M-1} C\left(\frac{x-y}{L^j}\right)
\ee
Here the covariance $C(x-y)=0$ if $\abs{x-y}>L$, i.e., $C$ is finite range. This was used previously in \cite{BMS03}  for $d\geq3$.
We will show that if $\int\rho=0$, then as $M\rightarrow\infty$, the term $\int_y v_M(x-y)\rho(y)dy$ will converge to $\int_y v(x-y)\rho(y)dy+\int_y w(x-y)\rho(y)dy$, where $w(x-y)$ 
vanishes when $\abs{x-y}>1$. We tolerate this short distance modification since we are mainly interested in the long distance/infrared problem.

First we study the infrared problem of  the partition function of sine-Gordon following Brydges-Yau renormalization group method \cite{BY90}. 

And as in \cite{D09}, we consider
\be \label{pdef}
Z(\La_M,z,\sigma)=\int \exp \left( zW(\La_M,\phi)-\sigma V(\La_M,\phi) \right) d\mu_{\beta v_M}(\phi)
\ee
where
\be \label{v1def}
W(\La_M,\phi)=\int_{\La_M}\cos(\phi(x))dx
\ee
\be \label{v2def}
V(\La_M,\phi)=\int_{\La_M} (\partial\phi)^2dx
\ee
Here the inclusion of the term $\sigma\int (\partial\phi)^2$ corresponds to a modification of the vacuum dielectric constant. We regard $\sigma$ as an adjustable parameter. In a sense we will be tuning $\sigma$ so the actual dielectric constant is unity.

Renormalization group involves repeated integration with respect to Gaussian measure $d\mu_{\beta C}$ in each scale. To use the similar R.G. treatments as in \cite{DH00} and \cite{BMS03}, we introduce a local structure for the sequence of densities. A closed  polymer $X$ means a connected union of some   unit closed blocks in $\La$. A polymer activity $K(X,\phi)$ is a function which depends on  $\phi$ only in polymer $X$. And for given polymer activities $V$ and $K$, the exponential operation is defined as
\be
\cE xp(\Box e^{-V}+K)(X,\phi)=\sum_{X_i} e^{-V(X  \setminus \cup_i X_i,\phi)}\prod_iK(X_i,\phi)
\ee 
where $\Box$ is the characteristic function of  unit blocks $\De$, and the summation is over collections of disjoint polymers $\{X_i\}$ in $X$.  Refer to \cite{B09} for the interpretation of the  notation $\cE xp$.

Then we can state the result for the partition function as:
\begin{thm}   
\label{pthm}   
Let $\beta > 8 \pi$, 
let  $L$ be sufficiently large, 
and  let $\abs{z}$ be  sufficiently small.
Then there exists $\sigma_0$ so that the  function $Z$ defined by (\ref{pdef}) can be written for $0\leq j\leq M$ as:
\be
Z= e^{\cE_j}\int \cE xp(\Box e^{-V_j} +K_j)(\La_{M-j},\phi)  d\mu_{\beta v_{M-j}}(\phi)
\ee
Here
\begin{itemize}
\item $\de\cE_j=\cE_j-\cE_{j-1}$ is bounded in the volume $\La_{M-j}.$ 
\item The background potentials $V_j$ satisfy:
\be
V_j(\Delta)=\sigma_j\int_{\Delta} (\partial\phi)^2
\ee
and $V_j(X)=\sum_{\Delta\subset X} V_j(\Delta)$, where  $\sigma_j$ are  constants.

\item The polymer activities $K_j$ are uniformly bounded in volume, and with certain specified norm
\be \norm{K_{j}}_{j}\leq \delta^j\epsilon\ee
with $\delta=\cO(1) \max\{L^{-2},L^{2-\beta/4\pi}\}$.  And $\cO(1)$ means a constant independent of $j$ and $L$.
\item 
With  chosen $\sigma_0$, the flow $(\sigma_j, \norm{K_j})\rightarrow (0,0)$ as $j\rightarrow \infty$.
\end{itemize}
\end{thm}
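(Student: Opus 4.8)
\emph{Proof strategy.} The plan is to prove Theorem~\ref{pthm} by induction on $j$, following the Brydges--Yau renormalization group scheme \cite{BY90} in the polymer formalism of \cite{DH00}, \cite{BMS03}, \cite{D09}, adapted to the finite range decomposition (\ref{vdef}). At $j=0$ one has $\cE_0=0$, with $\sigma_0$ a parameter to be determined and $V_0(\Delta)=\sigma_0\int_\Delta(\partial\phi)^2$; expanding $\exp(zW-\sigma_0 V)$ over single blocks puts (\ref{pdef}) into the form $\cE xp(\Box e^{-V_0}+K_0)$ with $K_0$ supported on single blocks, $K_0(\Delta,\phi)=e^{-\sigma_0\int_\Delta(\partial\phi)^2}\big(e^{z\int_\Delta\cos\phi\,dx}-1\big)$, so $\norm{K_0}_0\leq\epsilon=\cO(\abs z)$. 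One renormalization group step maps the representation on $\La_{M-j}$ with measure $d\mu_{\beta v_{M-j}}$ to the one on $\La_{M-j-1}$ with measure $d\mu_{\beta v_{M-j-1}}$, in three sub-steps: (i) split $v_{M-j}=C+v_{M-j-1}(\cdot/L)$ and integrate out the fluctuation field $\zeta$, Gaussian with mean $0$ and covariance $\beta C$ --- since $C$ vanishes for $\abs x>L$, fluctuations supported in polymers more than $L$ apart are independent, which is what keeps the expansion local; (ii) rescale $\phi(x)\mapsto\phi(x/L)$ to return to unit blocks; (iii) reblock, grouping the small polymers produced by (i)--(ii) into connected polymers of the new scale. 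After extracting a vacuum energy the output is again of the form $\cE xp(\Box e^{-V}+K)$; the construction of this map and the reblocking combinatorics follow the cited works.

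The core of the argument is the extraction step and the contractive estimate it yields. From the fluctuation integral one subtracts the part of its local expansion that is relevant or marginal under two-dimensional scaling: a field-independent constant, collected into $\de\cE_{j+1}$ (which is extensive, so $\abs{\de\cE_{j+1}}\leq\cO(1)\abs{\La_{M-j-1}}$), and a term proportional to $\int(\partial\phi)^2$, absorbed into the coupling to give $\sigma_{j+1}=\sigma_j+(\text{increment})$. What remains is $K_{j+1}$, and one proves
\be
\norm{K_{j+1}}_{j+1}\leq\delta\,\norm{K_j}_j+\cO(1)\big(\norm{K_j}_j+\abs z+\abs{\sigma_j}\big)^2,
\ee
with $\delta=\cO(1)\max\{L^{-2},L^{2-\beta/4\pi}\}$. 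Here $L^{-2}$ is the scaling gain of the leading genuinely irrelevant local term (dimension $4$, such as $(\partial^2\phi)^2$) left after the constant and $\int(\partial\phi)^2$ parts are removed, reinforced by the volume gain carried by connected polymers larger than one block; and $L^{2-\beta/4\pi}$ is the gain attached to any surviving single-charge ($\cos\phi$) component, since integrating $\zeta$ against $\cos(\phi+\zeta)$ produces a Debye factor $e^{-\beta C(0)/2}$, a power $L^{-\beta/4\pi}$ of $L$, which combined with the volume factor $L^2$ from rescaling a block gives the scaling $L^{2-\beta/4\pi}$ of the operator $\int\cos\phi\,dx$ --- strictly below $1$ precisely because $\beta>8\pi$. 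Taking $L$ large and then $\abs z$ small makes $\delta<1$ and renders the quadratic remainder harmless, so the induction closes with $\norm{K_j}_j\leq\delta^j\epsilon$.

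It remains to tune $\sigma_0$ so the flow reaches $(0,0)$. Since $\int(\partial\phi)^2$ is marginal in two dimensions, $\sigma_j$ has no scaling contraction of its own; but the increment $\sigma_{j+1}-\sigma_j$ arises only from the neutral two-charge (dipole) part of the fluctuation integral and from $K_j$, so $\abs{\sigma_{j+1}-\sigma_j}\leq\cO(1)\,\delta^j\epsilon$, which is summable along any trajectory that stays small, giving $\sigma_j=\cO(\delta^j\epsilon)\to0$. One does not evolve $\sigma_0$ forward but solves it backward: the desired trajectory is the fixed point of the map sending a sequence $(\sigma_j,K_j)_{j\geq0}$ to $\big(-\sum_{k\geq j}(\sigma_{k+1}-\sigma_k),\ \text{RG-image of }K_{j-1}\big)$ on the Banach space of sequences with weight $\delta^{-j}$; equivalently, $\sigma_0\mapsto\sigma_\infty(\sigma_0)-\sigma_0$ has small Lipschitz constant, so $\sigma_\infty$ vanishes for exactly one small $\sigma_0$. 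Along this trajectory $\sigma_j=-\sum_{k\geq j}(\sigma_{k+1}-\sigma_k)\to0$ and $\norm{K_j}_j\leq\delta^j\epsilon\to0$, which is the final assertion of the theorem.

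The main obstacle is not the induction itself but the two ingredients it rests on: the choice of the scale-$j$ norm $\norm{\cdot}_j$ making fluctuation integration, rescaling and reblocking simultaneously bounded, and the verification that the extracted remainder genuinely carries the factor $\delta$. The norm must contain a large-field regulator --- a Gaussian weight in $\partial\phi$ dominating both the possibly wrong-signed $e^{-\sigma_j\int(\partial\phi)^2}$ and the growth of the $\phi$-derivatives of the $\cos$ interaction --- together with enough $\phi$-derivatives to Taylor-expand $K_j$ and isolate its constant and $\int(\partial\phi)^2$ components, and a large-set regulator penalizing big polymers so that reblocking converges; making these weights mutually compatible under all three operations is the delicate point. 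The accompanying ``irrelevance after linearization'' bound requires that once exactly the local relevant and marginal parts are subtracted, the Taylor remainder --- processed through Gaussian integration by parts and then rescaled --- produces the gain $\delta$ rather than merely $\cO(1)$. Carrying this through uniformly in the torus volume $\La_{M-j}$, and the preliminary fact that $\int v_M(x-y)\rho(y)\,dy\to\int(v+w)(x-y)\rho(y)\,dy$ for neutral $\rho$, are routine given the finite range of $C$.
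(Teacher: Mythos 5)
Your proposal follows essentially the same route as the paper: an initial Mayer expansion giving a small $K_0$, an RG step built from fluctuation integration with the finite range covariance $C$, reblocking, rescaling and extraction of the constant and $\int(\partial\phi)^2$ parts, the contraction $\delta=\cO(1)\max\{L^{-2},L^{2-\beta/4\pi}\}$ obtained exactly as you describe (irrelevant neutral terms of dimension $\geq 4$ and the charged-sector factor $e^{-\beta C(0)/2}=L^{-\beta/4\pi}$), Cauchy bounds for the quadratic remainder, and the choice of $\sigma_0$ by solving the marginal coupling backward along the flow. The paper packages this last step as an application of the Stable Manifold Theorem (after rescaling $\tilde\sigma_j=2^j\sigma_j$, $\tilde K_j=2^jK_j$), which is the same weighted-sequence fixed-point argument you sketch.
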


This result is essentially the same as Theorem 1 in \cite{DH00}, but here we are using the finite range decomposition of the covariance \cite{BGM04} in the continuum case \cite{BMS03} instead of the regular splitting covariance in \cite{DH00}.

Then we consider the correlation functions for the sine-Gordon model. For a discussion, see \cite{FS81}. The Coulomb gas is expected to behave like a dipole gas for large $\beta$, i.e. low temperature. Since a charge at $x$ is represented by a field $\phi(x)$, a dipole as $x$ and $x+\epsilon \textbf{n}$ is represented by
 \[\frac{\phi(x+\epsilon \textbf{n})-\phi(x)}{\epsilon}\approx \textbf{n}\cdot \partial\phi(x)\]
 Thus dipole-dipole correlations are studied with the field $\partial\phi$.

Let \be(\partial\phi,\rho)=\la_1\bn_1\cdot\partial\phi(a)+\la_2\bn_2\cdot\partial\phi(b)\ee for $a,b\in\La_M$, with $\abs{a-b}\gg1$, $\bn_1$ and $\bn_2$ are two unit vectors. Then $\rho$ is defined implicitly and $\int \rho=0$.

Now we consider the generating functional:
\be  \label{gfdef}   
Z(\rho)=<e^{i(\partial\phi,\rho)}>=Z^{-1}\int e^{i(\partial\phi,\rho)} e^{zW} d\mu_{\beta v_M}(\phi)
\ee
then the truncated field correlation function will be
\be   \label{tcdef}   
\left<\partial\phi(a)\partial\phi(b)\right>^T=(-1)\cdot \left[\frac{\partial^2}{\partial \la_1\partial\la_2}\log Z(\rho) \right] \Big|_{\rho=0}
\ee
here the decay of (\ref{tcdef}) is the interesting problem.

First, by the renormalization group analysis as Theorem \ref{pthm} for the partition function, for $Z(\rho)$ we also have:
\begin{thm} \label{05182011}
For sufficiently small $\abs{z}$ 
\be
Z(\rho)=e^{\cE_j(\rho)}\int \cE xp(\Box e^{-V_j(\phi)} +K_j(\phi,\rho))(\La_{M-j})  d\mu_{\beta v_{M-j}}(\phi)
\ee
hold for any $0\leq j\leq M-1$, where $V_j$'s are $\rho-$independent, and  polymer activities satisfy:
\be
\norm{K_{j}(\rho)}_{j}\leq \de^{j}\epsilon'
\ee
where $\epsilon'$ is a $\rho-$dependent constant. 
And $K_j(\rho)$ has pinning property for any $j$.
\end{thm}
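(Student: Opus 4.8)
The plan is to \emph{deduce} Theorem~\ref{05182011} from Theorem~\ref{pthm}: absorb the source $e^{i(\partial\phi,\rho)}$ into the initial polymer data of the numerator in (\ref{gfdef}) and then run the \emph{identical} renormalization group map, tracking one extra piece of structure — the pinning. Since $(\partial\phi,\rho)=\la_1\bn_1\cdot\partial\phi(a)+\la_2\bn_2\cdot\partial\phi(b)$ and $\abs{a-b}\gg1$, we have $e^{i(\partial\phi,\rho)}=e^{i\la_1\bn_1\cdot\partial\phi(a)}\,e^{i\la_2\bn_2\cdot\partial\phi(b)}$, a product of two factors, each depending on $\phi$ only at one point and hence only in the unit block $\De_a\ni a$, resp.\ $\De_b\ni b$, and these blocks are disjoint. (Alternatively one may shift $\phi$ by $i$ times the dipole field $\beta(v_M\!*\!\rho)$, which produces the Gaussian prefactor $e^{-\frac12(\rho,\beta v_M\rho)}$ and leaves $e^{zW}$ with a complex-shifted argument; either bookkeeping works and I use the factorized one.) Writing the numerator of (\ref{gfdef}) as $e^{\cE_0(\rho)}\int\cE xp(\Box e^{-V_0}+K_0(\rho))(\La_M,\phi)\,d\mu_{\beta v_M}(\phi)$, we take $V_0$ and $\cE_0$ exactly as in Theorem~\ref{pthm} ($\rho$-independent), let $K_0(\rho)$ equal the $K_0$ of Theorem~\ref{pthm} on every polymer disjoint from $\De_a\cup\De_b$, and on a polymer meeting $\De_a$ (resp.\ $\De_b$) fold in the extra factor $e^{i\la_1\bn_1\cdot\partial\phi(a)}$ (resp.\ $e^{i\la_2\bn_2\cdot\partial\phi(b)}$) — for instance on the single block $\De_a$ replace $K_0(\De_a,\phi)$ by $e^{-V_0(\De_a,\phi)}\bigl(e^{i\la_1\bn_1\cdot\partial\phi(a)}(1+e^{V_0(\De_a,\phi)}K_0(\De_a,\phi))-1\bigr)$, and similarly for $\De_b$. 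By construction the only polymers carrying $\rho$-dependence are those containing $a$ or $b$; this is the $j=0$ case of the \textbf{pinning property}. Dividing by $Z$ at the end only shifts $\cE_j$ by the constant $-\log Z$.

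Next I verify the initial bound $\norm{K_0(\rho)}_0\le\epsilon'$ with $\epsilon'$ a finite, $\rho$-dependent constant. The only genuinely new estimate is to control the factors $e^{i\la_i\bn_i\cdot\partial\phi}$ in the Brydges--Yau norm of Theorem~\ref{pthm}. Two observations suffice. First, the pointwise derivative is well defined uniformly in $M$: although $\phi$ itself has variance $\beta v_M(0)=\beta M\,C(0)$, the derivative at a point has variance $\sum_{i,j}\bn_i\bn_j\,\partial_i\partial_j(\beta v_M)(0)=\beta\,\bigl(\sum_{k\ge0}L^{-2k}\bigr)\sum_{i,j}\bn_i\bn_j\,\partial_i\partial_jC(0)$, a convergent geometric series, so $\bn_i\cdot\partial\phi(a)$ is a bona fide pointwise quantity and the scale-$k$ contribution to it is $O(L^{-k})$, i.e.\ the source is irrelevant under the scaling. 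Second, for the complexified fields in the large-field bounds, $\abs{e^{i\la_i\bn_i\cdot\partial\phi(a)}}\le e^{\abs{\la_i}\,\abs{\partial\phi(a)}}\le e^{\la_i^2/4c}\,e^{c\,\abs{\partial\phi(a)}^2}$, and $e^{c\,\abs{\partial\phi(a)}^2}$ is dominated by the large-field regulator on $\De_a$ at the cost of an $O(1)$ constant; $\phi$-derivatives of $e^{i\la_i\bn_i\cdot\partial\phi(a)}$ only bring down powers of $\abs{\la_i}$ against the same factor. Hence $\norm{K_0(\rho)}_0\le C\bigl(\epsilon+e^{c(\la_1^2+\la_2^2)}\bigr)=:\epsilon'$, finite for every fixed $\rho$ and collapsing to the Theorem~\ref{pthm} bound when $\rho=0$.

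Now I run the same renormalization group transformation as for Theorem~\ref{pthm} — integrate out the scale-$0$ fluctuation with covariance $\beta C$, reblock, extract the local $\sigma$-part, rescale — and read off the conclusion from two features of that map which are already established there. (i) It is \emph{local}: the new activity on a polymer $X'$ is assembled only from old activities on polymers in a bounded neighbourhood of $X'$, and the extraction of $V_{j+1}(\De)=\sigma_{j+1}\int_\De(\partial\phi)^2$ is carried out only on the translation-invariant bulk activities. Therefore a polymer carrying $\rho$-dependence at scale $j+1$ descends from one carrying it at scale $j$, hence from one containing $a$ or $b$, so the pinning property persists; and $V_{j+1}$, extracted from the bulk alone, remains $\rho$-independent and equal to the $V_{j+1}$ of Theorem~\ref{pthm} (so the tuned $\sigma_0$ is unchanged). (ii) Its contraction estimate has the schematic form $\norm{K_{j+1}}_{j+1}\le\de\,\norm{K_j}_j+C\,\norm{K_j}_j^2+\cdots$, where the quadratic term comes from pairs of \emph{distinct}, nearby polymers in an $\cE xp$-collection. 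For the pinned sector this is decisive: there is a single polymer containing $a$ and a single one containing $b$, so no pinned polymer pairs with itself, and — until their separation has shrunk to $O(L^j)$, i.e.\ until scale $\approx\log_L\abs{a-b}$ — the $a$- and $b$-polymers are too far apart to pair either. Hence the recursion for the pinned part is effectively \emph{linear}, $\norm{K_{j+1}(\rho)^{\mathrm{pin}}}_{j+1}\le(\de+C\epsilon)\,\norm{K_j(\rho)^{\mathrm{pin}}}_{j}$ (the only quadratic partner being the small bulk activity of norm $\le\de^j\epsilon$), which is a contraction for $\abs z$ small with \emph{no} smallness required of $\epsilon'$. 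Iterating, $\norm{K_j(\rho)}_j\le\de^j\epsilon'$ (enlarging the constant $\cO(1)$ in $\de$ to absorb $C\epsilon$), the single genuinely quadratic event being the $a$--$b$ merger near scale $\log_L\abs{a-b}$, which replaces $\epsilon'$ by $\epsilon'+(\epsilon')^2$ and seeds the eventual power-law decay. Collecting the vacuum energies produced along the way (together with $-\log Z$) yields $\cE_j(\rho)$, and an induction on $j$ gives the stated representation.

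The renormalization group engine is inherited verbatim from Theorem~\ref{pthm}, so the real work — and the main obstacle — is the initial step: showing that $K_0(\rho)$ is an admissible polymer activity with finite norm. The delicate point is the pointwise field $\bn\cdot\partial\phi(a)$, which must be given meaning in the continuum model and estimated against the large-field regulators \emph{uniformly in the volume}; the finite-range decomposition (\ref{vdef}) is exactly what delivers this, exhibiting $\partial\phi(a)$ as a sum over scales of contributions of size $O(L^{-k})$ with summable variance. A secondary subtlety is that $\epsilon'$ need not be small — only $\abs z$ need be — and this is legitimate precisely because the pinning property renders the recursion for the $\rho$-dependent sector effectively linear.
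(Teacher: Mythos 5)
Your overall strategy is the same as the paper's: fold the source $e^{i(\partial\phi,\rho)}$ into the initial activities block by block (the paper's $K_0(\De,\phi,\rho)=(e^{i(\partial\phi,\rho_\De)+zW(\De)}-1)e^{-\sigma V(\De)}$), observe pinning at $j=0$, bound $\norm{K_0(\rho)}_0$ by a $\rho$-dependent constant, and rerun the vacuum RG map while keeping $V_j$ $\rho$-independent by extracting the quadratic part only from the $\rho=0$ activities. The initial estimate you give is a workable variant of the paper's Lemma on $\norm{K_0}$ (the paper bounds the derivatives of $\tilde W=i\la_1\bn_1\cdot\partial\phi(a)+zW$ directly in the $\norm{\cdot}_h$ norm and uses a residue formula on the contour $\abs{\gamma}=\la_0^{-1}$; your pointwise bound against the regulator accomplishes the same thing, and your worry about defining $\partial\phi(a)$ is already absorbed by the $C^r$ embedding built into the norm).

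There is, however, a genuine gap in step (ii): you assert that the \emph{linearized} map contracts with the same factor $\de$ on the $\rho$-dependent activities as in the vacuum case, and you use pinning only to control persistence of the support condition and the quadratic terms. But the vacuum contraction on the neutral sector is $\norm{\cF_1 k_0}\leq\cO(1)L^{2-\dim(k_0)}\norm{k_0}$, and the $L^{-2}$ there comes from $\dim\overline{K}_j\geq 4$, which is achieved precisely by subtracting the quadratic (field-strength) part into $V_j$. For the $\rho$-dependent difference $\de k_0=k_0(\rho)-k_0(0)$ this subtraction is forbidden by your own (and the paper's) requirement that $V_j$ stay $\rho$-independent; only the constant is removed, so $\dim(\de k_0)\geq 2$ only, and the naive bound is $L^{2-2}=\cO(1)$ --- marginal, not contracting. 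The paper's fix (Lemma \ref{NKestimate}) is a third, and decisive, use of pinning: in the reblocking sum $\sum_{\overline{Y}^L=LX}$ only the $\cO(1)$ polymers $Y$ meeting $\{L^{-j}a,L^{-j}b\}$ contribute to $\de k_0$, so the volume factor $L^2$ is replaced by $\cO(1)$ and the $L^{-\dim}=L^{-2}$ from the scaling lemma alone restores the contraction. Your proposal never makes this count, so as written the induction $\norm{K_j(\rho)}_j\leq\de^j\epsilon'$ does not close. A secondary caveat: your claim that no smallness of $\epsilon'$ is needed is at odds with the convergence requirements of the extraction and reblocking expansions (geometric series in $\norm{\tilde K}$); the paper instead keeps $\epsilon'=\max(\epsilon,C_\la)$ controlled via the analyticity radius $\la_0$ and handles the nonlinear remainder by a Cauchy bound with a $j$-dependent contour radius $D=(\de^j\epsilon)^{-1/4}$ rather than by your structural ``pinned polymers cannot pair'' argument.
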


The idea of the proof is to track the growth of those new polymer activities $K_j(X,\phi,\rho)$ as in Theorem \ref{pthm}. 
The new $\rho-$dependent or $\la-$dependent polymer activities  still keep the localization property and $2\pi-$ translation invariant property. Moreover the polymer activities $K_j(X,\phi,\rho)$ have \textit{pinning property}, i.e. they are the same as the vacuum polymer activities $K_j(X,\phi,0)$ except a few terms for which $X$ contains one or both dipoles. 

Furthermore, based on the pinning property, we are able to prove the main decay result for the truncated correlation function:
\begin{thm} \label{cthem} 
Let $\beta > 8 \pi$,  
let  $L$ be sufficiently large, 
and  let $\abs{z}$ be  sufficiently small.
Then for $\epsilon_0>0$, there is a constant $C$ which is independent of $M$ such that
\be
\abs{\left<\partial\phi(a)\partial\phi(b)\right>^T}\leq C\abs{a-b}^{-2(1-\epsilon_0)}
\ee
\end{thm}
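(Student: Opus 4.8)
The plan is to extract the long‑distance behavior of $\left<\partial\phi(a)\partial\phi(b)\right>^T$ from the renormalization group representation of $Z(\rho)$ in Theorem \ref{05182011}, using the pinning property to separate the dipole‑free bulk from the two insertions. First I would write $\log Z(\rho)$ using the RG flow up to a scale $j^*$ chosen so that the block size $L^{j^*}$ is comparable to the dipole separation $\abs{a-b}$; concretely $j^* \approx \log_L \abs{a-b}$. At scales $j < j^*$ the two dipoles sit in distinct blocks, and by the pinning property the activities $K_j(X,\phi,\rho)$ agree with the vacuum activities $K_j(X,\phi,0)$ except on polymers $X$ containing $a$ or $b$; this means that, to the order needed for the mixed second derivative $\partial^2/\partial\la_1\partial\la_2$, the $\la_1$‑dependence and $\la_2$‑dependence factor through geometrically disjoint contributions, so the only terms that survive differentiation in both $\la_1$ and $\la_2$ are those produced when a single polymer first grows large enough to engulf both dipoles, which happens no earlier than scale $j^*$.

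The next step is to bound that surviving contribution. Differentiating the identity of Theorem \ref{05182011} twice and setting $\rho=0$, the truncated correlation is a sum of: (i) a term from $\cE_{j^*}(\rho)$, and (ii) a term from the polymer sum $\cE xp(\Box e^{-V_{j^*}}+K_{j^*}(\rho))$ evaluated inside the remaining Gaussian integral with covariance $\beta v_{M-j^*}$. For (ii) I would use the norm bound $\norm{K_{j^*}(\rho)}_{j^*}\le \de^{j^*}\epsilon'$ together with the stability estimates for the $\cE xp$ operation already established in the proof of Theorem \ref{pthm}; the key point is that the $\la_1\la_2$‑derivative of any polymer activity is nonzero only on polymers $X$ of diameter $\gtrsim \abs{a-b}$ in lattice units, and such polymers are suppressed by the large‑set regulator built into $\norm{\cdot}_{j^*}$, producing a factor like $\de^{j^*}$. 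For (i), the analysis of how the $\partial^2\phi$‑type counterterms $\sigma_j$ are generated shows $\partial^2_{\la_1\la_2}\cE_{j^*}(\rho)|_{\rho=0}$ is itself controlled by the same $\de^{j^*}$ smallness once the dipoles are forced into a common block. Collecting, $\abs{\left<\partial\phi(a)\partial\phi(b)\right>^T}\lesssim \de^{j^*}$, and since $\de=\cO(1)\max\{L^{-2},L^{2-\beta/4\pi}\}$ and $\beta>8\pi$, for $L$ large we have $\de \le L^{-2+\epsilon_0'}$ for any small $\epsilon_0'$, hence $\de^{j^*}\le L^{(-2+\epsilon_0')j^*} \approx \abs{a-b}^{-2+\epsilon_0'}$, which is the claimed $\abs{a-b}^{-2(1-\epsilon_0)}$ after relabeling $\epsilon_0$.

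One subtlety I would handle carefully is the leading order: a priori $\partial\phi$ has a nonzero free covariance $\beta v_{M}$, which decays like $\abs{a-b}^{-2}$ already; one must check that the interaction does not destroy this and that the RG machinery reproduces exactly the exponent $2$ (not worse) in the dipole phase. The mechanism is that after the $\sigma$‑tuning in Theorem \ref{pthm} drives $\sigma_j\to 0$, the effective Gaussian at scale $j^*$ still has covariance $\beta v_{M-j^*}$ whose two‑point function between the (rescaled) dipole locations is $\cO(1)$; the correlation at the original scale is then this $\cO(1)$ quantity times the accumulated scaling factors, each of which is $\cO(\de)$ per step relative to the naive rescaling — and the bookkeeping must confirm the net power is $\abs{a-b}^{-2}$ up to the $\epsilon_0$ loss coming from replacing $\de$ by $L^{-2+\epsilon_0'}$.

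The main obstacle I expect is controlling the mixed derivative through the nonlinear $\cE xp$ polymer expansion uniformly in $M$: one must show that no "chain" of medium‑sized polymers linking $a$ to $b$ at successive scales can accumulate a contribution larger than $\de^{j^*}$, i.e. that the geometric suppression of large polymers in $\norm{\cdot}_j$ beats the entropy of such connecting chains. This is exactly the type of estimate that the finite‑range decomposition is designed to make clean — locality forces any scale‑$j$ polymer to have diameter $\lesssim L^{j+1}$, so connecting $a$ and $b$ genuinely requires reaching scale $j^*$ — but carrying it out requires redoing the contraction estimates of Theorem \ref{pthm} while carrying along the two marked points and the $\la$‑derivatives, and keeping the combinatorial factors from the $\cE xp$ expansion under control.
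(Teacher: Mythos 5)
Your overall strategy --- use the pinning property to show that the mixed $\la_1\la_2$-derivative receives no contribution until the RG scale $I$ with $L^I\leq\abs{a-b}<L^{I+1}$, and then sum the surviving contributions over scales $j\geq I$ --- is the same as the paper's (Lemma \ref{firstestimate} and Lemma \ref{07212011}). But there is a genuine quantitative gap where you bound the surviving contribution by $\de^{j^*}$ with $\de=\cO(1)\max\{L^{-2},L^{2-\beta/4\pi}\}$ and then assert $\de\leq L^{-2+\epsilon_0'}$. That inequality fails for $8\pi<\beta<16\pi$: there $2-\beta/4\pi\in(-2,0)$, so $\de=\cO(1)L^{2-\beta/4\pi}$ and your argument only yields $\abs{a-b}^{-(\beta/4\pi-2)(1-\epsilon_0)}$, strictly weaker than the claimed $\abs{a-b}^{-2(1-\epsilon_0)}$. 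The contraction rate $\de$ of the full RG map is simply not good enough to produce the exponent $2$ throughout the range $\beta>8\pi$.

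The missing idea is that the object controlling the mixed derivative is not $K_j(\rho)$ itself but the difference $\de K_j(\rho)=K_j(\rho)-K_j(\rho=0)$ (anything $\rho$-independent is annihilated by $\partial^2/\partial\la_1\partial\la_2$ at $\la=0$), and this difference contracts strictly faster than $\de$. The paper proves $\norm{\de K_j(\rho)}_j\leq L^{-2(1-\epsilon_0)j}\epsilon'$: in the linearized step for the difference, the pinning property restricts the reblocking sum over $\bar Y^L=LX$ to the $\cO(1)$ polymers meeting $\{L^{-j}a,L^{-j}b\}$, which removes exactly the volume factor $L^2$ responsible for the $L^{2-\beta/4\pi}$ in $\de$; the charged sector then contracts like $L^{-\beta/4\pi}\leq L^{-2}$ for all $\beta>8\pi$, and the neutral and large-set sectors like $L^{-2}$. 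Without this sharpened estimate on the difference, the bookkeeping you yourself flag as the main obstacle cannot close. A more minor structural difference: the paper does not stop at a scale $j^*$ and estimate a residual functional integral; it iterates to the final scale $j=M$ where the torus is a single block, so that $\log Z(\rho)$ is read off exactly as $\cE_M(\rho)+\log(1+K_M(\La_0,\rho))$ minus its vacuum counterpart, and then $\abs{\partial^2_{\la_1\la_2}\cE_M(\rho)}\leq\cO(1)\sum_{j=I}^{M}\sup_{\abs{\la_i}=\la_0}\norm{\de K_j(\rho)}$ by a Cauchy bound in $\la$, the geometric series being dominated by its first term $j=I$.
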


This supports the general picture that Coulomb gas behaves like dipole gas over low temperature. Earlier attempts on the field correlation function on this model can be found in \cite{H95} \cite{MK91}.

\section{Renormalization Group}
\subsection{Finite Range Covariance}
\subsubsection{Decomposition Construction} \label{FRC}

Let $g$ be any non-negative translation-invariant  $C^{\infty}$ function with compact support in $\{x\in\bR^2:\abs{x}\leq 1/2\}$. Then if $u=g*g$, then $u$ will be a positive definite  $C^{\infty}$ function with compact support in $\{x:\abs{x}\leq 1\}$.

We define:

\be
C(x-y)=\int_{1}^{L} \frac{1}{l} u\left(\frac{x-y}{l}\right)dl
\ee
for any integer $L\geq 2$. 

Then $C$ is finite range up to $L$, i.e.,
\be C(x-y)=0\quad\mbox{if}\quad \abs{x-y}\geq L\ee
and 
\be
C(0)=\int_{1}^{L} \frac{1}{l} u(0)dl=u(0)\log L=\cO(1)\log L
\ee
Note that this construction is true for any $g$. Later we will choose $g$ such that $u(0)=1/2\pi$.

If we continue to define the multi-scale decomposition for the corresponding scale:
\be
v_{M-j}(x-y)=\sum_{k=0}^{M-j-1}C(L^{-k}(x-y))
\ee
then we have
\be 
v_{M-j}(x-y)=C(x-y)+v_{M-j-1}\left(\frac{x-y}{L}\right)
\ee
Under this notation, we have the finite range decomposition property over different scales: $v_{M-j}(x)=0$ if $\abs{x}\geq L^{j+1}$ for all $j=0,1,2,\dots.$

This multi-scale construction is analogous to \cite{BMS03}\cite{BGM04}. Here our construction is on the continuum torus case. \cite{BGM04} discussed the finite range decomposition on lattice case for three or higher dimensional space. A more comprehensive discussion about the  finite range decomposition of the two dimensional massive/massless covariance can be found in \cite{F10}.

\subsubsection{Consistency}
Now we consider the case in $\bR^2$ temporarily.
We now check that as $M\rightarrow\infty$,
the covariance $v_M$ has the same long distance behavior as  the two dimensional Coulomb potential $v(x-y)=\frac{1}{2\pi}\log \abs{x-y}$ which is the fundamental solution of the Laplacian $-\De$. 
We also need to check that $v_M$ allows the Coulomb gas/sine-Gordon identification.

\begin{lem} \label{dimock0311}
\mbox{}

\begin{enumerate}
	\item Let \be
	v_{0,M}(r)=\int_1^{L^M}u(\frac{r}{l})\frac{dl}{l}\ee
	Then for any $r>0$,
	\be\lim_{M\rightarrow\infty} (v_{0,M}(r)-v_{0,M}(1))=const\cdot\log r+w(r)\ee
	where $w(r)$ vanishes for $r\geq1$.
	\item For $dist (x,\supp \rho)>0$,
\be
\begin{split}
&\lim_{M\rightarrow \infty} \int_{y\in\La_M} v_M(x-y)\rho(y)dy\\
= &\left\{ \begin{array}{ll}
\int_{y\in\La_M}\left( v(x-y)+w(x-y)\right)\rho(y)dy & \textrm{if  $\int \rho=0$}\\
\infty & \textrm{if   $\int \rho\neq 0$}
\end{array} \right.
\end{split}
\ee
where  $w(x-y)=0$ if $\abs{x-y}\geq1$.
\end{enumerate}
 
\end{lem}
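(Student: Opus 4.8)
### Proof proposal

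The plan is to analyze the scalar function $v_{0,M}(r)=\int_1^{L^M}u(r/l)\,dl/l$ directly by a change of variables, then deduce part (2) from part (1) together with the cancellation forced by $\int\rho=0$. For part (1), substitute $l=r s$, so that $v_{0,M}(r)=\int_{1/r}^{L^M/r}u(s)\,ds/s$. Since $u$ is supported in $\{|x|\le 1\}$, the integrand $u(s)/s$ vanishes for $s\ge 1$, so for $M$ large the upper limit $L^M/r$ is irrelevant and $v_{0,M}(r)=\int_{1/r}^{1}u(s)\,ds/s$ once $L^M\ge r$. Now split according to whether $1/r\le 1$ or $1/r>1$. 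If $r\ge 1$ then $v_{0,M}(r)=\int_{1/r}^1 u(s)\,ds/s$; writing $\int_{1/r}^1=\int_0^1-\int_0^{1/r}$ does not quite work because $u(s)/s$ is not integrable near $0$ (indeed $u(0)\ne 0$), so instead I compare with $v_{0,M}(1)=\int_{1}^{L^M}u(1/l)\,dl/l=\int_{1/L^M}^{1}u(s)\,ds/s\to\int_0^1 u(s)\,ds/s$... which again diverges. The correct move is to form the difference first: $v_{0,M}(r)-v_{0,M}(1)=\int_{1}^{L^M}\bigl[u(r/l)-u(1/l)\bigr]\,dl/l$. For $l$ large both arguments are small, $u(r/l)-u(1/l)=O((r-1)/l)$ by smoothness of $u$, so the integrand is $O(l^{-2})$ and the integral converges absolutely as $M\to\infty$. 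Thus the limit exists; call it $F(r)$.

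To extract the $\log r$ behavior, compute $F'(r)$ for $r>1$ (differentiation under the integral is justified by the $O(l^{-2})$ bound): $F'(r)=\int_1^\infty \frac{1}{l}\cdot\frac{1}{l}u'(r/l)\,dl=\int_1^\infty u'(r/l)\,\frac{dl}{l^2}$. Substituting $t=r/l$ gives $F'(r)=\frac1r\int_0^{r}u'(t)\,dt=\frac1r\bigl(u(r)-u(0)\bigr)$. For $r\ge 1$ we have $u(r)=0$, hence $F'(r)=-u(0)/r$, so $F(r)=-u(0)\log r+\text{const}$ for $r\ge 1$; with the normalization $u(0)=1/2\pi$ this is exactly $-\frac{1}{2\pi}\log r+\text{const}$, matching the sign convention for $v=\frac{1}{2\pi}\log|x|$ up to the overall orientation. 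For $0<r<1$ the same computation gives $F'(r)=\frac1r(u(r)-u(0))$, so $F(r)=-u(0)\log r+\int_?^r\frac{u(t)-u(0)}{t}\,dt+\text{const}$; the remainder $w(r):=F(r)-(\text{const}\cdot\log r)$ is then a smooth function supported in $r<1$ (it is constant for $r\ge1$ by the above, and we absorb that constant), which is precisely the claimed short-distance modification.

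For part (2), write $\int_{\Lambda_M}v_M(x-y)\rho(y)\,dy$. Since $\mathrm{dist}(x,\supp\rho)>0$ and $\supp\rho$ is compact, for $|x-y|$ bounded below we are in the regime where the radial function $v_{0,M}$ governs $v_M$ up to the periodization, and on the torus $\Lambda_M$ with $M$ large the periodic images are at distance $\ge cL^M\to\infty$ and contribute negligibly (each image contributes $v_{0,M}$ evaluated at a huge radius, which is $O(1)$ and ultimately constant, and these constants are killed below). The key identity is $v_M(x-y)=v_{0,M}(|x-y|)-v_{0,M}(1)+v_{0,M}(1)+(\text{periodic corrections})$. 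When $\int\rho=0$ the constant $v_{0,M}(1)$ and all the $M$-dependent additive constants integrate against $\rho$ to zero, so only $F(|x-y|)=\int(v(x-y)+w(x-y))\rho(y)\,dy$ survives in the limit; convergence is uniform by the $O(l^{-2})$ tail bound from part (1), uniformly for $y\in\supp\rho$ since $|x-y|$ is bounded away from $0$. When $\int\rho\ne 0$, the divergent additive constant $v_{0,M}(1)\sim -u(0)\log(1/L^M)\cdot$ ... more precisely $v_{0,M}(1)=\int_{1/L^M}^1 u(s)\,ds/s$, whose divergence as $M\to\infty$ is $u(0)\log L^M + O(1)=\cO(1)M\to\infty$, is multiplied by $\int\rho\ne 0$ and is not cancelled, giving $+\infty$ (the sign being fixed by $u\ge0$, so the constant is $+\infty$, consistent with the statement).

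The main obstacle I anticipate is bookkeeping the additive constants and the periodization on the torus carefully enough to see that (a) on $\bR^2$ the limit is genuinely $v+w$ with $w$ supported in $\{|x|<1\}$ and not merely $v$ plus some unspecified bounded function, and (b) on $\Lambda_M$ the periodic images plus the growing constant $v_{0,M}(1)$ conspire correctly so that the $\int\rho=0$ case converges while the $\int\rho\ne0$ case diverges to $+\infty$ rather than $-\infty$ or oscillating. The analytic estimates themselves — differentiating under the integral sign, the $O(l^{-2})$ tail bound, the substitution $t=r/l$ — are routine once the difference $v_{0,M}(r)-v_{0,M}(1)$ is formed at the outset to tame the logarithmic divergence.
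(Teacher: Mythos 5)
Your proposal is correct and follows essentially the same route as the paper: both form the difference $v_{0,M}(r)-v_{0,M}(1)$, differentiate in $r$, use the substitution $t=r/l$ to obtain the $\tfrac{1}{r}(u(r)-u(0))$ law yielding $\mathrm{const}\cdot\log r$ plus a correction supported in $r<1$, and then in part (2) use neutrality to cancel the divergent constant $v_{0,M}(1)\sim \tfrac{M}{2\pi}\log L$. (Your initial change of variables $l=rs$ is transcribed with the wrong limits, but you abandon that route immediately, so nothing is affected.)
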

\begin{proof}
\mbox{}

\begin{enumerate}
\item 
For $r>0$, we write  
\be \label{04192011b}
v_{0,M}(r)=\int_1^{L^M}u(\frac{r}{l})\frac{dl}{l}=\int_0^{L^M}u(\frac{r}{l})\frac{dl}{l}-\int_0^{1}u(\frac{r}{l})\frac{dl}{l}
\ee
Denote the two above integrals as 
\be \label{04192011}
\tilde{ v}_{0,M}(r)=\int_0^{L^M}u(\frac{r}{l})\frac{dl}{l}
\qquad \mbox{ and }\qquad w_0(r)=\int_0^{1}u(\frac{r}{l})\frac{dl}{l}
\ee

For the term $w_0(r)$, note that $\frac{r}{l}>r\geq 1$ for $l<1$, so $u(\frac{r}{l})=0$ because of the support of $u$, which implies $w_0(r)$ vanishes when $r\geq 1$.

Now for the term $\tilde{ v}_{0,M}(r)$, its derivative
\be
\tilde{v}'_{0,M}(r)=\int_0^{L^M}u'(\frac{r}{l})\frac{dl}{l^2}
\ee
is convergent as $M\rightarrow \infty$, i.e, 
\be
\lim_{M\rightarrow \infty} \tilde{v}'_{0,M}(r)=\int_0^{\infty} u'(\frac{r}{l})\frac{dl}{l^2}=\frac{1}{r}\int_0^{\infty}u'(\frac{r}{l_1})\frac{dl_1}{l_1^2}=const\cdot \frac{1}{r}
\ee
And similarly $\abs{\tilde{v}'_{0,M}(r)}\leq const\cdot 1/r$ for any $M$. Then  by Dominated Convergence Theorem,
\be
\lim_{M\rightarrow \infty} (\tilde{v}_{0,M}(r)-\tilde{v}_{0,M}(1))=\int_1^r \lim_{M\rightarrow \infty} \tilde{v}'_{0,M}(s)ds=const\cdot\log r
\ee
The result follows from combining this with $w_0(r)$.
\item First note that $v_M(x-y)$ is a function of $\abs{x-y}$, i.e, $v_M(x-y)=v_{0,M}(\abs{x-y})$. Similarly $\tilde{v}_{M}(x-y)=\tilde{v}_{0,M}(\abs{x-y})$ and $w(x-y)=w_0(\abs{x-y})$. So for $dist (x,\supp \rho)\geq 0$, if $\int \rho=0$, then for the term $\tilde{v_M}$, we have
\begin{align} \label{04022011a}
\lim_{M\rightarrow \infty}\int \tilde{v}_M(x-y)\rho(y)dy 
&=\lim_{M\rightarrow \infty}\int (\tilde{v}_M(x-y)-\tilde{v}_M(1))\rho(y)dy \\
&=\int const\cdot\log(x-y)\rho(y)dy\\
&=\int v(x-y)\rho(y)dy
\end{align}
Here $v(x-y)=const\cdot\log (x-y)$ is the fundamental solution of the two dimensional Laplacian $-\De$. 

Then for the term $w$, note that  $w(r)$ is always  independent of $M$ whenever $M\rightarrow \infty$ or not. Therefore the limit $ \lim_{M\rightarrow \infty}\int w(x-y)\rho(y)dy$ exists. 

The claimed limit follows from combining this with \ref{04022011a}. 

If $\int\rho\neq 0$, then by (\ref{04192011b}):
\be
\begin{split}
&\int_{y\in\La_M}v_M(x-y)\rho(y)dy\\
=&\int_{y\in\La_M}(v_M(x-y)-v_M(1))\rho(y)dy+\int_{y\in\La_M}v_{0,M}(1)\rho(y)dy
\end{split}
\ee
As $M\rightarrow \infty$, $\int_{y\in\La_M}(v_M(x-y)-v_M(1))dy$ converges. 
Now by $u(0)=1/2\pi$,  \be \begin{split}
&\lim_{M\rightarrow \infty}\int_1^{L^M}u(\frac{1}{l})\frac{dl}{l}\\
=&\lim_{M\rightarrow \infty}\Big[\int_1^{L^M}(u(\frac{1}{l})-\frac{1}{2\pi})\frac{dl}{l}+\int_1^{L^M}\frac{1}{2\pi}\frac{dl}{l}\Big]\\
\geq &\lim_{M\rightarrow \infty}\Big[const\int_1^{L^M}\frac{dl}{l^2}+\frac{M}{2\pi}\log L \Big]\\
=&+\infty
\end{split}\ee
where $const=\min\limits_{0\leq t\leq 1/l} u'(t)$ is finite for $l\leq 1$ since $u$ is smooth. 
Hence if $\int\rho(y)dy>0$, we have
\be
\begin{split}
\lim_{M\rightarrow \infty}\int_{y\in\La_M}v_M(1)\rho(y)dy&=\lim_{M\rightarrow \infty}v_{0,M}(1)\int_{y\in\La_M}\rho(y)dy\\
&=\lim_{M\rightarrow \infty}\int_1^{L^M}u(\frac{1}{l})\frac{dl}{l}\int_{y\in\La_M}\rho(y)dy\\
&\geq\infty
\end{split}
\ee
which implies the divergence\be\lim_{M\rightarrow \infty}\int_{y\in\La_M}v_M(x-y)\rho(y)dy =\infty\ee
Similarly when $\int \rho<0$.
\end{enumerate}
\end{proof}
Here the point is the sequence $v_M$ itself does not converge to $v$. The approach only holds for $\rho$ under neutrality condition of $\rho$. Also the behavior of $w$ is not important since we are investigating the long distance behavior of the system. In other words, we are taking a short distance(ultraviolet) cutoff of the covariance.

To see the Coulomb gas/sine-Gordon connection, we start with the  partition function:
\begin{align}
Z(\La_M)=&\int \exp\left(z\int_{\La_M}\ \cos(\f(x))dx\right)\ d\mu_{
v{M}}(\phi) \\
=& \sum_{n=0}^{\infty} \frac{z^n}{n!}\sum_{\vec{e}} \int \exp\left(\sum_{i,j=1}^n -e_ie_jv_M(x_i-x_j) \right)dx_1\cdots dx_n\\
=& \sum_{n=0}^{\infty} \frac{z^n}{n!}\sum_{\vec{e}} \int \exp\left(-(\rho,v_M\rho) \right)dx_1\cdots dx_n
\end{align}
where 
$
\rho(x)=\sum_i e_i\de_{x_i}
$ and the summation $\sum\limits_{\vec{e} } $ denotes the summation over all unit charges $(\vec{x},\vec{e})=(x_1,e_1;\dots;x_n,e_n)$ where each $x_i$ is associated with a charge $e_i$, $1\leq i\leq n$. 
and the inner product is defined as
\be
(\rho,C\rho)=\sum_{x,y}\rho(x)C(x-y)\rho(y)=\iint \rho(x)C(x-y)\rho(y) dxdy
\ee
 for any covariance $C$.

The above argument is the rigorous statement, now we consider to take its formal volume limit  as $M\rightarrow \infty$. According to Lemma \ref{dimock0311},
\begin{align} \label{05192011a}
&\lim_{M\rightarrow \infty} \int \exp\left(z\int_{\La_M}\ \cos(\f(x))dx\right)\ d\mu_{
v{M}}(\phi)\\
=& \lim_{M\rightarrow \infty} \sum_{n=0}^{\infty} \frac{z^n}{n!}\sum_{\vec{e} \atop \mbox{neutral}} \int \exp\left(-(\rho,v_M\rho) \right)dx_1\cdots dx_n
\end{align}
Here the $\sum\limits_{\vec{e} \atop \mbox{neutral}} $ is the summation over all unit charges such that their summation is $0$, which is neutrality condition. The contributions from the non-zero summations (charged sectors) is $0$ since $(\rho,v_M\rho)\rightarrow \infty$ by Lemma \ref{dimock0311}.
In other words, eventually we have
\be
\begin{split}
&\lim_{M\rightarrow \infty}\int \exp\left(z\int_{\La_M}\ \cos(\f(x))dx\right)\ d\mu_{
v{M}}(\phi) \\ =&  \sum_{n=0}^{\infty} \frac{z^n}{n!}\sum_{\vec{e}\atop \mbox{neutral}} \exp\left(-\frac{1}{2}\sum_{i,j}e_ie_jv(x_i-x_j)\right) \label{05192011b}
\end{split}
\ee

And the  expression in \ref{05192011a} and \ref{05192011b} is just the formal limit in $\bR^2$. 

Now for the for the torus case $\La_M$,   we take as an approximate covariance
\be  v_M(x-y)=v_{0,M}(\abs{x-y}) \ee
where $\abs{x-y}$ indicates the metric on torus $\La_M$,
With this covariance, our model formally has the correct infinite volume limit as $m\rightarrow \infty$.

\subsection{Norm} \label{NormSection}
To control the growth of the polymer activities during the RG transform, let us specify the norms first. In \cite{DH00}, the polymer activities are smooth functions on fields $\phi$ only, but here we are still using the similar norm for the $\rho-$dependent polymer activities. Similar treatments can be found in \cite{DH92} and \cite{BK94}.

For fields $\phi$ defined on (closed) polymer $X$, we define a real separable Hilbert space $H(X)$ as following. For unit block $\Delta$, we consider the Sobolev space $W_s(\mathring{\De})$ on its interior $\mathring{\De}$ with the norm
\be
\norm{\phi}_{W^s(\mathring{\De})}=\left(\sum_{\abs{\alpha}\leq s} \int_{\mathring{\De}}\abs{\partial^{\alpha}\phi(x)}dx\right)^{1/2}
\ee
We also consider the real Banach space $C^r(\De)$  of $C^r$ functions $\phi:\De\rightarrow\bf{R} $ which are also $r-$th continuously differentiable on all the closed unit block $\De$, with the norm defined as
\be
\norm{\phi}_{C^r(\De)}=\sup_{x\in\De}\max_{\abs{\alpha}\leq r}\abs{\partial^{\alpha}\phi(x)}
\ee
Then by the Sobolev embedding theorem, there is a constant $C$ independent of the choice of unit block $\De$ such that
\be
\norm{\phi}_{C^r(\De)} \leq C\norm{\phi}_{W^s(\mathring{\De})}
\ee
with $s>r+1$.

Now let $\tilde{H}(X)$  be the finite direct sum of the Hilbert spaces $W^s(\mathring{\De})$ for $\De\subset X$. Then we define $H(X)$ to be the subspace of $\tilde{H}(X)$ induced by the following condition: $\phi\in H(X)$ if for any neighboring unit blocks $\De_1,\De_2\subset X$, the  $C^r$ images by the Sobolev embedding of $\phi_{\De_1}$ and   $\phi_{\De_2}$ match as well as the derivatives on the common boundary component $\De_1\cup\De_2$. By the embedding theorem, $H(X)$ is a real Hilbert space with the norm
\be
\norm{\phi}_{H(X)}=\left(     \sum_{\De\subset X}    \sum_{\abs{\alpha}\leq s} \int_{\mathring{\De}}\abs{\partial^{\alpha}\phi(x)}dx        \right)^{1/2}
\ee 
Also let $C^r(X)$ be the real Banach space as above with the norm
\be
\norm{\phi}_{C^r(X)}=\sup_{x\in X}\max_{\abs{\alpha}\leq r}\abs{\partial^{\alpha}\phi(x)}
\ee
Then for $s>r+1$, we have an embedding $H(X)\hookrightarrow C^r(X)$ and
\be
\norm{\phi}_{C^r(X)} \leq C\norm{\phi}_{H(X)}
\ee
where the constant $C$ is independent of $X$.

The construction here is essentially the same as the norm in \cite{DH00}, but with a sightly modification by \cite{A07} for the completeness of the normed space of polymer activities. Also it preserves all the estimates in \cite{BDH98} \cite{DH00}.

We assume that $K(X,\phi,\rho)$ is a smooth function of $\phi\in H(X)$ and analytic within a small ball in $\rho$. Here $K(X,\phi)=K(X,\phi,\rho=0)$ for vacuum case.
\begin{enumerate}

\item For $n=0,1,2,\dots$, the $n-$th derivative with respect to $\phi$ is a multi-linear functional on $f_i\in H(X)$ and evaluated as:
\be
\begin{split}
&K_n(X,\phi,\rho;f_1,\dots,f_n)\\ =& \frac{\de^n}{\de t_1\dots\de t_n}   K(X,\phi+t_1f_1+\dots+t_nf_n,\rho) \Big| _{t=0}
\end{split}
\ee
Then we define 
\be
\label{varnorm}
\norm{K_n(X,\phi,\rho)} =  \sup_{f_i \in H(X) \atop \|f_i\|_{C^r(X)} \leq 1} |K_n(X,\phi,\rho;f_1,...,f_n) | 
\ee
Note that this norm is stronger than the norm on $H(X)$.

\item Next, for a further parameter $h>0$ to be specified, we define:
\be
\norm{K(X,\phi,\rho)}_{h}=\sum_{n=0}^{\infty} \frac{h^n}{n!}\norm{K_n(X,\phi,\rho)}
\ee

\item For large field regulator, which is a functional of $\phi$ of the form:
\be
 G( \kappa,X, \phi ) =
 G'(\kappa, X, \phi ) \de G(\kappa, \pa X, \phi )   \label{gee}
\ee
where
\begin{align}   G'(\kappa, X, \phi )&=\exp\left( \kappa\norm{\phi}^2_{H(X)}   \right)\\
 \de G(\kappa, \pa X, \phi )  
&=   \exp  \left( \kappa c  \sum_{ | \al |=1}\int_{\pa X}  |\pa^{\al} \phi |^2  \right)
\end{align}
with constants $\kappa, c \leq 1$  to be specified.
we define:
\be
\norm{K_n(X,\rho)}_{h,G}=\sup_{\phi\in H(X)}\left(\norm{K_n(X,\phi,\rho)}_h G(X,\phi)^{-1}\right)
\ee

\item Finally,  for a large set regulator $\Gamma(X)$ of the form
$
\Gamma(X)=A^{|X|}
$ 
 where $A\geq 1$ , we define:
\be
\norm{K(\rho)}_{h,G,\Gamma}=\sum_{X\supset\Delta}\Gamma(X)\norm{K(X,\rho)}_{h,G}
\ee
the summation is over all polymers containing unit block $\Delta$.

Also, we define the corresponding norm associated with the regulator  $\Gamma_p(X)=2^{p\abs{X}}\Gamma(X)$ for any $p=\pm 1,\pm 2,\dots.$
\end{enumerate}

Note that for the norms defined above, we have the following estimate:
\be \label{normproperty}
\norm{K(X)K(Y)}_{h,G(X)G(Y)} \leq \norm{K(X)}_{h,G(X)} \cdot \norm{K(Y)}_{h,G(Y)} 
\ee
for any polymers $X$ and $Y$.
The proof of this property can be found in \cite{BDH98}.

The important thing here is that the space of all the smooth polymer activities  on $X$ with the norm defined above is a Banach space. 

\subsection{Renormalization Group Transformation} \label{fundmentaltransform}
In this section, we will set up the R.G. flow $(V_j,K_j)\rightarrow(V_{j+1},K_{j+1})$, where
\begin{align}
&\int \cE xp\left(\Box e^{-V_j}+K_j\right)\left(\La_{M-j},\phi_L+\zeta\right)d\mu_C(\zeta)\\
=&\exp\left(\sum_{X\subset\La_{M-j-1}} F_j(X)\right)\cdot \cE xp\left(\Box e^{-V_{j+1}}+K_{j+1}\right)\left(\La_{M-j-1},\phi\right)
\end{align}
This comes in three steps: fluctuation, extraction and scaling steps as in \cite{DH00}. Here we are using finite range decomposition of the covariance $C$, therefore it's convenient to follow the treatments in \cite{BMS03}. Without confusion, we write $V=V_j$, $\La=\La_{M-j}$ and $K=K_j$ as the beginning.

Now we introduce some terminologies in R.G. analysis first. A block is an unit block in $\La_{M-j}$, and a polymer, usually denoted as $X$, is the union of some blocks. The size of $X$ is the number of blocks in $X$, denoted as $\abs{X}$. The polymer $X$ is small if $X$ is connected and $\abs{X}\leq 2^d$ ($d=2$ in this present paper ). We say $X$ is large if it is not small.

\subsubsection{Fluctuation}

We start with the expression of the integrand:
\be \label{Expdef}
\cE xp\left(\Box e^{-V}+K\right)=\sum_{\{X_i\}}e^{-V(\La \setminus \cup_{i} X_i)} \prod_{i} K(X_i)
\ee
where the summation is over all the collections of disjoint polymers $X_i$ in $X$. 

Write $X=\cup X_i$, $\La \setminus \cup_{i} X_i=X^c$ and the closure of $X^c= \overline{X^c}$. Then one can write
\be
e^{-V(X^c,\phi)} =\prod\limits_{\Delta\subset \overline{X^c}} e^{-V(\Delta,\phi)}
\ee
i.e., $e^{-V}$ has factorization property.

Define the polymer activity $P$ as
\be
P(\Delta,\zeta,\phi)=e^{-V(\Delta,\phi+\zeta)}-e^{-V(\Delta,\phi)}
\ee
Note that here $V$ depends on $\phi$ only, where $\zeta$ is the integration variable in the fluctuation step.
Now
\be
e^{-V(X^c,\phi+\zeta)}=\prod_{\Delta\subset\overline{X^c}} \left( e^{-V(\Delta,\phi)}+P(\Delta,\phi,\zeta)\right)
\ee
By expanding the product and substituting into (\ref{Expdef}), we have:
\be
\begin{split}
&\cE xp\left(\Box e^{-V}+K\right)(\La)\\
=&\sum_{\{X_i\},\{\Delta_j\}} e^{-V(\La \setminus(\cup X_i)\cup(\cup \Delta_j))} \prod_{i} K(X_i) \prod_j P(\Delta_j)
\end{split}
\ee

Define the $L-$polymers $Y$ to be:
\be
Y=\overline{(\cup X_i)\cup(\cup \Delta_j)}^L
\ee
and $Y_1,\dots,Y_P$ be the connected subsets of $Y$. Note that here we group $X_i$ and $\De_j$ into connected components of $Y$. Then 
\be \label{ExpRdef}
\cE xp\left(\Box e^{-V}+K\right)(\phi+\zeta,\La)=\cE xp_L \left(\Box e^{-V}+\cB K\right)(\phi,\zeta,\La)
\ee
where
\be \label{reblockformula}
\cB K(Y,\phi,\zeta) =\sum_{\{X_i\},\{\Delta_j\}\rightarrow Y}  e^{-V(X_0)} \prod_{i} K(X_i,\phi+\zeta) \prod_j P(\Delta_j,\phi,\zeta)
\ee
where \[X_0=Y\setminus (\cup X_i)\cup(\cup \De_j)\] and the summation is defined as
\be
\sum_{\{X_i\},\{\Delta_j\}\rightarrow \{Y\}} =\sum_{M,N}\frac{1}{N!M!}\sum_{(X_1,\cdots,X_N)\atop (\De_1,\cdots,\De_M)}
\ee over the maps from $\{X_i\},\{\Delta_j\}$ into $\{Y_1,\dots,Y_p\}$ . As in \cite{BMS03}, the expression (\ref{ExpRdef}) is a sum over the products of polymer activities $\cB K$ where the closed disjoint polymers are separated by a distance larger than $L$. This will give us the advantage to utilize  the finite range property of the covariance $C$ in the fluctuation step.

Finally, after applying fluctuation integral, the fluctuation integral factorizes by the range of the covariance  $C$ as \textbf{Fluctuation Formula}:
\be \label{flucformula}
\begin{split}
&\int \cE xp\left(\Box e^{-V}+K\right)\left(\La,\phi+\zeta\right)d\mu_C(\zeta) \\
=& \cE xp_L\left(\Box_L e^{-V}+ \cB K^{\#}  \right)\left(\La,\phi\right)
\end{split}
\ee
where
\begin{itemize}
\item $\cE xp_L$ indicates that the expansion is over $L-$ polymers.
\item $\Box_L$ is the characteristic function on ``$L-$unit'' blocks.
\item The integration $\#$ in the expression of $\cB K$ denotes the integration with the measure $d \mu_C(\zeta)$.
\end{itemize}

\subsubsection{Scaling}
To continue the R.G. analysis on the next scale, we define the new polymers such that
\be \label{scaleformula}
\cE xp_L(\Box_L e^{-V}+K)(L\La,\phi_L)=\cE xp(\Box e^{-V_{L^{-1}}}+\cS K)(\La,\phi)
\ee
where the rescaled field $\phi_L(x)=\phi(x/L)$. 

It is straightforward that  one can choose:
\be
V_{L^{-1}}(\Delta,\phi)=V(L\Delta,\phi_L)
\ee
and
\be \label{scaleformula00}
(\cS K)(L^{-1}X,\phi)=K(X,\phi_L)
\ee
for $L$ polymers $X$.

The formula (\ref{scaleformula}) is the \textbf{Rescaling Formula}. For us, it turns out that $V(L\Delta,\phi_L)=V(\De,\phi)$.
Now if we combine the  fluctuation and rescaling steps now, we will have:
\be
\int \cE xp(\Box e^{-V}+K)(\La,\phi_L+\zeta) d\mu_C(\zeta)=\cE xp(\Box e^{-V_{L^{-1}}}+(\cS \cB K)^{\natural})(L^{-1}\La,\phi)
\ee
Here $\natural$ denotes the integration with the measure $d\mu_{C_{L^{-1}}}(\zeta)$, where the scaled covariance is defined as $C_{L^{-1}}(x-y)=C(L(x-y))$. And later for simpler notations, we also denote $\cF K=(\cS\cB K)^{\natural}$.

Then for ordinary $1-$polymer $Z$,
\be \label{SBformula}
\cS\cB K(Z,\phi,\zeta)=\sum_{\{X_i\},\{\Delta_j\}\rightarrow LZ }  e^{-V(X_0,\phi_L)} \prod_{i} K(X_i,\phi_L+\zeta_L) \prod_j P(\Delta_j,\phi_L,\zeta_L)
\ee
where \[X_0=LZ\setminus  (\cup X_i)\cup(\cup \De_j)\]

\subsubsection{Extraction} \label{extraction}

To isolate the fast-growth  terms of $K$, we need extraction formula, i.e., seeking for some $F$ 
\be \label{extraformula0}
\cE xp(\Box e^{-V}+K)(\La)=\cE xp(\Box e^{-V(F)}+\cE(K,F))(\La)
\ee

We assume $F$ has the form
\be
F(X,\phi)=\sum_{\Delta\subset X} F(X,\Delta,\phi)
\ee
where $F(X,\Delta,\phi)$ depends on the $\phi$ in $\De$ only, i.e., we are assuming that $F$ has localization property. 
And for any fixed unit block $\Delta$,
\be
V_F(\Delta)= \sum_{X\supset \Delta} F(X,\Delta)
\ee
We will specify the $F(X,\Delta)$ with some special forms later. For polymer $X$, $V_F(X)=\sum_{\De\subset X} V_F(\De)$.

Therefore in (\ref{extraformula0}), if we denote $V(F)$ as $V'$, then we can choose
\be
V'_F(\De)=V(\De)-V_F(\De)
\ee
Then the corresponding $\cE(K,F)$ is determined by this chosen $V'$. More precisely, the formula of $\cE(K,F)$ is given by:
\be \label{extraformula00}
\begin{split}
&\cE(K,F)(W)\\
=&\sum_{\{Z_j\}\{Y_k\}\rightarrow W} e^{-V'(W\setminus Y)}\prod_j\left( e^{-F(Z_j,Z_j\cap \overline{Y^c})}-1\right)\prod_k\tilde{K}(Y_k)
\end{split}
\ee
where $Y=\cup_k Y_k$, and the summation is over the collections of disjoint polymers {$Z_j$} and collections of distinct {$Y_k$} such that  each $Z_j$ intersects both $Y$ and  $Y^c$, $\{Z_j\}$ and $\{Y_k\}$ are overlap connected and $W=(\cup Z_j)\cup(\cup Y_k)$, and
\be
F(Z,Z\cap \overline{Y^c})=\sum_{\Delta\subset Z\cap \overline{Y^c}} F(Z,\Delta)
\ee
\be
\tilde{K}(X)=K(X)-e^{-V(X)}\left(e^F-1\right)^+(X)
\ee
where the $\cdot^+$ operation is defined as
\be
J^+(X)=\sum_{\{X_i\}\rightarrow X}\prod_i J(X_i)
\ee
where the summation is over all the distinct polymers {$X_i$} such that $\cup X_i=X$.

And the linear part of $\cE(K,F)$ in $K$ and $F$ has the form:
\be
\cE_1(K,F)=K-Fe^{-V}
\ee

Furthermore, to continue the R.G. transformations, we need to keep both $V_j$ and $K_j$ to be sufficiently small for any $j$. Therefore we need to isolate some terms from the circle product $\cE xp$, i.e., seeking  
\be \label{extraformula}
\cE xp(\Box e^{-V}+K)(\La)=e^{\sum_{X\subset \La} F_0(X)}\cE xp(\Box e^{-V''}+\cE(K,F_0,F_1))(\La)
\ee

Here $F=F_0+F_1$ and we assume  both $F_0$ and $F_1$ has localization property. And now with chosen $F_0$ and $F_1$, 
we define
\be
W(X)=\prod_{\De\subset X}e^{\sum_{Y\supset\De}F_0(Y,\De)}
\ee
and $W(0)=1$. Then with $F=F_0+F_1$, the formula (\ref{extraformula0}) becomes:
\begin{align}
\cE xp(\Box e^{-V}+K)(\La)&=\cE xp(\Box e^{-V(F)}+\cE(K,F))(\La)\\
&=W(\La)\cE xp(\Box W^{-1}e^{-V(F)}+W^{-1}\cE(K,F))(\La)
\end{align}
Hence we can define
\be
 e^{-V''}(X)=(W^{-1}e^{-V(F)})(X)
\ee
and
\be
\cE(K,F_0,F_1)(X)=(W^{-1}\cE(K,F_0+F_1))(X)
\ee
Here the direct computation shows that $e^{-V''}$ is $F_0$ independent and
\begin{align} \label{05262011}
e^{-V''}(X)=\left(W^{-1}e^{-V(F)}\right)(X)
= e^{-V(F_1)}(X)
\end{align}
Then (\ref{extraformula}) is the desired \textbf{Extraction Formula}.
And the original treatments of such extraction made here can be found in \cite{BDH98}\cite{BMS03} etc. 

\subsection{General Estimates}
We have constructed the convergent formulas during the first three sections. More precisely, if we  combine the fluctuation \ref{flucformula}, rescaling \ref{scaleformula} and extraction \ref{extraformula} steps together, the R.G transform will be:
\begin{align}
&\int \cE xp\left(\Box e^{-V_j}+K_j\right)\left(\La_{M-j},\phi_L+\zeta\right)d\mu_C(\zeta)\\
=&\exp\left(\sum_{X\subset\La_{M-j-1}} F_{0,j}(X)\right)\cdot \cE xp\left(\Box e^{-V_{j+1}}+K_{j+1}\right)\left(\La_{M-j-1},\phi\right)
\end{align}
where \be K_{j+1}=\cE((\cS\cB K_j)^{\natural},F_{j,0},F_{j,1}) \ee

Now we also need to track the growth of polymer activities under those transformations. First let us specify some assumptions on the renormalization group process. And we will make some reasonable assumptions as the necessary  conditions of the renormalization group transformations:
\begin{enumerate}
	\item $\norm{K}_{h,G,\Gamma}$ is sufficiently small;
	\item The constants in the regulator satisfy 
	$\kappa c^{-1} L^{2s-2}$ is sufficiently small, $\sigma/\kappa$ and $\sigma/\de\kappa$ are sufficiently small; and $\cO(1)\leq \kappa h^2\leq \cO(1)$; and $2\de \kappa\leq\kappa$;
	\item The polymer activity $V(\De,\phi)$ satisfies
	\be
	\norm{e^{-V(\De)}}_{h,G(\de\kappa)}\leq 2 \label{06062011}
	\ee
	\item For complex $z(X)$, the extraction $F$ satisfies:
	\begin{align} \label{stabilitycondition}
	&\sup_{\abs{z(X)}\de f(X)\leq 1} \norm{ \exp \left(-V(\Delta)-\sum_{X\supset \De} z(X)F(X,\De)\right)}_{h,G} \leq 4
	\end{align}
	where $\de f(X)$ is a
	constant such that $\norm{\de f}_{\Gamma_{p}}$ is sufficiently small. This condition is analogous to the stability condition in \cite{BDH98}, which is also similar to the condition in \cite{DH00} without background potential $V$.
\end{enumerate}

To control the norms of the polymer activities in intermediate steps, we introduce some new regulators.
Let us define the regulators $G_{L^{-1}}$ as:
\begin{align}
&G_{L^{-1}}(\kappa,X,\phi)\\=& G(\kappa, LX, \phi_{L}) \\
=&\exp\left( \sum_{\De\subset X}\kappa\sum_{1\leq\abs{\alpha}\leq s} L^{-2\abs{\alpha}+2} \int_{\mathring{\De}}\abs{\partial^{\alpha}\phi}^2 +\kappa c\sum_{\abs{\alpha}=1} L\int_{\partial X}     \abs{\partial^{\alpha}\phi}^2               \right)
\end{align}
And  define the  intermediate large field regulator
\be
\hat{G}_{}(\kappa,\delta\kappa,X,\phi,\zeta)=G_{}(\kappa,X,\phi+\zeta)G_{}(\delta\kappa,X,\phi)G_{}(\delta\kappa,X,\zeta)
\ee
The norm associated with this regulator is for the $\zeta$ dependent polymer activities before the fluctuation step.

\begin{lem} \label{rges}For any $p,q>0$, under the  conditions as stated above, we have 
\begin{enumerate}
\item Bound on Extraction:
\begin{align}
\norm{\cE (K,F)}_{h,G,\Gamma_p}&\leq\cO(1)\Big( \norm{K}_{h,G,\Gamma_{p+2}}+\norm{\de f}_{\Gamma_{p+4}}\Big)\\
\norm{\cE (K,F_0,F_1)}_{h,G,\Gamma_p}&\leq\cO(1) \left(\norm{K}_{h,G,\Gamma_{p+3}}+\norm{\de f}_{\Gamma_{p+5}}\right)
\end{align}
\item Bound on Rescaling and Reblocking:
\be
\norm{\cS \cB K}_{h,\hat{G}_{L^{-1}},\Gamma_p}\leq \cO(1)L^2\Big(\norm{K}_{h,G,\Gamma_{p-q+3}}+\abs{\sigma/\delta\kappa}\Big)
\ee
\item Bound on Gaussian integration:
\begin{align}
\norm{K^{\natural}}_{h,G(\kappa+\delta\kappa),\Gamma_{p}} &\leq \norm{K}_{h,\hat{G}_{L^{-1}}(\kappa),\Gamma_{p+1}}\\
\norm{K^{\#} }_{h,G_L(\kappa+\delta\kappa),\Gamma_{p}} &\leq \norm{K}_{h,\hat{G}(\kappa),\Gamma_{p+1}}
\end{align}
\end{enumerate}
\end{lem}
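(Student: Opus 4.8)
The plan is to prove the three bounds separately, in each case reducing to the explicit combinatorial formulas of Section~\ref{fundmentaltransform} together with the multiplicative property (\ref{normproperty}) of the norm and the regulator/stability conditions (1)--(4). For the \emph{extraction} bound I would start from the closed formula (\ref{extraformula00}), which writes $\cE(K,F)(W)$ as a sum over overlap-connected families $\{Z_j\},\{Y_k\}$ with $W=(\cup Z_j)\cup(\cup Y_k)$ of products of factors $e^{-V'(W\setminus Y)}$, $(e^{-F(Z_j,Z_j\cap\overline{Y^c})}-1)$ and $\tilde K(Y_k)$. First I would bound each type of factor: the $e^{-V'}$ and $(e^{-F}-1)$ factors are controlled by the stability condition (\ref{stabilitycondition}) (after a Cauchy estimate in the complex parameter $z(X)$ to extract a factor $\delta f(X)$ from each $e^{-F}-1$), while $\tilde K=K-e^{-V}(e^F-1)^+$ is bounded by $\norm{K}$ and $\norm{\delta f}$ using (\ref{06062011}) and the $(\cdot)^+$ expansion. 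Then I would apply (\ref{normproperty}) to factor the norm of the product over the polymers involved, and absorb the sum over overlap-connected families using the extra volume factors $2^{2\abs{X}}$, resp.\ $2^{3\abs{X}}$, supplied by passing from $\Gamma_p$ to $\Gamma_{p+2}$, resp.\ $\Gamma_{p+3}$, on the right-hand side; these geometric factors dominate the standard connectedness combinatorics, giving the $\cO(1)$ constant. The second inequality follows from the first via $\cE(K,F_0,F_1)=W^{-1}\cE(K,F_0+F_1)$, noting that $W^{-1}$ costs one further power of $2^{\abs{X}}$.

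For the \emph{rescaling and reblocking} bound I would work from (\ref{SBformula}): $\cS\cB K(Z)$ is a sum over maps $\{X_i\},\{\Delta_j\}\to LZ$ of products of $K(X_i,\phi_L+\zeta_L)$ and $P(\Delta_j,\phi_L,\zeta_L)$. The factors $P(\Delta)=e^{-V(\Delta,\phi+\zeta)}-e^{-V(\Delta,\phi)}$ are estimated by a first-order Taylor expansion in $V$, producing the additive $\abs{\sigma/\delta\kappa}$ term once condition (2) is used to convert the $(\partial\phi)^2$-type seminorm bound on $V$ into the regulator $\hat G$. The rescaling $\phi\mapsto\phi_L$ turns $G(\kappa,LX,\phi_L)$ into $G_{L^{-1}}(\kappa,X,\phi)$ essentially by definition, but the order-$\abs{\alpha}$ derivative terms pick up $L^{-2\abs{\alpha}+2}$, which is precisely why condition (2) demands $\kappa c^{-1}L^{2s-2}$ small; I would track these scalings and verify that $\hat G_{L^{-1}}$ dominates the resulting expression. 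The prefactor $\cO(1)L^2$ reflects the volume distortion of reblocking (an $L$-block contains $L^2$ unit blocks), and the loss $\Gamma_{p-q+3}\to\Gamma_p$ supplies the geometric decay $2^{(q-3)\abs{X}}$ needed to sum over all ways small polymers and single blocks group into a fixed connected $L$-polymer; since these polymers are $L$-separated, that combinatorial sum is tame.

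For the \emph{Gaussian integration} bound I would use $\norm{K^\natural_n(X,\phi)}\le\int\norm{K_n(X,\phi+\zeta)}\,d\mu_{C_{L^{-1}}}(\zeta)$, bound the integrand by $\norm{K_n(X)}_{h,\hat G_{L^{-1}}(\kappa)}\,\hat G_{L^{-1}}(\kappa,X,\phi,\zeta)$, and then use $\hat G_{L^{-1}}=G(\kappa,X,\phi+\zeta)\,G(\delta\kappa,X,\phi)\,G(\delta\kappa,X,\zeta)$: completing the square, the $\zeta$-integral of $G(\kappa,X,\phi+\zeta)\,G(\delta\kappa,X,\zeta)$ against the Gaussian is finite and bounded by $G(\kappa+\delta\kappa,X,\phi)$ times a constant absorbed by the $\Gamma_{p+1}\to\Gamma_p$ loss; the convergence of this Gaussian integral is exactly the role of the conditions $2\delta\kappa\le\kappa$ and $\kappa h^2=\cO(1)$ in (2), together with the finite-range size of $C$. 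The bound for $K^\#$ is identical with $C$ in place of $C_{L^{-1}}$ and $G_L$ in place of $G$.

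I expect the rescaling--reblocking estimate to be the main obstacle. The scaling of the order-$s$ Sobolev part of the large-field regulator is the delicate point and is what forces the particular smallness of $\kappa c^{-1}L^{2s-2}$; at the same time one must control the $P$-factors, which is where the tuning parameter $\sigma$ enters and where the additive $\abs{\sigma/\delta\kappa}$ term originates. Keeping the sum over reblocking maps under control while paying only $\cO(1)L^2$ requires combining the $L$-separation of the polymers with the bookkeeping room carried in the $\Gamma_p$ exponent, and making all of this simultaneously compatible with the intermediate regulator $\hat G_{L^{-1}}$ is the crux.
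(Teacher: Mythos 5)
Your plan matches the paper's proof in all three parts: the Cauchy contour representation extracting a factor $\de f(Z_j)$ from each $e^{-F}-1$, the stability condition and multiplicativity of the norm with the $\Gamma_p\to\Gamma_{p+2}$ (resp.\ $\Gamma_{p+3}$) shift absorbing the overlap-connectedness combinatorics, the first-order Taylor expansion of $P(\De)$ producing the $\abs{\sigma/\de\kappa}$ term, the $\Gamma_{p-q}$ bookkeeping for the reblocking sum, and the integration of the intermediate regulator $\hat G$ against $d\mu_C$ at the cost of $2^{\abs{X}}$. The only cosmetic difference is that you would justify the Gaussian regulator estimate by completing the square, whereas the paper establishes it (in the appendix) via a monotonicity argument for $\mu_{(t-s)C}*G_s$; both hinge on the same smallness condition $\kappa c^{-1}L^{2s-2}\ll 1$.
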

\begin{proof}
We prove the results one by one.
\begin{enumerate}
\item 
This bound is similar to Theorem 5 and Theorem 6 in \cite{BDH98}
or
Theorem 8 in \cite{DH00}.
First we write the extraction as
\begin{align}
&\cE(K,F)(W)\\=&\sum_{\{Z_j\}\{Y_k\}\rightarrow W} e^{-V'(W\setminus Y)}\prod_j\left( e^{-F(Z_j,Z_j\cap \overline{Y_c})}-1\right)\prod_k\tilde{K}(Y_k)\\
=&\sum_{\{Z_j\}\{Y_k\}\rightarrow W}  e^{-V'(W\setminus Y)}
        \prod_k\tilde{K}(Y_k)\\
       &\qquad\qquad \prod_j \frac{1}{2 \pi i}\int \frac{dz_{j}}{z_{j}(z_{j}-1)}
        \exp (- z_{j} F(Z_{j},Z_j\cap \overline{Y_c}))
\end{align}
 The integral is over the circles $|z_j|\de f(Z_j)= 1$, where $\de f$ is a constant depending on $Z_j$.

Then by the multiplicative property of the norms, we  have
\begin{align}
        &\norm{\cE(K,F)(W)}_{h,G}\\
\leq&\sum_{\{Z_j\}\{Y_k\}\rightarrow W} 
        \prod_k \norm{ \tilde K(X_k)}_{h,G}
        \prod_j \cO(1) \de f (Z_j)\\
    &\cdot\sup_{|z_j| \de f(Z_j) \leq 1} 
        \norm{ \exp( -V'(W\setminus Y)- \sum _{j} z_{j} F(Z_j, Z_j\cap \overline{Y_c}))}_{h,G}
\end{align}
By the  stability assumption \ref{stabilitycondition}, the last factor is bounded by
\begin{align}
 \prod_{\De \subset W\setminus Y}
        \norm{ \exp( -V'(\De)- \sum _{j} z_{j} F(Z_j,\De))}_{h,G}\leq 4^{\abs{W\setminus Y}} \leq 4^{\abs{Z}}
\end{align}
Now we substitute the original definition of the summations
\[\sum_{\{X_i\},\{Y_j\}}=\sum_{N,M}\frac{1}{N!M!}
\sum_{(X_1,\dots,X_N), (Y_1,\dots,Y_M)}\]
where the sum is over ordered sets, but otherwise the restrictions
apply.
We multiply by $\Gamma_p(W)$, identify $4^{|W|}\Gamma_p(W) = \Gamma_{p+2}(W)$  and use
$ \Gamma_{p+2}(W) \leq \prod_i\Gamma_{p+2}(X_i)
\prod_j\Gamma_{p+2}(Z_j) $
which follows  from the overlap connectedness.
Then sum over $W$ with a   pin, and use a spanning tree
argument in \cite{BY90} and the small norm hypotheses to obtain
\be
\begin{split} 
        &\norm{\cE(K,F)}_{h,G,\Gamma_p}\\
\leq & \sum_{N \geq 1, M \geq 0} \frac{(N+M)!}{N!M!}
        (\cO(1))^{N+M}
        \norm{\tilde K}^N_{h,G,\Gamma_{p+2}}
        \norm{\de f}_{\Gamma_{p+4}} ^M \\
\leq & \sum_{N \geq 1, M \geq 0} (\cO(1))^{N+M}2^{N+M}
\norm{\tilde K}^N_{h,G,\Gamma_{p+2}}
        \norm{\de f}_{\Gamma_{p+4}} ^M \\
\leq & \frac{\cO(1)\norm{\tilde K}_{h,G,\Gamma_{p+2}}}{1-\cO(1)\norm{\tilde K}_{h,G,\Gamma_{p+2}}}\sum_{ M \geq 0} (\cO(1))^{M}2^{M}
\norm{\de f}_{\Gamma_{p+4}} ^M \\
\leq & \cO (1) \norm{\tilde K}_{h,G,\Gamma_{p+2} } \label{04032011a}
 \end{split} \ee
The last steps are by the combinatoric inequality  $(N+M)!/N!M!  \leq 2^{N+M}$ and the summation formula for geometric series and under the assumption that $\norm{K}$ and $\de f$ are sufficiently small.

Now note that  $\tilde K=K- (e^{F}-1)^+$, we write
\be
        (e^{F}-1)^+(Y)
= \sum_{\{Y_j\}}\prod_j \frac{1}{2 \pi i}
        \int \frac{dz_{j}}{z_{j}(z_{j}-1)}
        \exp(  z_{j} F(Y_{j}))
\ee
 now with the integral over  $|z_j|\de f(Y_j) = 1$.
Then by a similar argument as above and using the pre-assumed 
condition \ref{stabilitycondition},  we have
\be  \| (e^{F} -1 )^+(Y) \| _{h,G(\kappa)} \leq
 2^{|Y|} \  \sum_{\{Y_j\}}\prod_j \cO(1) \de f(Y_j)
\ee
and hence
\be  \| (e^{F} -1 )^+ \| _{h,G(\kappa),   \Gamma_{p+2}} \leq
\sum_{N=1}^{\infty}(\cO(1))^N \|\de f\|^N_{\Gamma_{p+4}}
\leq \cO(1)  \|\de f\|_{\Gamma_{p+4}}   \label{04032011b} \ee
Eventually the final estimate follows from combining \ref{04032011a} and \ref{04032011b} together.

The estimate on $\norm{\cE (K,F_0,F_1)}_{h,G,\Gamma_p}$ is similar and by the property that $W^{-1}(X)\leq 2^{\abs{X}}$.

\item 
Recall the definition  (\ref{SBformula}), 
since $X_0,\cup X_i$ and $\cup \De_j$ are disjoint, after taking the functional derivatives 
we  have:
\be
\begin{split}
	&\norm{\cS\cB K(Z,\phi,\zeta)}_h\\
\leq &\sum_{\{X_i\},\{\Delta_j\}\rightarrow \{LZ \}} \norm{ e^{-V(X_0)}(\phi_L)}_h \\
&\qquad\qquad \prod_{i}\norm{K(X_i,\phi_L+\zeta_L)}_h
\prod_j \norm{P(\Delta_j,\phi_L,\zeta_L)}_h
\end{split} \ee

Now since both sides involves $\zeta$ fields, we shall multiply by the corresponding regulator $\hat{G}$ to control it. Then 
\be
\begin{split}
	&\norm{\cS\cB K(Z)}_{h,\hat{G}_{L^{-1}}}\\
=&\sup_{\phi,\zeta}\left(\norm{\cS\cB K(Z,\phi,\zeta)}\hat{G}_{}(\kappa,\delta\kappa,LZ,\phi_L,\zeta_L)^{-1}\right)\\
\leq &\sup_{\phi,\zeta}\sum_{\{X_i\},\{\Delta_j\}\rightarrow \{LZ \}} \norm{ e^{-V(X_0)}(\phi_L)}_h \prod_{i} \norm{K(X_i,\phi_L+\zeta_L)}_h \\&\qquad\qquad\prod_j \norm{P(\Delta_j,\phi_L,\zeta_L)}_h\hat{G}_{}(\kappa,\delta\kappa,LZ,\phi_L,\zeta_L)^{-1}\\
\leq & \sup_{\phi,\zeta}\sum_{\{X_i\},\{\Delta_j\}\rightarrow \{LZ \}} \norm{ e^{-V(X_0,\phi_L)}}_h G^{-1} (\de\kappa,X_0,\phi_L)\\
 &\prod_{i} \norm{K(X_i,\phi_L+\zeta_L)}_hG^{-1}(\kappa,X_i,\phi_L+\zeta_L)\\
\prod_j& \norm{P(\Delta_j,\phi_L,\zeta_L)}_h G^{-1}_{}(\delta\kappa,\De_j,\zeta_L)G_{}^{-1}(\kappa,\De_j,\phi_L+\zeta_L)G^{-1}_{}(\delta\kappa,\De_j,\phi_L)
\end{split}
\ee
For the first term, note that $V(L\De,\phi_L)=V(\De,\phi)$ and $V$ is defined on any scale, 
and $Z\setminus (L^{-1}(\cup_i X_i\cup\cup_j \De_j)=L^{-1}X\subset Z$.
Then by the assumption \ref{06062011},  after multiplying the large field regulator, it is bounded by
\be \begin{split}
&\sup_{\phi} \norm{ e^{-V(X_0,\phi_L)}}_h G^{-1}(\de\kappa,X_0,\phi_L)\\
=&\sup_{\phi} \norm{ e^{-V(L^{-1}X_0,\phi)}}_h G^{-1}_{L^{-1}}(\de\kappa,L^{-1}X_0,\phi)\\
\leq& \norm{ e^{-V(L^{-1}X_0)}}_{h,G_{L^{-1}}(\de\kappa)} \\
\leq& 2^{\abs{L^{-1}X_0}} \\
\leq & 2^{\abs{Z}}
\end{split} \ee 
Next, for the $K$ terms, they are also bounded by
\be\begin{split}
&\sup_{\phi,\zeta} \norm{K(X_i,\phi_L+\zeta_L)}_hG^{-1}(\kappa,X_i,\phi_L+\zeta_L)\\
\leq &\sup_{\phi} \norm{K(X_i,\phi_L)}_hG^{-1}(\kappa,X_i,\phi_L)\\
\leq &\norm{K(X_i)}_{h,G}\\
\end{split}\ee

Then the issue is to get a proper bound on $P(\De_j)$. First, the  formula \ref{reblockformula} converges only if $\norm{P(\De_j)}$ is sufficiently small. And under the stability condition, the good control on $P$ will give the desired bound on $K$. This will be the proof in \cite{BDH98} or \cite{BMS03}. But the proof is a bit lengthy, here we prove the bound with the specified choice of $V$  for simplification. Now suppose $V$ have the same form as in Lemma \ref{ve000}. Then for a single block $\De$,
\be
\begin{split}
P(\De,\phi_L,\zeta_L)&=e^{-V(\De,\phi_L+\zeta_L)}-e^{-V(\De,\phi_L)}\\
&=e^{-\sigma \int (\partial\phi_L+\partial\zeta_L)^2}-e^{-\sigma \int (\partial\phi_L)^2}\\
&=\int_0^1 \frac{d}{ds} e^{-\sigma \int (\partial\phi_L+s\partial\zeta_L)^2dx} ds
\end{split} 
\ee
Here $\sigma$ is a small quantity close to $0$, indeed we need $\sigma$ to be a $\delta\kappa$ dependent small term as shown in the following computations.
Then \be
\begin{split}
&\norm{P(\De,\phi_L,\zeta_L)}_h\\
\leq&  \norm{\int_0^1 \frac{d}{ds} e^{-\sigma \int (\partial\phi_L+s\partial\zeta_L)^2dx} ds}_h\\
\leq& \max_{0\leq s\leq 1}\norm{\frac{d}{ds}( -\sigma \int_{\De} (\partial\phi_L+s\partial\zeta_L)^2dx)}_h\cdot\norm{e^{-V(\De,\phi_L+s\zeta_L)}}_h\\
\leq& 
\norm{( -2\sigma \int_{\De} (\partial\phi_L\partial\zeta_L +(\partial\zeta_L )^2)dx)}_h
 \cdot\max_{0\leq s\leq 1}\norm{e^{-V(\De,\phi_L+s\zeta_L)}}_h
\end{split} \ee
and by the computations in Lemma \ref{ve000} and Cauchy-Schwarz inequality, the first term is bounded by
\[ \begin{split}
&\norm{( -2\sigma \int_{\De} (\partial\phi_L\partial\zeta_L +(\partial\zeta_L )^2)dx)}_h\\
\leq & \cO(1) \Big(\abs{\sigma}\norm{\zeta_L}_{H(\De)}^2 +\abs{\sigma}\norm{\phi_L}_{H(\De)}^2 +h  \abs{\sigma}\norm{\zeta_L}_{H(\De)} \\
 &\qquad+ h\abs{\sigma}\norm{\phi_L}_{H(\De)} +h^2\sigma^2  \Big)
\end{split}
\]
Now take the term $\norm{\zeta_L}^2_{H(\De)}$ for example. If we multiply by the corresponding regulator 
\be
\begin{split}
&G^{-1}(\delta\kappa,\Delta,\zeta_L)\\
=&\exp\left( -\delta\kappa\sum_{1\leq\abs{\alpha}\leq s} \int_{\mathring{\Delta}} \abs{\partial\zeta_L}^2-\de\kappa c\sum_{\abs{\alpha}=1} \int_{\partial\Delta} \abs{\partial\zeta_L}^2   \right)
\end{split}
\ee
to control  the growth, then according to the condition $\abs{\sigma}\leq\delta\kappa$, we have
\be
\abs{\sigma}\norm{\zeta_L}^2_{H(\De)} G^{-1}(\delta\kappa,\Delta,\zeta_L)\leq \cO(1)\abs{\sigma/\delta\kappa}
\ee
In the later sections, $\delta\kappa$ will be a sufficiently small $j-$dependent term, so is $\sigma$. 
Similarly for the other terms and for the term $\norm{\phi_L}_{H(\De)}\norm{\zeta_L}_{H(\De)}$.

And the $V$ term is also $\cO(1)$ bounded after multiplying the corresponding large field regulator $G(\De,\phi_L,\zeta_L)$ by similar argument as Lemma \ref{k0estimate}.
Therefore by multiplying the large field regulator of $\phi$ and $\zeta$,  we have:
\begin{align}
\norm{P(\De)}_{h,G}
\leq \cO(1)\abs{\sigma/\delta\kappa}
\end{align}
Then
\be \label{05312011}
\begin{split}
&\norm{\cS\cB K(Z)}_{h,\hat{G}_{L^{-1}}}\\
\leq& \cO(1)2^{\abs{Z}}\sum_{M,N}\frac{1}{N!M!}\sum_{(X_i),(\De_j)\rightarrow LZ } \prod_j \norm{K( X_i)}_{h,G}\prod_j\abs{ \sigma/\delta\kappa}
\end{split} \ee
Now note that by \cite{BDH98}, there is some constant $\cO(1)$ such that
\be
\Gamma_p(L^{-1}\bar{X}^L)\leq \cO(1) \Gamma_{p-q}(X)   \label{07132011}
\ee  
Therefore 
\be
\begin{split}
\Gamma_p(Z)\leq\Gamma_p(L^{-1}\overline{(\cup X_i)\cup(\cup \Delta_j)}^L) & \leq  \cO(1) \Gamma_{p-q}((\cup X_i)\cup(\cup \Delta_j))\\
&= \cO(1) \Gamma_{p-q}(\cup X_i)\Gamma_{p-q}(\cup \Delta_j)
\end{split}\ee
Then by  multiplying \ref{05312011} by large set regulator $\Gamma_{p}$  and taking the summation  over $\De$, we get 
\be
\begin{split}
&\norm{\cS\cB K}_{h,\hat{G}_{L^{-1}},\Gamma_{p}}\\
=&\sum_{Z\supset  \De}\Gamma_p(Z)\norm{\cB K(Z)}_{h,\hat{G}_{L^{-1}}} \\
\leq &\cO(1)\sum_{Z\supset  \De}\Gamma_{p+1}(Z )\sum_{M,N \atop M+N\geq 1}\frac{1}{N!M!}\\
&\qquad\qquad\sum_{(X_i),(\De_j)\rightarrow LZ}  \prod_i\norm{K( X_i)}_{h,G}\prod_j\abs{ \sigma/\delta\kappa}\\
\leq & \cO(1) \sum_{Z\supset  \De} \sum_{M,N \atop M+N\geq 1}\frac{1}{N!M!}\sum_{(X_i),(\De_j)\rightarrow LZ}  \prod_i\norm{K( X_i)}_{h,G}\prod_j\abs{ \sigma/\delta\kappa}\\ &\qquad\qquad\qquad\Gamma_{p-q+1}(\cup_i X_i)\Gamma_{p-q+1}(\cup_j\De_j)\\
\end{split}
\ee
Then by a spanning tree argument in \cite{BY90}, we have
\be
\begin{split}
&\norm{\cS\cB K}_{h,\hat{G}_{L^{-1}},\Gamma_{p}}\\
\leq &\cO(1)\sum_{N\geq 1}\cO(1)^NL^{2N}\left( \norm{K}_{h,G,\Gamma_{p-q+3}}+\abs{\sigma/\delta\kappa}\right)^N\\
\leq&\cO(1)L^2(\norm{K}_{h,G,\Gamma_{p-q+3}}+\abs{\sigma/\delta\kappa})
\end{split}
\ee
if $\norm{K}$ is sufficiently small.

\item 
First note that the corresponding regulators have the property that if $\kappa c^{-1} L^{2s-2}$ is sufficiently small, then
\be
\int G(\kappa,X,\phi+\zeta)d\mu_{C}(\zeta)\leq 2^{\abs{X}}G_L(\kappa,X,\phi)   \label{regulatorestimate}
\ee
For the proof of the estimate \ref{regulatorestimate}, see Appendix \ref{regulatorestimatesection}. Also See \cite{F10} Lemma 5.2 with a slightly different regulator, and some very similar results are given in Lemma 5 in \cite{DH00} and Lemma 6.10 in  \cite{B09}.
Now \ref{regulatorestimate} implies 
\be
\begin{split}
&\int \hat{G}(\kappa,X,\phi,\zeta)d\mu_C(\zeta)\\
=&\int G(\kappa,X,\phi+\zeta) G(\delta\kappa,X,\phi)G(\delta\kappa,X,\zeta)d\mu_C(\zeta)\\
\leq & G(\delta\kappa,X,\phi)\left( \int G^2(\kappa,X,\phi+\zeta)d\mu_C(\zeta)\right)^{\frac{1}{2}}\left( \int  G^2(\delta\kappa,X,\zeta)d\mu_C(\zeta)\right)^{\frac{1}{2}}\\
= & G(\delta\kappa,X,\phi) \left( \int G(2\kappa,X,\phi+\zeta)d\mu_C(\zeta)\right)^{\frac{1}{2}}\left( \int  G(2\delta\kappa,X,\zeta)d\mu_C(\zeta)\right)^{\frac{1}{2}}\\
\leq & G(\delta\kappa,X,\phi) \left( 2^{\abs{X}}G_L(2\kappa,X,\phi+\zeta)\right)^{\frac{1}{2}}\left(  2^{\abs{X}}\right)^{\frac{1}{2}}\\
\leq& G_L(\kappa+\delta\kappa,X,\phi)2^{\abs{X}}
\end{split} 
\ee
For the last two steps, we have used the fact that since $2\de\kappa \leq\kappa$,
\be\int G(2\de\kappa,X,\zeta)d\mu_C(\zeta)\leq\int G(\kappa,X,\zeta)d\mu_C(\zeta)\leq 2^{\abs{X}}G_L(\kappa,X,0)=2^{\abs{X}}
\ee
%
Then we consider the $\natural$ integration 
\be
\begin{split}
K^{\natural}(X,\phi)&=\int K(X,\phi,\zeta)d\mu_{C_{L^{-1}}}(\zeta)\\
&=\int K(X,\phi,\zeta_{L^{-1}})d\mu_C(\zeta)\\
&=\int K_L(LX,\phi_L,\zeta)d\mu_C(\zeta)\\
\end{split}\ee
where $K_L(LX,\phi_L)=K(X,\phi)$. After taking the functional derivatives, 
\be
\begin{split}
\norm{K^{\natural}(X,\phi)}_h
&\leq \int \norm{K_L(LX,\phi_L,\zeta)}_h d\mu_C(\zeta)\\
&\leq \norm{K_L(LX)}_{h,\hat{G}(\kappa)}\int \hat{G}(\kappa,LX,\phi_L,\zeta) d\mu_C(\zeta)\\
&\leq \norm{K(X)}_{h,\hat{G}_{L^{-1}}(\kappa)}2^{\abs{X}}G_L(\kappa+\delta \kappa,LX,\phi_L)\\
&\leq \norm{K(X)}_{h,\hat{G}_{L^{-1}}(\kappa)}2^{\abs{X}}G(\kappa+\delta \kappa,X,\phi)\\
\end{split}\ee
Here we have used the fact 
\[\begin{split}&\norm{K_L(LX)}_{h,\hat{G}(\kappa)}\\
=&\sup_{\phi,\zeta} \norm{K_L(LX,\phi_L,\zeta)}_h \hat{G}^{-1}(\kappa,LX,\phi_L,\zeta)\\
=&\sup_{\phi,\zeta'} \norm{K_L(LX,\phi_L,\zeta'_L)}_h \hat{G}^{-1}(\kappa,LX,\phi_L,\zeta'_L)\\
=&\sup_{\phi,\zeta'} \norm{K(X,\phi,\zeta')}_h \hat{G}^{-1}_{L^{-1}}(\kappa,X,\phi,\zeta')\\
=&\norm{K(X)}_{h,\hat{G}_{L^{-1}}(\kappa)}\end{split}\]
Then 
\be
\norm{K^{\natural}(X)}_{h,G(\kappa+\delta \kappa)}\leq 2^{\abs{X}}\norm{K(X)}_{h,\hat{G}_{L^{-1}}(\kappa)}
\ee
Therefore,
\be
\norm{K^{\natural}}_{h,G(\kappa+\delta\kappa),\Gamma_{p}} \leq \norm{K}_{h,\hat{G}_{L^{-1}}(\kappa),\Gamma_{p+1}}
\ee
The change of $\Gamma_p\rightarrow \Gamma_{p+1}$ is the effect of the $2^{\abs{X}}$ factor. 

Similarly for $K^{\#}(X,\phi)=\int K(X,\phi+\zeta)d\mu_C(\zeta)$, we have 
\be \begin{split}
&\norm{K^{\#}(X)}_{h,G_L(\kappa+\de\kappa)} \\
\leq&\sup_{\phi}  \int\norm{K(X,\phi+\zeta)}_h  G_L^{-1}(\kappa+\de\kappa,X,\phi)  d\mu_C (\zeta)      \\
\leq&\sup_{\phi}  \int\norm{K(X,\phi+\zeta)}_h   \hat{G}^{-1}(\kappa,\de\kappa,X,\phi,\zeta)G_L^{-1}(\kappa+\de\kappa,X,\phi)\\
 &\qquad\qquad \hat{G}(\kappa,\de\kappa,X,\phi,\zeta)d\mu_C (\zeta)      \\
\leq&\sup_{\phi} \norm{K(X)}_{h,\hat{G}} \int\hat{G}(\kappa,\de\kappa,X,\phi,\zeta) G_L^{-1}(\kappa+\de\kappa,X,\phi)d\mu_C (\zeta) \\
\leq & 2^{\abs{X}}\norm{K(X)}_{h,\hat{G}}
\end{split}\ee
In the last step follows from a consequence of \ref{regulatorestimate}: 
\be
\begin{split}
&\int \hat{G}(\kappa,X,\phi,\zeta)d\mu_C(\zeta)\\
=&\int G(\kappa,X,\phi+\zeta) G(\delta\kappa,X,\phi)G(\delta\kappa,X,\zeta)d\mu_C(\zeta)\\
\leq & G(\delta\kappa,X,\phi)\left( \int G^2(\kappa,X,\phi+\zeta)d\mu_C(\zeta)\right)^{\frac{1}{2}}\left( \int  G^2(\delta\kappa,X,\zeta)d\mu_C(\zeta)\right)^{\frac{1}{2}}\\
= & G(\delta\kappa,X,\phi) \left( \int G(2\kappa,X,\phi+\zeta)d\mu_C(\zeta)\right)^{\frac{1}{2}}\left( \int  G(2\delta\kappa,X,\zeta)d\mu_C(\zeta)\right)^{\frac{1}{2}}\\
\leq & G(\delta\kappa,X,\phi) \left( 2^{\abs{X}}G_L(2\kappa,X,\phi+\zeta)\right)^{\frac{1}{2}}\left(  2^{\abs{X}}\right)^{\frac{1}{2}}\\
\leq& G_L(\kappa+\delta\kappa,X,\phi)2^{\abs{X}}
\end{split} 
\ee
\end{enumerate}
\end{proof}
The above lemma is about the changes of polymer activities under a single renormalization group step.
Since the R.G. Transformation $K_j\rightarrow K_{j+1}$ is composed by all of the above transformations formally, we obtain the following result.
\begin{lem} \label{07182011}
 Under the  conditions as Lemma \ref{rges},
 \be \label{nrges}
 \begin{split}
 &\norm{\cE((\cS\cB K)^{\natural},F_{0},F_{1})}_{h,G(\kappa+\delta\kappa),\Gamma}\\
 \leq& \cO(1)\left(L^2\norm{K}_{h,G(\kappa),\Gamma}+\norm{\de f}_{\Gamma_5}+L^2\abs{\sigma/\de\kappa}\right)
 \end{split}
 \ee
\end{lem}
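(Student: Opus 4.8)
The plan is to combine the three single-step estimates of Lemma \ref{rges} by composing them in the order in which the renormalization group transformation is actually applied, namely reblocking/rescaling first, then Gaussian integration, then extraction. Concretely, I would start from $K=K_j$ satisfying $\norm{K}_{h,G(\kappa),\Gamma}$ small, apply the bound on rescaling and reblocking to get a bound on $\norm{\cS\cB K}_{h,\hat G_{L^{-1}}(\kappa),\Gamma_p}$ in terms of $L^2(\norm{K}_{h,G,\Gamma_{p-q+3}}+\abs{\sigma/\de\kappa})$, then feed this into the Gaussian integration bound to control $\norm{(\cS\cB K)^{\natural}}_{h,G(\kappa+\de\kappa),\Gamma_p}$ by $\norm{\cS\cB K}_{h,\hat G_{L^{-1}}(\kappa),\Gamma_{p+1}}$, and finally apply the extraction bound to $\cE((\cS\cB K)^{\natural},F_0,F_1)$, which costs three more powers of $2$ in the large-set regulator index plus the additive $\norm{\de f}_{\Gamma_5}$ term.

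The key bookkeeping is to track the shifts in the index $p$ of $\Gamma_p$ through the chain: extraction with $F_0,F_1$ costs $p\to p+3$ (plus $\norm{\de f}_{\Gamma_{p+5}}$), Gaussian integration costs $p\to p+1$, and reblocking costs $p\to p-q+3$. Since $q>0$ is at our disposal, choosing $q$ large enough (e.g. $q=7$) absorbs all the accumulated shifts so that starting and ending at $p=0$ (i.e. the plain regulator $\Gamma$) is consistent: reading backwards from the final $\Gamma=\Gamma_0$, extraction needs $\Gamma_3$ on its input $(\cS\cB K)^{\natural}$, Gaussian integration needs $\Gamma_4$ on its input $\cS\cB K$, and reblocking produces $\Gamma_4$ from $\norm{K}_{h,G,\Gamma_{4-q+3}}=\norm{K}_{h,G,\Gamma_{7-q}}\leq\norm{K}_{h,G,\Gamma}$ once $q\ge 7$ (using that $\Gamma_p\le\Gamma_{p'}$ for $p\le p'$). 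The $\norm{\de f}_{\Gamma_{p+5}}$ terms generated at the extraction step are likewise dominated by $\norm{\de f}_{\Gamma_5}$ after chasing through the one unit of shift from the Gaussian step, and the $\abs{\sigma/\de\kappa}$ term from reblocking is carried through the (monotone, norm-nonincreasing in $p$) Gaussian and extraction bounds with only $\cO(1)$ multiplicative loss.

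The remaining point is simply to collect the multiplicative constants: each of the three lemma items contributes a factor $\cO(1)$, the reblocking item contributes the explicit $L^2$, and since the Gaussian and extraction bounds are $L$-independent the total prefactor of $\norm{K}$ is $\cO(1)L^2$, while the additive error terms $\norm{\de f}_{\Gamma_5}$ and $L^2\abs{\sigma/\de\kappa}$ appear with $\cO(1)$ coefficients, yielding exactly \eqref{nrges}. One must also check that the smallness hypotheses of each step are inherited along the composition; this is automatic because the output norm after each step is bounded by $\cO(1)$ times the (small) input norm plus the small quantities $\norm{\de f}$ and $\abs{\sigma/\de\kappa}$, so the hypotheses of the next step in the chain remain satisfied.

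I expect no serious obstacle here — the lemma is a formal composition of results already proved. The only mildly delicate point is the index accounting for $\Gamma_p$, in particular making sure the free parameter $q$ in the reblocking bound is fixed once and for all (large enough) so that the chain closes; everything else is a matter of multiplying $\cO(1)$'s and invoking monotonicity of $\norm{\cdot}_{\Gamma_p}$ in $p$.
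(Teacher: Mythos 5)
Your proposal is correct and follows essentially the same route as the paper: the paper's proof is precisely the chain extraction $\rightarrow$ Gaussian integration $\rightarrow$ reblocking/rescaling applied to $\norm{\cE((\cS\cB K)^{\natural},F_0,F_1)}_{h,G(\kappa+\de\kappa),\Gamma_p}$, accumulating the index shifts $p\to p+3\to p+4\to p+7-q$ and then fixing $q=7$ so that $\Gamma_{p+7-q}=\Gamma_p$, exactly as in your bookkeeping. The only cosmetic difference is that you narrate the composition forward from $K$ while the paper writes the inequalities starting from the output norm; the $\norm{\de f}_{\Gamma_5}$ and $L^2\abs{\sigma/\de\kappa}$ terms arise in both treatments in the same places.
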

\begin{proof}
The bound follows by combining the bounds in Lemma \ref{rges} together:
\begin{align}
&  \norm{\cE((\cS\cB K)^{\natural},F_{0},F_{1})}_{h,G(\kappa+\delta\kappa),\Gamma_p}\\
\leq& \cO(1)\Big(  \norm{(\cS\cB K)^{\natural}}_{h,G(\kappa+\delta\kappa),\Gamma_{p+3}}+\norm{\de f}_{\Gamma_{p+5}} \Big)\\
\leq& \cO(1)\Big(  \norm{\cS\cB K}_{h,\hat{G}_{L^{-1}}(\kappa),\Gamma_{p+4}}+\norm{\de f}_{\Gamma_{p+5}} \Big)\\
\leq& \cO(1)\Big( L^2 \norm{K}_{h,G(\kappa),\Gamma_{p+7-q}}+L^2\abs{\sigma/\de\kappa}+\norm{\de f}_{\Gamma_{p+5}} \Big)\\
\leq&\cO(1)\left(L^2\norm{K}_{h,G(\kappa),\Gamma_p}+L^2\abs{\sigma/\de\kappa}+\norm{\de f}_{\Gamma_{p+5}}\right)
\end{align}
by choosing $q=7$.
\end{proof}

\section{R.G. Analysis on Partition Function}
During each renormalization transformation, the extraction is made from the relevant and marginal terms of polymer activities. In \cite{DH00}, those extraction terms are absorbed into the Gaussian measure in each step. In the present paper, we are using finite range decomposition covariance, where the absorption technique is no longer available.

\subsection{Initial Polymer Activities}
Start with the definition (\ref{pdef})
\be 
Z(\La_M,z,\sigma)=\int \exp \left( zW(\La_M,\phi)-\sigma V(\La_M,\phi) \right) d\mu_{\beta v_M}(\phi)
\ee
and let $\Delta$ denote unit blocks in $\La_M$,  the the idea is to break the polymer activity into pieces
\begin{align}
&\exp \left(zW(\La_M)-\sigma V(\La_M)\right) \\
=&\prod_{\Delta\subset\La_M} \exp \left( zW(\Delta)-\sigma V(\Delta)\right)\\
=&\prod_{\Delta\subset\La_M} \left[  \left(e^{zW(\Delta)}-1 \right)e^{-\sigma V(\Delta)}   +e^{-\sigma V(\Delta)}  \right] \\
=&\sum_{X\subset \La_M} \left(  \prod_{\Delta\in X}  \left(e^{zW(\Delta)}-1 \right)e^{-\sigma V(\Delta)} \right)   \cdot\left( \prod_{\Delta\notin X} e^{-\sigma V(\Delta)}\right)\\
=&\sum_{X\subset \La_M} K_0(X) e^{-\sigma V(\La_M \setminus X)}\\
=&\cE xp\left(\Box e^{-V_0}+K_0\right)(\La_M,\phi)
\end{align}
where $V_0(X)=\prod_{\De\subset X}V_0(\De)$ with
\be
V_0(\Delta)=\sigma V(\Delta)=\sigma \int_{x\in\Delta} (\partial\phi)^2dx
\ee
and for connected polymer $X\subset\La_M$,
\be
K_0(X)=\prod_{\Delta\in X}  \left(e^{zW(\Delta)}-1 \right)e^{-V_0(\Delta)}
\ee
Note that both $V_0$ and $K_0$ have factorization property over unit blocks and $\phi-$localization property on unit blocks. In the dipole gas model \cite{D09}, $K_0$ even have the translation invariant due to the original $\partial\phi$ potential $V$  which advantage we don't have here. And this is also the essential difference between them. 

\begin{lem} \label{k00estimate}
For sufficiently small $z$ and $\sigma$, and $0<\epsilon<1$
\be \label{k0estimate}
\norm{K_0}_{h,G,\Gamma} \leq \cO(1)\abs{z}^{1-\epsilon}
\ee
\end{lem}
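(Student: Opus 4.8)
The plan is to bound the norm of $K_0$ directly from its explicit product form $K_0(X)=\prod_{\Delta\in X}(e^{zW(\Delta)}-1)e^{-V_0(\Delta)}$. By the multiplicative property \eqref{normproperty} of the norm, together with the factorization of the regulator $G$ over unit blocks and $\Gamma(X)=A^{|X|}$, it suffices to estimate $\|(e^{zW(\Delta)}-1)e^{-V_0(\Delta)}\|_{h,G(\Delta)}$ for a single unit block $\Delta$ and then resum the geometric series over connected polymers $X$ containing a fixed block. So the first step is to isolate the single-block estimate
\be
\norm{(e^{zW(\Delta)}-1)e^{-V_0(\Delta)}}_{h,G} \leq \cO(1)\abs{z}^{1-\epsilon}.
\ee

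The second step is the single-block bound. Write $e^{zW(\Delta)}-1 = zW(\Delta)\int_0^1 e^{tzW(\Delta)}dt$ (or expand the exponential), so one needs $\|W(\Delta,\phi)\|_h$ and $\|e^{tzW(\Delta,\phi)}\|_h$ controlled against the large-field regulator $G(\Delta,\phi)=\exp(\kappa\|\phi\|^2_{H(\Delta)})\,\delta G$. Since $W(\Delta,\phi)=\int_\Delta \cos\phi(x)\,dx$, its $\phi$-derivatives are bounded by oscillatory integrals of the form $\int_\Delta f_1(x)\cdots f_n(x)(\sin/\cos)\phi(x)\,dx$, whose $C^r(\Delta)$-normalized sup is $\cO(1)$ uniformly — this is where the $2\pi$-periodicity matters, giving bounds independent of $\|\phi\|$ up to the combinatorial $h^n/n!$ weights, so $\|W(\Delta)\|_h \leq \cO(1)$ with no regulator cost; similarly $\|e^{tzW(\Delta)}\|_h \leq \cO(1)$ for $|z|$ small. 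The factor $e^{-V_0(\Delta)}=e^{-\sigma\int_\Delta(\partial\phi)^2}$ is handled as in the assumed stability-type estimate \eqref{06062011}: for $\sigma$ small relative to $\kappa$ (indeed $\sigma = \cO(\delta\kappa)$ later), $\|e^{-V_0(\Delta)}\|_{h,G(\delta\kappa)}\leq 2$, and the remaining regulator room $G(\kappa-\delta\kappa)$ absorbs the $\|\phi\|^2$ growth coming from differentiating the quadratic $V_0$. Multiplying, the single block gives $\cO(1)|z|$; the loss from $|z|$ to $|z|^{1-\epsilon}$ is deliberately built in to leave a spare factor $|z|^\epsilon$ that will dominate the large-set regulator constant $A$ and the combinatorial sum in the next step.

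The third step is the resummation. Using \eqref{normproperty}, $G$-factorization, and $\Gamma_p(X)=A^{|X|}2^{p|X|}=\prod_{\Delta\in X}(A 2^p)$, one gets
\be
\norm{K_0}_{h,G,\Gamma}=\sum_{X\ni\Delta_0}\Gamma(X)\norm{K_0(X)}_{h,G}\leq \sum_{X\ni\Delta_0}\big(\cO(1)A\abs{z}\big)^{|X|},
\ee
and the number of connected polymers of size $n$ containing a fixed block is $\leq C^n$ for some combinatorial constant $C$ (standard in these polymer expansions). Hence the sum is bounded by $\sum_{n\geq 1}(\cO(1)AC|z|)^n \leq \cO(1)|z|$ for $|z|$ small; since $|z|\leq |z|^{1-\epsilon}$ for $|z|\leq 1$ this yields the claim. (If one prefers the cleaner route that makes the exponent honest: split one factor as $|z|=|z|^\epsilon|z|^{1-\epsilon}$, absorb $|z|^\epsilon$ into killing $AC$ and the geometric series, and pull $|z|^{1-\epsilon}$ out front.)

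\textbf{Main obstacle.} The only nontrivial point is the single-block bound on $W(\Delta)=\int_\Delta\cos\phi$ and its exponential against the large-field regulator: one must check that the oscillatory structure makes $\|W_n(\Delta,\phi)\|$ and hence $\|e^{zW(\Delta)}\|_h$ bounded uniformly in $\phi$ (no regulator needed for this factor), so that the only regulator cost comes from the harmless quadratic $V_0$. Everything else — the multiplicative norm property, factorization of $G$ and $\Gamma$, and the connected-polymer counting — is routine and quoted from \cite{BDH98}.
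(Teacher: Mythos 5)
Your proposal is correct and follows essentially the same route as the paper: a single-block estimate using $\norm{W_n(\De,\phi)}\leq 1$ and the stability bound $\norm{e^{-V_0(\De)}}_{h,G}\leq 2$, combined via the multiplicative norm property and resummed over connected polymers with $\Gamma(X)=A^{\abs{X}}$, the spare $\abs{z}^{\epsilon}$ absorbing the combinatorial constants. The only cosmetic difference is that you extract the small factor from $e^{zW(\De)}-1$ by the Duhamel identity $zW(\De)\int_0^1 e^{tzW(\De)}dt$, whereas the paper uses a Cauchy contour integral over $\abs{\zeta}=\abs{z}^{-1+\epsilon/2}$; both yield the same per-block bound.
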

\begin{proof}
\mbox{}

First note that the $n-$th derivatives of $W$ satisfy $\|W_n(\De,\phi) \| \leq 1 $. Then we
take  the $n^{th}$ functional derivatives and re-sum  to  find   
\be \frac {(2h)^n}{n!}
\|(e^{z W(\De)  })_n(\phi)\|
\leq   \exp  \left( \sum_{n=0}^{\infty} \frac {(2h)^n}{n!}\abs{z}
\norm{W_n(\De,\phi)}  \right)  \ee
Multiply  both sides by $2^{-n}$ and sum over $n$, we get
\be
\begin{split}
\norm{e^{zW(\De)}}_{h}\leq&\sum_{n=0}^{\infty}\frac{1}{2^n}\exp  \left( \sum_{n=0}^{\infty} \frac {(2h)^n}{n!}\abs{z}
\norm{W_n(\De,\phi)}  \right)\\
\leq & 2\exp\Big(\abs{z}e^{2h}\Big)
\end{split}\ee
Then by taking the supremum over $\phi$,
\be
\norm{e^{z W(\De)}}_{h,G=1}\leq 2\exp\Big(\abs{z}e^{2h}\Big)
\ee
Now we write
\be   e^{ z W(\De)}  -1  = \frac{1}{2\pi i}  \int \frac {e^{z \zeta W(\De) }}{\zeta(\zeta-1)}   d\zeta  \ee
where the contour is the circle  $ |\zeta| = |z|^{-1+\epsilon/2} \geq 2$.
Since   $\| e^{z\zeta  W(\De) }\|_{h,1}  \leq \cO(1)$
for $\abs{\zeta}$ small
and 
 $\cO(1) |z|^{1-\epsilon/2} \leq |z|^{1-\epsilon}$, so we have that for $z$ sufficiently small:
\be
\norm{e^{zW(\Delta)}-1}_{h,G=1}\leq \abs{z}^{1-\epsilon}
\ee
And for the other term, we have for $\sigma$ sufficiently small:
\be
\norm{e^{-V_0(\Delta)}}_{h,G}\leq 2
\ee
The proof of this bound is included in Lemma \ref{06152011} as a special case.
Combine them by the norm property (\ref{normproperty}), we will have
\be
\norm{K_0(\De)}_{h,G}\leq \norm{e^{zW(\Delta)}-1}_{h,G=1}\norm{e^{-V_0(\Delta)}}_{h,G} \leq \cO(1)\abs{z}^{1-\epsilon}
\ee
Finally, by the factorization property of $K(X)=\prod_{\De\subset X}K(\De)$ and the standard regulator estimate, (\ref{k0estimate}) holds. 
\end{proof}
The lemma shows the initial polymer activity could be sufficiently small, 
which gives us the necessary condition to continue the R.G. analysis to the next scales.

\subsection{R.G. Analysis} 
Now we expect to set up the renormalization group transformation such that
\be
\begin{split}
&Z(\La_M,z,\sigma)\\
=&e^{\sum_j\sum_{X\subset \La_{M-j-1}} F_j(X)}\int \cE xp(\Box e^{-V_j} +K_j)(\La_{M-j},\phi)  d\mu_{\beta v_{M-j}}(\phi)
\end{split}
\ee
holds for any $0\leq j\leq M-1$.
Then we need a good control for the growth  of the polymer from activities $(V_j,K_j)$ to  $(V_{j+1},K_{j+1})$. The order is: given $(V_j,K_j)$, we want to make a specified choice of $V_{j+1}$ via the extraction, then $K_{j+1}$ is determined by $K_j$ and $V_{j+1}$.

Although our RG construction is for $(V_j,K_j)$, it is convenient to assume $V_j$ is a function of a coupling constant $\sigma_j$ to fit in the tuning program of the coupling constants. More precisely, we  pre-assume that $V_j(\De,\phi)=\sigma_j \int_{x\in\De} (\partial \phi)^2 dx$ holds for any $j$.  Therefore for each single RG transform, we regard the map $(V_j,K_j)\rightarrow (V_{j+1},K_{j+1})$ is the same as the map  $(\sigma_j,K_j)\rightarrow (\sigma_{j+1},K_{j+1})$.

\subsubsection{Estimates on $V$} \label{VEstimates}
Roughly speaking, the extraction in each step is made from $(\cS\cB K)^{\natural}$. And $K_{j+1}$ is determined by both $V_{j+1}$(which indeed is also specified by the extraction) and the extraction. 

More precisely,
for $\cE xp(\Box e^{-V}+K)$,
the extracted part $F=F_0+F_1$ is from neutral sector 
\be\overline{K}(X,\phi)=\frac{1}{2\pi}\int_{0}^{2\pi}K(X,\Phi+\phi)d\Phi\ee
for small sets $X$ i.e. the size of $X$ is no larger than $2^2$, which is defined in Section \ref{fundmentaltransform}.
The extraction satisfies:
\be
\begin{cases}
F(X,\phi+c)=F(X,\phi)\\
\dim\left(\overline{K}-F\right)\geq 4 \\
\end{cases}
\ee
where the $\dim$ is defined in Section \ref{KEstimates} which is originally from the neutral argument of \cite{DH00}. Those conditions can be summarized to small sets $X$:
\be \label{06302011b}
\begin{cases}
\left(  \overline{K}-F \right)_0(X,0)=0\\
\left(  \overline{K}-F \right)_2(X,0;x_{\mu},x_{\nu})=0\\
\left(  \overline{K}-F \right)_2(X,0;x_{\mu},x_{\nu}x_{\rho})=0\\
\end{cases}
\ee

Let 
\be \label{06272011}
F(X,\De,\phi)=F_0(X,\De,\phi)+F_1(X,\De,\phi)\ee
with
\begin{align}
F_0(X,\De,\phi)&= \alpha^{(0)}(X)\\
F_1(X,\De,\phi)&= \sum_{\mu,\nu}\alpha_{\mu,\nu}^{(2)}(X)\int_{\De}(\partial_{\mu}\phi)(\partial_{\nu}\phi)
+\sum_{\mu,\nu\rho}\alpha_{\mu,\nu\rho}^{(2)}(X)\int_{\De}(\partial_{\mu}\phi)(\partial_{\nu\rho}^2\phi)
\end{align}

By the above conditions, we determine those coefficients to be:
\be   
\begin{cases}
\alpha^{(0)}(X)=\frac{1}{|X|}\overline{ K}_0(X,\phi=0) \cdot\chi_{\cS}(X) \\  
\alpha^{(2)}_{\mu,\nu}(X)=\frac{1}{2|X|}\overline{ K}_2(X,\phi=0;x_{\mu},x_{\nu}) \cdot\chi_{\cS}(X) \\
\alpha^{(2)}_{\mu,\nu\rho}(X)=\frac{1}{2|X|}\overline{ K}_2(X,\phi=0;x_{\mu},x_{\nu}x_{\rho}) \cdot\chi_{\cS}(X) \\
\end{cases}
\ee
Here $\chi_{\cS}$ is the characteristic function of small sets.

And by the translation invariant property of the polymers $X$, 
we can show that if the summation  is over all polymers 
then those coefficients satisfies:
\be   \label{06152011b}
\begin{cases}
\sum_{X\supset\De}\alpha^{(0)}(X)=\de E \\  
\sum_{X\supset\De}\alpha^{(2)}_{\mu,\nu}(X)=-\frac{1}{2\beta}\de_{\mu\nu}\de\sigma \\
\sum_{X\supset\De}\alpha^{(2)}_{\mu,\nu\rho}(X)=0 \\
\end{cases}
\ee
where $E$ and $\sigma$ are constants. 
Here the point is both $\de E$ and $\de_{\mu\nu}\de\sigma$ should be $\De-$independent.  

Therefore,
\begin{align}
V_{F_0}(\De)&=\sum_{X\supset\De} F_0(X,\De) = \de E\\
V_{F_1}(\De)&=\sum_{X\supset\De} F_1(X,\De)=-\frac{\de\sigma}{2\beta}\int_{x\in\De} (\partial\phi)^2    \label{06272011b}
\end{align}

\begin{lem} \label{06152011}
Let \[V(\De,\phi)=\sigma\int_{\Delta}(\partial\phi(x))^2dx\]
Then for $\sigma/\kappa$ sufficiently small, we have
\be
\norm{e^{-V}}_{h,G(\kappa),\Gamma} \leq 2
\ee
which also verifies the hypothesis \ref{06062011}.
\end{lem}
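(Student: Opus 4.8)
The plan is to bound $\norm{e^{-V(\De)}}_{h,G(\kappa)}$ on a single unit block and then invoke the factorization property $e^{-V}(X)=\prod_{\De\subset X}e^{-V(\De)}$ together with the standard large-set regulator estimate to pass from the single-block bound to the $\Gamma$-norm. So the heart of the matter is to show that on a fixed unit block $\De$ one has $\norm{e^{-V(\De)}}_{h,G(\kappa),1}\leq \cO(1)$ with a constant that can be made $\leq 2$ once $\sigma/\kappa$ is small enough; the regulator sum then contributes only an extra $\cO(1)$ factor absorbed into the geometric series, and smallness of $\sigma/\kappa$ keeps the total below $2$.

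For the single-block bound I would proceed as in Lemma \ref{k00estimate}: compute the $n$-th functional derivative of $e^{-V(\De,\phi)}$ with respect to $\phi$, evaluated on test functions $f_i$ with $\norm{f_i}_{C^r(\De)}\leq 1$. Since $V(\De,\phi)=\sigma\int_{\De}(\partial\phi)^2$ is quadratic in $\phi$, its first derivative in the direction $f$ is $-2\sigma\int_{\De}(\partial\phi)(\partial f)$, its second derivative is $-2\sigma\int_{\De}(\partial f_1)(\partial f_2)$, and all higher derivatives vanish. Hence by the Leibniz/Faà di Bruno expansion the $n$-th derivative of $e^{-V(\De,\phi)}$ is $e^{-V(\De,\phi)}$ times a polynomial in the first two derivatives of $V$, each term controlled by powers of $\abs{\sigma}$, $\abs{\sigma}\norm{\phi}_{H(\De)}$ (using $\int_{\De}\abs{\partial\phi}^2\le\norm{\phi}_{H(\De)}^2$ and $\int_\De|\partial f|^2\le C\norm{f}_{C^r(\De)}^2\le C$ via the Sobolev embedding $H(\De)\hookrightarrow C^r(\De)$), and combinatorial factors. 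Summing $\sum_n (h^n/n!)\norm{(e^{-V})_n}$ yields a bound of the shape $\exp(\cO(1)(\abs{\sigma}h^2 + \abs{\sigma}h\norm{\phi}_{H(\De)} + \abs{\sigma}\norm{\phi}_{H(\De)}^2))$ times $\abs{e^{-V(\De,\phi)}}$. The real part of $-V(\De,\phi)$ is $-\sigma\norm{\partial\phi}^2\le 0$ when $\sigma\ge 0$, so $\abs{e^{-V(\De,\phi)}}\le 1$; if $\sigma$ is allowed to be complex one uses $|\real\sigma|$ small and absorbs the rest.

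Then I divide by the regulator $G(\kappa,\De,\phi)=\exp(\kappa\norm{\phi}_{H(\De)}^2)\cdot\de G(\kappa,\pa\De,\phi)$. The dangerous term in the exponent is $\cO(1)\abs{\sigma}\norm{\phi}_{H(\De)}^2$, which is dominated by $\kappa\norm{\phi}_{H(\De)}^2$ precisely when $\abs{\sigma}/\kappa$ is small; the linear term $\abs{\sigma}h\norm{\phi}_{H(\De)}$ is handled by Young's inequality $ab\le \tfrac{\epsilon}{2}a^2+\tfrac{1}{2\epsilon}b^2$, again feeding the quadratic part into the regulator and leaving a harmless constant; and the constant term $\abs{\sigma}h^2$ is $\cO(1)$ under the regulator assumption $\kappa h^2\le\cO(1)$ combined with $\abs{\sigma}/\kappa$ small. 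Taking the supremum over $\phi\in H(\De)$ gives $\norm{e^{-V(\De)}}_{h,G(\kappa)}\le 1+\cO(1)\abs{\sigma}/\kappa$. Finally, using $e^{-V}(X)=\prod_{\De\subset X}e^{-V(\De)}$ and the multiplicativity (\ref{normproperty}), $\norm{e^{-V}(X)}_{h,G(X)}\le (1+\cO(1)\abs{\sigma}/\kappa)^{\abs{X}}$, and summing against $\Gamma(X)=A^{\abs{X}}$ over $X\supset\De$ gives $\norm{e^{-V}}_{h,G,\Gamma}\le \sum_{\abs{X}\ge 1}(\cO(1)A(1+\cO(1)\abs{\sigma}/\kappa))^{\abs{X}}$, which converges and is bounded by $2$ provided $\abs{\sigma}/\kappa$ is taken small enough relative to $A$ and $L$ (the number of polymers of given size containing $\De$ grows only geometrically). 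This simultaneously verifies hypothesis (\ref{06062011}) since $G(\de\kappa)\le G(\kappa)$ when $\de\kappa\le\kappa$.

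\textbf{Main obstacle.} The only real subtlety is bookkeeping the combinatorics of the Faà di Bruno expansion so that the $\sum_n h^n/n!$ resummation genuinely produces an exponential rather than something divergent; because $V$ is quadratic this is much easier than the general case, since only the first two derivatives of $V$ appear and one essentially recognizes the series as $\exp$ of a quadratic form in the two derivatives. The second point requiring a little care is making the constant exactly $\le 2$ (not just $\cO(1)$) uniformly in the volume, which is why the statement quantifies over "$\sigma/\kappa$ sufficiently small": one fixes $L$ (hence $A$) first, then chooses $\abs{\sigma}/\kappa$ small depending on $L$.
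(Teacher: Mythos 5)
Your single-block argument is essentially identical to the paper's proof: compute the (at most two nonvanishing) functional derivatives of the quadratic $V$, obtain the bound $\norm{V(\De,\phi)}_h\leq 2\abs{\sigma}(\norm{\phi}_{H(\De)}^2+h^2)$, resum the derivatives of $e^{-V}$ into an exponential, and then beat the dangerous $\cO(1)\abs{\sigma}\norm{\phi}_{H(\De)}^2$ term in the exponent with the $\kappa\norm{\phi}_{H(\De)}^2$ in the regulator, using $\abs{\sigma}/\kappa$ small and $\kappa h^2=\cO(1)$ to control the remaining constant term. That part is correct and matches the paper step for step.

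One caveat about your final step. Summing $\Gamma(X)\norm{e^{-V}(X)}_{h,G}$ over all polymers $X\supset\De$ and claiming convergence for $\abs{\sigma}/\kappa$ small cannot work as stated: since $e^{-V(X,\phi=0)}=1$, each summand is at least $A^{\abs{X}}\geq 1$, and the number of connected polymers of size $n$ containing a fixed block grows geometrically in $n$, so that series diverges no matter how small $\sigma$ is. The paper's own proof never performs this sum --- it stops at the per-block bound $\norm{e^{-V(\De)}}_{h,G}\leq 2$, which is exactly the hypothesis (\ref{06062011}) used downstream (e.g.\ via $\norm{e^{-V(X_0)}}\leq 2^{\abs{X_0}}$ in Lemma \ref{rges}). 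So the $\Gamma$ appearing in the lemma's statement should be understood as the per-block estimate rather than a genuine sum over polymers; your proof of the operative content is fine, but you should drop the concluding geometric-series claim.
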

\begin{proof}
The bound for $V$ follows from direct calculation. 
\begin{align}
&V_n(\De,\phi;f_1,\cdots,f_n)\\
=&\frac{\de^n}{\de t_1\cdots \de t_n}V(\De,\phi+t_1f_1+\cdots t_nf_n)\Big|_{t=0}\\
=&\sigma\frac{\de^n}{\de t_1\cdots \de t_n}\int_{x\in\De}\partial(\phi+t_1f_1+\cdots t_nf_n)(x)^2dx\Big|_{t=0}\\
=& 
\begin{cases} 
2\sigma\int_{\De} \partial\phi(x)\partial f_1(x)dx     &\text{if $n=1$}    \\
2\sigma\int_{\De} \partial f_1(x)\partial f_2(x)dx     &\text{if $n=2$}    \\
0&\text{if $n\geq 3$}
\end{cases}
\end{align}
which implies
\be
\norm{V_n(\De,\phi)}\leq
\begin{cases} 
2\abs{\sigma}\norm{\phi}_{H(\De)}    &\text{if $n=1$}    \\
2\abs{\sigma}   &\text{if $n=2$}    \\
0&\text{if $n\geq 3$}
\end{cases}
\ee
Then by summing over $n$,
\be \begin{split} \label{06232011}
\norm{V(\De,\phi)}_h &\leq\abs{\sigma}\norm{\phi}_{H(\De)}^2+2h\abs{\sigma}\norm{\phi}_{H(\De)}+h^2\abs{\sigma} \\
&\leq 2\abs{\sigma}(\norm{\phi}_{H(\De)}^2+h^2)
\end{split} \ee
Then the $n^{th}-$derivative of $e^{-V}$ satisfy the bound
\be
\frac{(4h)^n}{n!}\norm{(e^{-V(\De)})_n(\phi)}\leq \exp\left(\sum_{n=0}^{\infty}\norm{V_n(\De,\phi)} \right)
\ee
Then multiply both sides by $4^{-n}$ and sum over $n$
\be \begin{split}
\norm{e^{-V(\De,\phi)}}_h&=\sum_{n=0}^{\infty} \frac{h^n}{n!}\norm{(e^{-V(\De)})_n(\phi)}\\
&\leq\sum_{n=0}^{\infty} \frac{1}{4^n}\exp \left( 4^2\norm{V(\phi,\De)}_h   \right)
\end{split}\ee
By taking the supremum over $\phi$,
\be  \label{06132011b}\begin{split}
\norm{e^{-V(\De)}}_{h,G}&=\sup_{\phi}\norm{e^{-V(\De,\phi)}}_h G^{-1}(\kappa,\De,\phi)\\
&\leq\sup_{\phi}\sum_{n=0}^{\infty} \frac{1}{4^n}\exp \left( 4^2\norm{V(\phi,\De)}_h   \right)G^{-1}(\kappa,\De,\phi)
\end{split}\ee
Now by \ref{06232011}, we have
\be \begin{split}
 &4^2\norm{V(\phi,\De)}_h -\log G(\kappa,\De,\phi)\\
 \leq &32 \abs{\sigma}h^2+(32\abs{\sigma}-\kappa) \norm{\phi}^2_{H(\De)}\\
 \leq & 32 \abs{\sigma}h^2
\end{split}\ee
if $32\abs{\sigma}-\kappa<0$. Also note that  $\cO(1)h^2\leq\frac{1}{\kappa}\leq \cO(1)h^2$. Therefore, if $\abs{\sigma} h^2\leq \frac{1}{32}\log \frac{3}{2}$ and $\frac{\abs{\sigma}}{\kappa}\leq 1/32$, i.e, if $\abs{\sigma}h^2$ or $\abs{\sigma}/\kappa$ is sufficiently small, then
\be
\norm{e^{-V(\De)}}_{h,G}\leq \exp(32 \abs{\sigma}h^2)\sum_{n=0}^{\infty} \frac{1}{4^n} \leq 2
\ee
which implies the claimed bound.
\end{proof}

\begin{lem} \label{ve000}
\mbox{}
Under  the extraction specified (\ref{06272011})-(\ref{06272011b}), RG transform will be
\be
\int\cE xp(\Box e^{-V}+K)(L\La,\phi_L+\zeta)d\mu_C(\zeta)=e^{\de\cE}\cE xp(\Box e^{-V''}+K')(\La,\phi)
\ee
Moreover,
\begin{enumerate}
\item Suppose that 
\be V(\Delta,\phi)=\sigma\cdot\int_{x\in\Delta} (\partial\phi)^2dx\ee
where $\sigma$ is a positive constant.
Then under such extraction, we can choose
\begin{align}
(\de\cE)(\Delta,\phi)&=\de E\\
V''(\Delta,\phi)&=\sigma'\cdot\int_{x\in\Delta} (\partial\phi)^2dx
\end{align}
 where $\de E$ and $\sigma'=\sigma-\delta\sigma/2\beta$ are  constants defined in \ref{06152011b}.
\item The extraction we made here is stable in the sense of \ref{stabilitycondition} if  $\sigma/\kappa$ and $\sigma'/\kappa$ are sufficiently small. 
\item 
For the quantities, we  have:
\begin{align}
\abs{\de E}&\leq \cO(1)\norm{K}_{h,G,\Gamma}\\
\abs{\de \sigma}&\leq \cO(1)\norm{K}_{h,G,\Gamma}
\end{align}
\end{enumerate}
\end{lem}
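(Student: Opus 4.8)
The plan is to obtain the three numbered assertions from the three transformation formulas of Section~\ref{fundmentaltransform}, specialized to the extraction $F=F_0+F_1$ of (\ref{06272011})--(\ref{06272011b}); only the stability claim~(2) requires genuinely new estimation, the rest being bookkeeping. The displayed identity is simply the composition of the Fluctuation Formula~(\ref{flucformula}), the Rescaling Formula~(\ref{scaleformula}) and the Extraction Formula~(\ref{extraformula}): this produces $K'=\cE((\cS\cB K)^{\natural},F_0,F_1)$ and $\de\cE=\sum_{X}F_0(X)$, so there is nothing to prove here beyond recording the composition.

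For (1), the point is that $F$ is built from the neutral part $\overline K$ alone and so is independent of $V$, while $V$ influences the output only through $e^{-V''}=e^{-V(F_1)}$, cf.\ (\ref{05262011}). Since $F_0(X,\De,\phi)=\alpha^{(0)}(X)$ is field-independent, $F_0(X)=\abs{X}\alpha^{(0)}(X)$ and hence $\de\cE=\sum_{X\subset\La}F_0(X)=\sum_{\De}\sum_{X\supset\De}\alpha^{(0)}(X)=\abs{\La}\,\de E$ by (\ref{06152011b}); so $(\de\cE)(\De,\phi)=\de E$ carries no $\phi$. By (\ref{06152011b}) the $\alpha^{(2)}_{\mu,\nu\rho}$ contribution to $V_{F_1}$ vanishes and $V_{F_1}(\De)=-\tfrac{\de\sigma}{2\beta}\int_{\De}(\partial\phi)^2$ as in (\ref{06272011b}), so $V''(\De,\phi)=V(\De,\phi)-V_{F_1}(\De,\phi)$ is again purely quadratic in $\partial\phi$, with coefficient $\sigma'=\sigma-\de\sigma/2\beta$ as stated. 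Finally, the ansatz is scale-invariant: in two dimensions $\int_{L\De}(\partial\phi_L)^2=\int_{\De}(\partial\phi)^2$, i.e.\ the identity $V(L\De,\phi_L)=V(\De,\phi)$ noted after (\ref{scaleformula}), so reblocking and rescaling leave both the form and the coupling untouched.

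For (2) I would extend the quadratic-form estimate of Lemma~\ref{06152011}. The exponent in (\ref{stabilitycondition}) is $-V(\De)-\sum_{X\supset\De}z(X)F(X,\De)$, which splits into the field-independent piece $-\sum_{X\supset\De}z(X)\alpha^{(0)}(X)$ and a quadratic form in $\phi$ whose coefficients come from $\sigma$ and from $\sum_{X\supset\De}z(X)\alpha^{(2)}(X)$. On the contours $\abs{z(X)}\de f(X)\le 1$ the field-independent piece is $\cO(1)\norm{\de f}_{\Gamma_p}$ in modulus and so contributes only a factor $e^{\cO(1)\norm{\de f}}$, while the effective coefficient of the quadratic part interpolates between $\sigma$ and $\sigma'$ up to $\cO(\norm{\de f})$ and is therefore small once $\sigma/\kappa$, $\sigma'/\kappa$ and $\norm{\de f}$ are small. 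Re-running the resummation of (\ref{06232011})--(\ref{06132011b}), now also retaining the mixed term $\int(\partial\phi)(\partial^2\phi)$ (whose $h$- and $H(X)$-norms are controlled by the same Sobolev embedding, the derivative scale being large enough), bounds that quadratic part by $2$, and the whole expression by $2\,e^{\cO(1)\norm{\de f}}\le 4$. This is the main obstacle: one must dominate a general quadratic form in $\phi$, including the second-derivative term absent from Lemma~\ref{06152011}, uniformly over all extraction contours simultaneously, and still stay below $4$ using only the prescribed smallnesses.

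For (3), by (\ref{06152011b}) one has $\de E=\sum_{X\supset\De}\alpha^{(0)}(X)$ and $\de\sigma$ proportional to $\sum_{X\supset\De}\alpha^{(2)}_{\mu,\nu}(X)$, both cut off to small sets by $\chi_{\cS}$. For a small set $\abs{X}\le 2^d$, so the large-field regulator evaluated at $\phi=0$ or at a constant field $\Phi\in[0,2\pi]$ is $\cO(1)$; hence $\abs{\overline K_0(X,0)}\le\cO(1)\norm{K(X)}_{h,G}$, and reading the $n=2$ term out of the $h$-norm (using $\kappa h^2=\cO(1)$) together with the fact that the monomial test functions have $C^r(X)$-norm $\cO(1)$ on the bounded set $X$ gives $\abs{\overline K_2(X,0;x_\mu,x_\nu)}\le\cO(1)\norm{K(X)}_{h,G}$ as well. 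Summing over $X\supset\De$ and using $\Gamma(X)\ge 1$ then yields $\abs{\de E},\abs{\de\sigma}\le\cO(1)\sum_{X\supset\De}\Gamma(X)\norm{K(X)}_{h,G}=\cO(1)\norm{K}_{h,G,\Gamma}$. Aside from the stability bound of~(2), everything here is routine manipulation of the extraction and rescaling formulas against the definitions of the norms.
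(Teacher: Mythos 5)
Your treatment of the displayed identity, of part (1), and of part (3) is essentially the paper's own argument: the identity is the composition of (\ref{flucformula}), (\ref{scaleformula}) and (\ref{extraformula}); part (1) follows because $F$ is built from $\overline K$ alone, $F_0$ is field-independent, $\int_{L\De}(\partial\phi_L)^2=\int_\De(\partial\phi)^2$ in $d=2$, and (\ref{06152011b}) kills the $\alpha^{(2)}_{\mu,\nu\rho}$ contribution to $V_{F_1}$; part (3) is exactly the paper's evaluation of $\alpha^{(0)}$ and $\alpha^{(2)}$ against the $n=0$ and $n=2$ terms of the $h$-norm at $\phi=0$.

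The gap is in part (2), which you yourself flag as the main obstacle but do not actually close. The stability condition (\ref{stabilitycondition}) is a statement about a \emph{particular} choice of contour radii $\de f(X)$, and the whole content of the verification is to exhibit that choice; the paper takes $\de f(X)=80k\bigl(\abs{\alpha^{(0)}(X)}+\de\kappa^{-1}\sum\abs{\alpha^{(2)}_{\mu\nu}(X)}+\de\kappa^{-1}\sum\abs{\alpha^{(2)}_{\mu,\nu\rho}(X)}\bigr)$ as in (\ref{05292011}), and must then also check $\norm{\de f}_{\Gamma_p}\leq\cO(1)\norm{K}_{h,G,\Gamma_p}$, which is used downstream in Theorem \ref{RGcontrol}. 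Your quantitative reasoning on the contours is also not right in two places. First, on $\abs{z(X)}\de f(X)=1$ one has $\abs{z(X)}=\de f(X)^{-1}$, so the field-independent piece $\sum_X z(X)\alpha^{(0)}(X)$ is \emph{not} $\cO(1)\norm{\de f}$ in modulus --- the large $z(X)$ exactly compensates a small $\de f(X)$; it is bounded by an absolute constant ($1/80$ in the paper) only because the factor $80k\abs{\alpha^{(0)}(X)}$ is built into $\de f(X)$. Second, the $z(X)$ range independently over circles, so the coefficient of the quadratic form, $\sum_X z(X)\alpha^{(2)}(X)$, does not ``interpolate between $\sigma$ and $\sigma'$''; what one actually needs is $\sum_{X\supset\De}\abs{z(X)}\abs{\alpha^{(2)}(X)}\leq\kappa/2$ uniformly on the contours, and this is precisely what the weight $\de\kappa^{-1}$ inside $\de f$ delivers (giving $\leq\de\kappa/80\leq\kappa/2$ after summing over the $k$ small sets containing $\De$). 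Without specifying $\de f$ with those weights, the smallness hypotheses you invoke ($\sigma/\kappa$, $\sigma'/\kappa$, $\norm{\de f}$ small) do not by themselves force the quadratic coefficient below $\kappa$, since $\kappa=\cO(h^{-2})$ is itself small.
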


\begin{proof}
\mbox{}
\begin{enumerate}
\item

For $j=0$, the result is trivial with $E_0=0$.

Note that $V''$ depends on the choice of $V$, besides the extraction $V_{F}$. Now start with
\be
V(\Delta,\phi)=\sigma\cdot\int_{x\in\Delta} (\partial\phi)^2dx
\ee
First, after  fluctuation (\ref{flucformula}) and rescaling (\ref{scaleformula})steps, we will have
\be\begin{split}
&\int \cE xp\left(\Box e^{-V}+K\right)\left(L\La,\phi_L+\zeta\right)d\mu_C(\zeta) \\
=& \cE xp\left(\Box e^{-V_{L^{-1}}}+ (\cS \cB K)^{\#}  \right)\left(\La,\phi\right)
\end{split}
\ee
where
$V_{L^{-1}}(\De,\phi)=V(L\De,\phi_L)$
still keeps the standard form.

Combine with the extraction (\ref{extraformula0}) which is specified right above the Lemma \ref{ve000}, $V'$ will have the standard form:
\be \begin{split}
V'(\De,\phi)&=V_{L}(\De,\phi)-V_F(\De,\phi)\\
&=V(L\De,\phi_L)-(\de E-\frac{\de \sigma}{2\beta}\int (\partial \phi)^2)\\
&=E'(\Delta)-\sigma'\cdot\int_{x\in\Delta} (\partial\phi)^2dx
\end{split}
\ee
where $E'=E-\de E$ and $\sigma'=\sigma-\de \sigma/2\beta$  are constants. Then we extract the constant energy term $E$ to get the final the extraction formula
 (\ref{extraformula}) with:
 \begin{align}  
 V''(\Delta,\phi)&=(\sigma-\de\sigma)\cdot\int_{x\in\Delta} (\partial\phi)^2dx \\
 \de\cE&=\sum_{X\subset L\La} F_0(X)=\de E\abs{L\La} \label{07112011}
 \end{align}
where $\de E$ is a constant depending on the unit blocks only.
 
\item 
Now we  check the stability condition \ref{stabilitycondition} of such extraction. The verification is analogous to part of the proof of Theorem 18 in \cite{DH00}. Suggested by Lemma 21 of \cite{DH00}, extraction $F=F_0+F_1$ is stable for $(h,G,\de f(X))$ in  \ref{stabilitycondition} if we define 
\be\de f(X)=  80k(\abs{\alpha^{(0)}(X)}+ \delta\kappa^{-1} \sum_{\mu,\nu}\abs{\alpha^{(2)}_{\mu\nu}(X)}  +\delta\kappa^{-1} \sum_{\mu,\nu\rho}\abs{\alpha^{(2)}_{\mu\nu\rho}(X)}  )         \label{05292011}     \ee
where $k=\cO(1)$ is the number of small sets containing a unit block $\De$.

Then as the proof in Lemma \ref{k00estimate}, we have
\be
\begin{split}
&\frac{(3h)^n}{n!}\norm{\exp\left( - V(\De)-\sum_{X\supset\De} z(X)F(X,\De) \right)_n(\phi)}\\
\leq& \exp\left(\sum_{n=0}^{2}\frac{(3h)^n}{n!} \norm{V_n(\De,\phi)}  +\sum_{X\supset \De} \abs{z(X)}\sum_{n=0}^2 \frac{(3h)^n}{n!}\norm{F_n(X,\De,\phi)} \right)\\
\leq& \exp\left( \norm{V_n(\De,\phi)}_h  +\sum_{X\supset \De} \abs{z(X)}\sum_{n=0}^2 \frac{(3h)^n}{n!}\norm{F_n(X,\De,\phi)} \right)\\
\leq& \exp\Big(\cO(1)\abs{\sigma}( h^2 + \norm{\phi}^2_{H(\De)} )+\sum_{X\supset \De} \abs{z(X)}\sum_{n=0}^2 \frac{(3h)^n}{n!}\norm{F_n(X,\De,\phi)} \Big)
\end{split}
\ee
And those derivatives in $F$ are bounded by
\begin{align}
\norm{F_0(X,\De,\phi)}&\leq \abs{\alpha^{(0)}(X)}+ \abs{\alpha^{(2)}(X)} \norm{\phi}^2_{H(\De)}\\
\norm{F_1(X,\De,\phi)}&\leq 2\abs{\alpha^{(2)}(X)} \norm{\phi}_{H(\De)}\\
\norm{F_2(X,\De,\phi)}&\leq 2\abs{\alpha^{(2)}(X)} 
\end{align}
which implies the bound
\be \begin{split}
&\sum_{X\supset \De} \abs{z(X)}\sum_{n=0}^2 \frac{(3h)^n}{n!}\norm{F_n(X,\De,\phi)}\\ 
\leq &\sum_{X\supset \De} \abs{z(X)}(\abs{\alpha^{(0)}(X)}+(9h^2+6h \norm{\phi}_{H(\De)}+\norm{\phi}^2_{H(\De)})\abs{\alpha^{(2)}(X)})\\
\leq &\sum_{X\supset \De} \abs{z(X)}(\abs{\alpha^{(0)}(X)}+(10h^2+10 \norm{\phi}^2_{H(\De)})\abs{\alpha^{(2)}(X)})\\
\leq &\sum_{X\supset \De} \abs{z(X)}(\abs{\alpha^{(0)}(X)}+10\de\kappa^{-1}(1+ \kappa\norm{\phi}^2_{H(\De)})\abs{\alpha^{(2)}(X)})\\
\leq &\sum_{X\supset \De} k^{-1}(\frac{1}{8}+\frac{\kappa}{2}\norm{\phi}^2_{H(\De)})\de f(X)\abs{z(X)}\\
\leq &\frac{1}{8}+\frac{\kappa}{2}\norm{\phi}^2_{H(\De)}
\end{split}\ee
Here we used  $\de\kappa<\kappa=\cO(h^{-2})$ and $\de f(X)\abs{z(X)}\leq 1 $. 

Then again by a similar argument as Lemma  \ref{k00estimate}, 
\be \begin{split}
&\norm{\exp\left( -V(\De)-\sum_{X\supset\De} z(X)F(X,\De)\right)}_{h,G}\\
\leq & \sup_{\phi} \sum_{n=0}^{\infty}\frac{1}{3^n} \exp \Big(  3^2\norm{ -V(\De,\phi)-\sum_{X\supset\De} z(X)F(X,\De,\phi)  }_h \Big) G^{-1}(\kappa, X,\phi)\\
\leq &\sum_{n=0}^{\infty}\frac{1}{3^n} \sup_{\phi} \exp\Big(\frac{1}{8}+\cO(1)\abs{\sigma} h^2+ (\cO(1)\abs{\sigma}+\frac{\kappa}{2}-\kappa)\norm{\phi}^2_{H(\De)}   \Big) \\
\leq & \exp\Big(\frac{1}{8}+\cO(1)\abs{\sigma} h^2  \Big) \sum_{n=0}^{\infty}\frac{1}{3^n}\\
\leq &4
\end{split}
\ee
if $\cO(1)\abs{\sigma} h^2$ is sufficiently small.
Now we need still to check $\norm{\de f}$ . By the construction of the extraction, $\de f$ depends on the corresponding $K$. More precisely,  if $1\leq \kappa^{-1} h^{-2}\leq 4$ 
, then each term in \ref{05292011} is bounded by $\norm{K}$ which implies the estimate
\be
\norm{\de f}_{\Gamma_p}\leq\cO(1)\norm{K}_{h,G,\Gamma_{p}} \label{07122011}
\ee
Here the parameter $p$ is not important  because of the arbitrary choice of the parameter  $q$ on small sets.

If $\norm{K}_j$ is sufficient small(which is  true by the following sections), so is $\norm{\de f}$. 

Therefore the extraction we made here is stable in the sense of \ref{stabilitycondition}.
 
\item 
By \ref{06152011b}, the bound of $\de E$ and $\de \sigma$ actually depend on the bound of $K$, then the bounds follow from
\begin{align}
\abs{\de E}&\leq \sum_{X\supset \De} \norm{\alpha^{(0)}(X)}_{h,{G(\kappa)}}\Gamma(X)\leq \cO(1)\norm{K}_{h,G,\Gamma}\\
\abs{\de \sigma}&\leq \sum_{X\supset \De} \norm{\alpha^{(2)}(X)}_{h,G(\kappa)}\Gamma(X)\leq \cO(1)\norm{K}_{h,G,\Gamma}
\end{align}
Later we will also see that $K$ will also be sufficiently small quantities depending on activity $\zeta$.

\end{enumerate}

\end{proof}
Note that when the polymer activity $K$ is replaced by $\cS\cB K$ which is defined on $L^{-1}\La$, then the above summations will be over the next scale $L^{-1}\La$, especially for the term $\de\cE$ in \ref{07112011}.

\subsubsection{Estimates on $K$} \label{KEstimates}

Let  $K$ be a polymer activity which satisfies
$K(X, \phi + 2 \pi ) =  K(X, \phi  ) $.  Expand
    $  K(X,  \Phi + \phi ) $  in a Fourier series in the
real variable $\Phi$:
\be  
K(X, \Phi + \phi)= k_0(X,\f)+ \sum_{q\ne 0}  e^{iq\Phi} k_q(X,\phi)
\ee 
where for each $q$,
\be 
k_q(X,\phi)  = \frac{1}{2\pi} \int_{-\pi} ^{\pi} e^{-iq\Phi} K(X, \Phi + \phi) d \Phi  \label{06282011}
\ee
Then
\be  K(X, \phi)= k_0(X,\f)+ \sum_{q\neq 0}  k_q(X,
\phi)
\label{series}
\ee
The terms with $q\neq 0$ are called the  charged terms and the $q=0$
term is called the  neutral term. 

 Note that for any constant  $c$, we have
\be
k_q (X, \phi + c ) = e^{iqc} k_q(X, \phi) \ee
in particular, for the neutral sector,
\be
k_0(X,\phi+c)=k_0(X,\phi)
\ee

We also define the scaling dimension $\dim K$  of any polymer activity $K$
by
\begin{align}
        \dim ( K_{n} ) &= r_n + n \dim \phi\\
        \dim  (K)&= \inf_{n} \dim ( K_{n})
\label{dimK}
\end{align}
where the infimum
is taken over $n$ such that $K_n(X,0)  \neq 0$. Here $r_n$ is defined to
be  the
largest integer satisfying $r_n \leq r$ and
$K_{n}(X,\phi=0;p^{\times n})=0$ whenever
$p^{\times n}=(p_1,\dots,p_n)$ is an $n$--tuple of polynomials of total
degree less than $r_n$.  One can interpret $r_n$ as the
number of derivatives present in the  $\phi^n$ part of $K$  (up to a
maximum  $r$). Here the neutrality $K(X,\phi+c)=K(X,\phi)$ condition implies $K_n(X, \phi;f_1,...,f_n)$
vanishes if
any $f_i$ is a constant.  Hence for neutral $K$, 
$ \dim K_n =r_n \geq n $ for  $n <r$ and  $ \dim K_n =r_n=r $ for  $n 
\geq r$.

Then we have the following estimates:
\begin{lem} \label{e0051}
Let   $K(X, \phi)$ be supported on small sets,
and be  periodic in $\phi$.
\begin{enumerate}
\item For $q \neq 0$
\be  
\norm{\mu_C * k_q }_{h,G_{L}(\kappa),  \Gamma_{-1}} \leq m_q  \norm{ k_q }_{h + N_C,G(\kappa),  \Gamma}  
\ee
where
\begin{align}   
 N_C &=  \sup_{X\ \mbox{\rm small}}\  \inf_{x\in X}\norm{C(\cdot -
x)-C(0)}_{C^r(X)}=\cO(1)\\ m_q  &=   \exp [ -( |q| - 1/2) C(0) ]. 
\end{align}

\item If $k_q$ is supported on $L-$polymers $Y$, then for $0\leq \eta \leq 1$ ,
\be  
\norm{\cS k_q(L^{-1}Y)}_{h,G_{L^{-1}}(\kappa)}\leq \cO(1) e^{ \eta h|q|}  \norm{k_q(Y)}_{h(1-\eta/2),G(\kappa)}
\ee
\item If $k_0$ is supported on $L-$polymers $Y$, then
\be  \norm{ \cS k_0(L^{-1}Y)}_{h,G_{L^{-1}}(\kappa)}
\leq  \cO(1) L^{-  \dim (k_0)}   \norm{ k_0(Y)}_{h,G(\kappa)}   \ee
\end{enumerate}
\end{lem}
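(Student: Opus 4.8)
\emph{Overall plan.} All three estimates rest on the charge identity $k_q(X,\phi+c)=e^{iqc}k_q(X,\phi)$ for constant $c$ (and $k_0(X,\phi+c)=k_0(X,\phi)$), combined in (1) with a Gaussian shift and in (2)--(3) with the scaling of $\phi$-derivatives; the norm bookkeeping then uses only the devices already present in the excerpt: the shift bound $\norm{K(X,\cdot+w)}_h\le\norm{K(X,\cdot)}_{h+\norm{w}_{C^r(X)}}$, the regulator estimate (\ref{regulatorestimate}), and summation over small $X\ni\Delta$ against $\Gamma$. Since $X$ is small, $\abs{X}\le 4$, so each $2^{\abs{X}}$ is an $\cO(1)$ that the passage from $\Gamma$ to $\Gamma_{-1}$ can absorb.

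\emph{Part (1).} Fix a small $X\ni\Delta$ and pick $x_0\in X$ (nearly) realizing $\inf_{x\in X}\norm{C(\cdot-x)-C(0)}_{C^r(X)}$, so that $w:=iq\bigl(C(\cdot-x_0)-C(0)\bigr)$ is a deterministic complex field vanishing at $x_0$ with $\norm{w}_{C^r(X)}\le\abs{q}N_C$. In $(\mu_C*k_q)(X,\phi)=\int k_q(X,\phi+\zeta)\,d\mu_C(\zeta)$, pull the zero mode out at $x_0$ by periodicity, $k_q(X,\phi+\zeta)=e^{iq\zeta(x_0)}k_q(X,\phi+\zeta-\zeta(x_0))$, and apply the Cameron--Martin shift $\int e^{iq\zeta(x_0)}F(\zeta)\,d\mu_C=e^{-\frac12 q^2 C(0)}\int F\bigl(\zeta+iqC(\cdot-x_0)\bigr)\,d\mu_C$ (using $(C\delta_{x_0})(x_0)=C(0)$). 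The argument of $k_q$ becomes $\phi+\zeta+w-\zeta(x_0)$, and periodicity once more turns the real constant $-\zeta(x_0)$ into a unimodular phase on the real field. Bounding that phase by $1$, absorbing $w$ by passing from $h$ to $h+\abs{q}N_C$, taking the $\phi$-supremum and averaging over $\zeta$ with (\ref{regulatorestimate}) (which turns $G(\kappa)$ into $G_L(\kappa)$ at cost $2^{\abs{X}}$) gives $\norm{\mu_C*k_q(X)}_{h,G_L(\kappa)}\le 2^{\abs{X}}e^{-\frac12 q^2 C(0)}\norm{k_q(X)}_{h+\abs{q}N_C,G(\kappa)}$; summing against $\Gamma_{-1}(X)=2^{-\abs{X}}\Gamma(X)$ removes the $2^{\abs{X}}$. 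Finally $e^{-\frac12 q^2 C(0)}=m_q\,e^{-\frac12(\abs{q}-1)^2 C(0)}$: for $\abs{q}=1$ this is exactly the claim, and for $\abs{q}\ge 2$ the surplus $e^{-\frac12(\abs{q}-1)^2 C(0)}$, with $C(0)=\cO(1)\log L$ large, dominates the (controlled) growth of $\norm{k_q(X)}_{h'}$ as $h'$ runs from $h+N_C$ up to $h+\abs{q}N_C$, giving the stated inequality with the norm at $h+N_C$.

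\emph{Parts (2) and (3).} Here $(\cS k_q)(L^{-1}Y,\phi)=k_q(Y,\phi_L)$ with $\phi_L(x)=\phi(x/L)$, and by construction $G_{L^{-1}}(\kappa,L^{-1}Y,\phi)=G(\kappa,Y,\phi_L)$, so the regulators are taken care of; moreover $L\ge1$ gives $\norm{(f)_L}_{C^r(Y)}\le\norm{f}_{C^r(L^{-1}Y)}$, so the chain rule shows the $\phi$-derivatives of $k_q(Y,\phi_L)$ bring no extra factor. For (2) one uses that periodicity makes $k_q$ analytic in the constant mode: shifting the Fourier contour $\Phi\mapsto\Phi+it$ in (\ref{06282011}) produces a factor $e^{qt}$, and choosing $\abs{t}=\eta h/2$ pays at most $e^{\eta h\abs{q}}$ for the extra imaginary field shift while shrinking the field-strength parameter to $h(1-\eta/2)$ in the residual norm. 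For (3), $k_0$ being neutral, $k_{0,n}(Y,0;\cdot)$ kills constants and, by the definition of $r_n$, all polynomials of degree $<r_n$; Taylor-expanding $k_0(Y,\phi_L)$ in $\phi$ and using $\partial^\alpha\phi_L=L^{-\abs{\alpha}}(\partial^\alpha\phi)_L$ shows every term $\tfrac1{n!}k_{0,n}(Y,0;\phi_L^{\times n})$ is $\cO(L^{-r_n-n\dim\phi})=\cO(L^{-\dim K_n})$, with the Taylor remainder controlled against $G$ by $\norm{k_0(Y)}_{h,G(\kappa)}$; taking the infimum over $n$ yields the factor $L^{-\dim(k_0)}$.

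\emph{Main obstacle.} The delicate step is the balance in (1): the Gaussian shift that produces the decay $e^{-q^2 C(0)/2}$ inevitably introduces a deterministic field shift of $C^r$-size $\sim\abs{q}$, and one must verify that the stated $m_q=e^{-(\abs{q}-1/2)C(0)}$ is exactly enough to pay for restoring the field-strength parameter to $h+N_C$. This is the point where ``$L$ sufficiently large'' is genuinely used, and it is why $m_q$ is weakened from the naive $e^{-q^2 C(0)/2}$.
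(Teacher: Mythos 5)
Your part (1) has a genuine gap. You perform the Cameron--Martin shift with the \emph{full} charge, $\zeta\mapsto\zeta+iqC(\cdot-x_0)$, so the residual deterministic field $w=iq(C(\cdot-x_0)-C(0))$ has $\norm{w}_{C^r(X)}\leq \abs{q}N_C$, and absorbing it forces the right-hand norm up to $h+\abs{q}N_C$. To land on the stated bound with $\norm{k_q}_{h+N_C}$ you then claim the surplus Gaussian decay $e^{-(\abs{q}-1)^2C(0)/2}$ ``dominates the (controlled) growth'' of $\norm{k_q(X)}_{h'}$ as $h'$ runs from $h+N_C$ to $h+\abs{q}N_C$. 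But nothing controls that growth: $\norm{k_q}_{h'}=\sum_n \frac{(h')^n}{n!}\norm{(k_q)_n}$ may grow arbitrarily fast in $h'$, or fail to be finite at all beyond $h+N_C$, and the inductive hypotheses of the renormalization group only guarantee finiteness at the working radius. So for $\abs{q}\geq 2$ your argument does not close. The paper avoids this entirely by shifting with a \emph{unit} charge: $\zeta\mapsto\zeta+i\si_q C(\cdot-x)$ with $\si_q=\mathrm{sign}(q)$, which costs only $e^{C(0)/2}$ from the Gaussian translation and a field shift of $C^r$-size $N_C$ \emph{independent of} $q$; the constant part $i\si_q C(0)$ is then converted by the transformation law $k_q(X,\phi+c)=e^{iqc}k_q(X,\phi)$ into the full factor $e^{-\abs{q}C(0)}$, giving exactly $m_q=e^{-(\abs{q}-1/2)C(0)}$ with the norm at $h+N_C$. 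This is the whole point of the lemma, and it is not a cosmetic difference.

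In part (2) you also deviate from the paper: you propose shifting the Fourier contour $\Phi\mapsto\Phi+it$ in (\ref{06282011}), but you do not explain where the shrinkage to $h(1-\eta/2)$ of the field-strength parameter comes from. The paper instead shifts the \emph{field} by the constant $\eta\phi_L(y_*)$, writes each test direction as $(1-\eta)f_L+\eta\tilde f_L+\eta f_L(y^*)$, and uses $\norm{(1-\eta)f_L+\eta\tilde f_L}_{C^r(Y)}\leq 1-\eta/2$ together with the factor $(iq\eta f_L(y^*))^a$ from derivatives in the constant direction; summing the binomial series yields $e^{\eta h\abs{q}}\norm{k_q}_{h(1-\eta/2)}$. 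Your part (3) is essentially the paper's argument (Taylor expansion at $\phi=0$, vanishing on polynomials of degree $<r_n$, and the scaling $\partial^\al\phi_L=L^{-\abs{\al}}(\partial^\al\phi)_L$), though the remainder estimate against the regulator, which in the paper produces the $\kappa^{-(m-n)/2}$ factors via a Sobolev inequality, is only asserted.
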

\begin{proof}
\mbox{}
\begin{enumerate}
\item The proof is essentially the same as Lemma 13 in \cite{DH00} except the covariance. The point here is for $X$ small, the quantity $N_C$ is always nonzero for finite range $C$, which implies the loss of the analyticity still exists.

By shifting
the integral by
$\zeta \to  \zeta  + i \si_q C(\cdot-x)$,  we will have
\begin{align}
	&\mu_C * k_q (X,\phi)\\
	=&\int k_q(X,\phi+\zeta)d\mu_C(\zeta)\\
	=&e^{C(0)/2}\int e^{-i\sigma_q \zeta(x)} k_q(X,  \phi + \zeta + i\sigma_q  C( \cdot - x) )d  \mu_{ C}(\zeta)	\\
	=&m_q \int e^{-i\sigma_q \zeta(x)}\underbrace{ k_q(X,  \phi + \zeta + i\sigma_q  (C( \cdot - x)-C(0) ))}_{\mbox{say } k_{q,x}(X,\phi+\zeta)}d  \mu_{ C}(\zeta)
\end{align}
Then by taking the functional derivatives and norms,
\be
\norm{(\mu_C*k_q)_n(X,\phi)}\leq m_q\int \norm{(k_{q,x})_n(X,\phi+\zeta)} d\mu_C(\zeta)
\ee
Since for finite range covariance regulators, the estimate
\be
\mu_C*G(\kappa,X)\leq G_L(\kappa,X)2^{\abs{X}}
\ee
is still true.
So the result follows by combining this with the estimate of the norm property:
\be\norm{K(\cdot+if)}_{h,G}\leq \norm{K(\cdot)}_{h+\norm{f}_{\cC^r(X)},G}\ee

To check $N_C=\cO(1)$, it suffices to check $\inf_x \sup_y \abs{C(x-y)-C(0))}$ by Lemma \ref{dimock0311}. then it follows from
\be \begin{split}
\abs{C(x-y)-C(0)}\leq & \int_1^L\frac{1}{l}\abs{u(\frac{x-y}{l})-u(0)} dl\\
\leq & \int_1^L\frac{1}{l^2} dl\cdot\max_{\abs{r}\leq 1} \abs{u'(r)}\leq const\end{split}
\ee
Here we have used the fact that $\abs{x-y}\leq \cO(1)$ since the set $X$ is small.

The key factor in the Lemma is $C(0)$ which is $1/2\pi\cdot\log L$. The summation over this factor will beat the $L^2$ factor. See the proof of Theorem \ref{RGcontrol} for details.

\item This is similar to Lemma 14 in \cite{DH00}.
Recall that
$\cS k_q(L^{-1}Y,\phi)=k_q(Y,\phi_L)$
Now we shift $\phi_L$  by a constant $\eta\phi_L(y_*)$ where $y_*$ is an arbitrary point of $Y$, then we have
\be    \cS k_q(L^{-1}Y,\phi)=  e^{iq \eta \phi_L(y_*) } k_{q}(Y, (1-\eta) \phi_L + \eta \tilde\phi_L ) \ee
 Here we have defined   $\tilde f(x) = f(x) - f(y^*/L)$ so that
$\tilde f_L(y)  = f_L(y) - f_L(y^*) $.
Direct computation shows that
 \be  \norm{(1-\eta) f_L + \eta  \tilde f_L}_{\cC^r(Y)} \leq   1-\eta  +
(\cO(1)/L) \eta \leq 1-\eta/2\ee
whenever  $\|f \|_{\cC^r(Y)} \leq
1$ and so when computing derivatives we obtain
\be\begin{split}   &\norm{( \cS k_q)_n(L^{-1}Y, \phi)}   \\ \leq &  \sum_{a+b=n}
\frac{n!}{a!b!}
(|q|\eta )^a (1-\eta/2)^b \norm{( k_{q})_b(Y,(1-\eta) \phi_L +
\eta \tilde \phi_L  )}\end{split}\ee
Then by summing over $n$ derivatives, 
\be
\begin{split}
& \norm{ \cS k_q(L^{-1}Y, \phi)}_h\\ 
\leq &\sum_{n=0}^{\infty}h^n\sum_{a+b=n}\frac{1}{a!b!}(|q|\eta )^a (1-\eta/2)^b \norm{( k_{q})_b(Y,(1-\eta) \phi_L +
\eta \tilde \phi_L  )}\\
\leq &\sum_{a,b=0}^{\infty}\frac{h^a}{a!}(|q|\eta )^a\frac{h^b}{b!}(1-\eta/2)^b \norm{( k_{q})_b(Y,(1-\eta) \phi_L +
\eta \tilde \phi_L  )}\\
\leq & e^{\eta h|q|}    \norm{ k_{q}(Y,(1-\eta) \phi_L +
\eta \tilde \phi_L )}_{ h(1-\eta/2)}
\end{split}
\ee
Now note that \[(1-\eta) \phi_L + \eta \tilde \phi_L =\phi_L+ \eta \phi_L(y^*)\]
and the large field regulator only contains derivatives in $\phi$, so 
\be  G(\kappa,Y,(1-\eta) \phi_L + \eta \tilde \phi_L ) 
=G(\kappa,Y,\phi_L)= G_{L^{-1}}(\kappa, L^{-1}Y, \phi) \ee
Therefore we have:
\be \norm{\cS k_q(L^{-1}Y)}_{h,G_{L^{-1}}} \leq  e^{\eta h|q|}    
  \norm{ k_{q}(Y)}_{ h(1-\eta/2),G}
 \ee
Note that this part  recovers the shrink of the analyticity region from $h+N_C$ to $h$ in the previous fluctuation step.

\item This is similar to Lemma 17 in \cite{DH00}.

Write $k_0$ as $K$ for simplicity. Then $K(X,\phi+c)=K(X,\phi)$ for any constant $c$, and 
\be
(\cS K)_n(L^{-1}Y,\phi)=(K_{L^{-1}})_n(L^{-1}Y,\phi)=K_n(Y,\phi_L)
\ee
As the estimate for the charged sector, the goal is to estimate
\be  \| (K_{L^{-1}})_n(L^{-1}Y,\phi)  \|  = \sup_{\|f_i \|_{\cC^r(L^{-1}Y)} \leq 1}
 |K_n(Y, \phi_{L};
 f_{1,L}, ... , f_{n,L})|
\ee
The supremum can be taken over fields $f_i$ such that
$f_{i,L}$ vanishes at a point in $X$. Note that for such fields, we have $\|f_{i,L}\|_{\cC^r(Y)}\le \cO(1) L^{-1}\|f_i\|_{\cC^r(L^{-1}Y)} $ (refer to Lemma 15 in \cite{DH00})
and then
\be
 \|(\cS K)_n(L^{-1}Y,\phi) \| \leq 
\| K_n(Y, \phi_{L}) \| (\cO(1) L^{-1}) ^n
  \ee
First summing only over $n \geq \dim (K)$ so we can gain
a factor $L^{-\dim(K)}$.  With  $\dim (K) =k $ we  have
\be  \sum_{n \geq k} \frac{h^n}{n!}
 \|(\cS K)_n(L^{-1}Y,\phi) \|_{} \leq  \cO(1)   L^{-k}
\| K(Y,\phi_L) \|_{{}h}   \label{high}
  \ee

For derivatives $K_n$ with $n <k$, we use the representation
\begin{align}\label{special}
&K_n(Y,\phi_L;f_{L}^{\times n})\\
=
&\sum_{m=n}^{k-1}\frac{1}{(m-n)!}  K_m(Y,0;  \phi_L^{\times (m-n)}\times
f_L^{\times n}  )\\
&+
\int^1_0\ dt
\frac{(1-t)^{k-n-1}}{(k-n-1)!}
   K_{k}(Y,t\f_{L}; \phi_{L}^{\times (k-n)}\times
f_{L}^{\times n}) 
\end{align} 
By the neutrality condition, we replace
$\phi_L$ by  $\tilde \phi_L(y)
= \phi_L(y) - \phi_L(y_*) $ for some $y_* \in Y$,  and similarly for   
$f_L$.
Then we have the estimate
\be   | K_n(Y,0;  f_L^{\times n}  ) |  \leq  (\cO(1))^n  L^{-\dim K_n}
 \| K_n(Y,0 ) \| \ \prod_{j=1}^n \|f_j\|_{\cC^r(Y)} \ee
Use this bound on the terms in the sum. The  remainder is estimated
using   $\| \tilde \phi_L \|_{\cC^r(Y)}  \leq \cO(1) L^{-1} \| \tilde \phi \|_{\cC^r(L^{-1}Y)} $.

Then we have
\begin{align}
& \| (\cS K)_n(L^{-1}Y,\f) \| \\
 \leq
&\cO(1) L^{-k}
 \Big(    \sum_{m=n}^{k-1}
  \| K_m(Y,0)\|  \| \tilde  \phi \|_{\cC^r(L^{-1}Y)}^{m-n}\\
&\qquad+\int^1_0\ dt
(1-t)^{k-n-1} \| K_k(Y,t\f_{L})\|\   \| \tilde  \phi \|_{\cC^r(L^{-1}Y)}^{k-n}
\Big) \end{align}


Now multiply by  $G(\kappa, Y, \phi)^{-1} $.  For the remainder term we use
\begin{align}  G(\kappa,Y, \phi)^{-1}
&=  G(\kappa  t^2, Y, \phi)^{-1}   G(\kappa (1- t^2), Y, \phi)^{-1}  \\ 
& \leq   G_L(\kappa  t^2, Y, \phi_L)^{-1}   G(\kappa (1- t^2), Y, \phi)^{-1}
\end{align}
Then we use
\be   \sup_\phi \ \| \tilde  \phi  \|^{a}_{\cC^r(Y)}\ G(\kappa (1- t^2), Y, 
\phi)^{-1}
 \leq   \cO(1) (\kappa(1-t^2))^{-a/2}  \ee
 This is a Sobolev inequality on derivatives of order up to
$r$ and needs   $s > r+1$.   For the zeroth derivative we  dominate
$ \tilde \phi$ by a first derivative  and  then use the Sobolev 
inequality.
Here we use the fact that $X$ is necessarily small and so has diameter  
$\cO(1)$.
This gives us
\be
\begin{split}
&\sup_{\phi}\norm{(\cS K)_n(L^{-1}Y,\phi)} G^{-1}_{L^{-1}}(\kappa,Y,\phi)\\
 \leq &\cO(1)L^{-k} \sum_{m=n}^{k-1} \sup_{\phi} \norm{K_m(Y,\phi_L)} \kappa^{-(m-n)/2} G^{-1}(\kappa,Y,\phi_L)
\end{split}
\ee
which implies the  summation over finite terms
\be
\begin{split}
&\sum_{n<k} \frac{h^n}{n!}\sup_{\phi}\norm{(\cS K)_n(L^{-1}Y,\phi)} G^{-1}_{L^{-1}}(Y,\phi)\\
\leq & \sum_{n<k} \frac{h^n}{n!} \cO(1)L^{-k} \sum_{m=n}^{k-1} \sup_{\phi} \norm{K_m(Y,\phi_L)} \kappa^{-(m-n)/2} G^{-1}(\kappa,Y,\phi_L)\\
=&  \cO(1)L^{-k} \sum_{n<k}\sum_{m=n}^{k-1}  \frac{h^n}{n!}\sup_{\phi} \norm{K_m(Y,\phi_L)}  G^{-1}(\kappa,Y,\phi_L)\\
\leq &  \cO(1)L^{-k} \norm{K(Y)}_{h,G}
\end{split}
\ee
under the the condition $k$ is no larger than 4. 
  
Combining this with (\ref{high}) we find
\be  \begin{split}
&\norm{(\cS K)(L^{-1}Y)}_{h,G_{L^{-1}}}\\
 =&\sup_{\phi} \sum_{n=0}^{\infty} \frac{h^n}{n!}\norm{(\cS K)_n(L^{-1}Y,\phi)} G^{-1}_{L^{-1}}(Y,\phi)\\
\leq  &\cO(1)  L^{-k} \norm{K(Y)}_{h,G}  +\sum_{n<k} \frac{h^n}{n!}\sup_{\phi}\norm{(\cS K)_n(L^{-1}Y,\phi)} G^{-1}_{L^{-1}}(Y,\phi)\\
\leq  &\cO(1)  L^{-k} \norm{K(Y)}_{h,G} 
  \end{split}    \ee
which implies the result.
\end{enumerate}
\end{proof}

Let \be\cF(K)=(\cS\cB K)^{\natural}\ee
By Lemma \ref{rges}, the linearization of $\cS_1\cB_1K$ in $K$ and $\sigma$ can be written as 
\be
\begin{split}
&\cS_1\cB_1 K(X,\phi,\zeta)\\
=&\sum_{\overline{Y }^L=LX }  K(Y,\phi_L+\zeta_L) +\sum_{\overline{\De }^L=LX } \Big( V(\De,\phi_L+\zeta_L)- V(\De,\phi_L)\Big)
\end{split}
\ee
where the summation is over small sets. Note that the summation $\sum_{\overline{\De }^L=LX }$  exists only if $X$ is a unit block.
Then if we combine with the Gaussian integration $\natural$,  by the the identity \[\int \zeta_L(x)\zeta_L(y)d\mu_{C_{L^{-1}}}(\zeta)=C(x-y)\] and its consequences, we have
\be \label{07052011}
\begin{split}
&(\cS_1\cB_1 K)^{\natural}(X,\phi)\\
=&
\begin{cases}
\sum_{\overline{Y }^L=LX } K^{\#}(Y,\phi_L) & \text{if $X$ is small but not unit}\\
\sum_{\overline{Y }^L=LX } K^{\#}(Y,\phi_L)+\cO(1)\sigma L^2\De C(0) &\text{if $X$ is  unit} 
\end{cases}
\end{split}
\ee
Note that here the Laplacian of covariance $\De C(0)=\cO(1)$ which is a constant independent of $\phi$.

Now  denotes $\cF_1 K$ as the linear part of $\cF$ in $K$:
\be \label{07122011b}
\cF_1K(X,\phi)=\sum_{\overline{Y }^L=LX }  K^{\#}(Y,\phi_L) 
\ee
Then for the charged and neutral sectors, we have the following estimate.

\begin{lem} \label{06162011} 
\begin{align}
\norm{\cF_1(k_q)}_{h,G(\kappa+\delta\kappa),\Gamma_p}&\leq \cO(1) e^{\eta h \abs{q}}L^2\norm{k_q}_{h(1-\eta/2),G,\Gamma_{p-r}}   \\
\norm{\cF_1(k_0)}_{h,G(\kappa+\delta\kappa),\Gamma_p}&\leq \cO(1)L^{2-\dim(k_0)}\norm{k_0}_{h,G,\Gamma_{p-r}}
\end{align}
for $0\leq \eta\leq 1$ and any $p,r\geq 0$.
\end{lem}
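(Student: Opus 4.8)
The plan is to prove both bounds mode by mode, as the linearized, Fourier--refined versions of the rescaling--and--reblocking estimate of Lemma~\ref{rges}(2), replacing the crude bounds used there by the sharp single--step estimates of Lemma~\ref{e0051}. By~(\ref{07122011b}), on a single Fourier component the linear map factorizes as $\cF_1=\cS\circ\cB_1\circ(\#)$, where $(\#)$ is convolution by $\mu_C$, the reblocking $\cB_1 K(LX)=\sum_{\overline{Y}^L=LX}K(Y)$ collects small sets into $L$--polymers, and $\cS K(L^{-1}Y,\phi)=K(Y,\phi_L)$ rescales to the next scale. Since $\mu_C\ast$, $\cB_1$ and $\cS$ all respect periodicity and charge --- for instance $(k_q)^\#(X,\phi+c)=e^{iqc}(k_q)^\#(X,\phi)$ --- the map $\cF_1$ carries each mode to itself, so it suffices to estimate it separately on $k_q$ ($q\neq0$) and on $k_0$.

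For the charged sector I would first bound the convolution by Lemma~\ref{e0051}(1): $\norm{(k_q)^\#}\le m_q\norm{k_q}$ with $m_q=\exp[-(\abs q-1/2)C(0)]\le1$, at the price of enlarging the analyticity radius $h\to h+N_C$, $N_C=\cO(1)$, and of replacing the regulator $G$ by $G_L$ (with the coupling shift $\kappa\to\kappa+\delta\kappa$ furnished by the regulator estimate~(\ref{regulatorestimate})). Then I would bound the rescaling by Lemma~\ref{e0051}(2), which contributes the factor $\cO(1)e^{\eta h\abs q}$ and contracts the analyticity radius by $(1-\eta/2)$; as recorded in the remark after that lemma, this contraction is exactly what recovers the loss $N_C$, so the radius on the right is again $h(1-\eta/2)$. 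Finally I would organize the double sum --- over polymers $X\ni\De$ and over small $Y$ with $\overline{Y}^L=LX$ --- by noting that $\abs{L^{-1}\overline{Y}^L}=\cO(1)$ for small $Y$, so $\Gamma_p(L^{-1}\overline{Y}^L)=\cO(1)$, and that the sum then collapses to $\sum_{\De'\subset L\De}\sum_{Y\ni\De'}$: the outer $L^2$--fold sum produces the volume factor $L^2$ (the same pin--smearing as in Lemma~\ref{rges}(2)), while the inner sum over small $Y\ni\De'$ is absorbed by the full norm of $(k_q)^\#$ at index $\Gamma_0$, the accumulated finite $\Gamma$--index shifts being collected, via~(\ref{07132011}), into the single shift $\Gamma_p\to\Gamma_{p-r}$. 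Discarding $m_q\le1$ and collecting factors gives the first inequality.

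For the neutral sector the argument is the same with two changes. The convolution step is now elementary --- no shift of the contour into the complex plane is needed --- so $\norm{(k_0)^\#}\le\cO(1)\norm{k_0}$ at the unchanged radius $h$, still with $G\to G_L$ and $\kappa\to\kappa+\delta\kappa$. In place of Lemma~\ref{e0051}(2) I would use Lemma~\ref{e0051}(3) for the rescaling of the neutral mode, which, because of the neutrality $k_0(X,\phi+c)=k_0(X,\phi)$, gains the dimensional factor $\cO(1)L^{-\dim(k_0)}$ instead of $\cO(1)e^{\eta h\abs q}$ and leaves the radius unchanged. Together with the same pin--smearing $L^2$ and $\Gamma$--bookkeeping this gives $\norm{\cF_1(k_0)}_{h,G(\kappa+\delta\kappa),\Gamma_p}\le\cO(1)L^{2-\dim(k_0)}\norm{k_0}_{h,G,\Gamma_{p-r}}$.

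I expect the step requiring the most care to be the bookkeeping of the large--field regulators along the chain $G(\kappa)\to G_L(\kappa+\delta\kappa)\to G(\kappa+\delta\kappa)$ under $(\#)$ followed by $\cS$ --- using the definition of $G_{L^{-1}}$ and the regulator estimate~(\ref{regulatorestimate}) --- together with the verification that the $\cO(1)$ analyticity loss $N_C$ from the Gaussian integration is genuinely reabsorbed by the rescaling, so that it does not survive into the final analyticity radius. The volume and combinatorial factors ($L^2$ and the $\Gamma$--index shifts) are the routine spanning--tree and pinning arguments of~\cite{BY90} and~\cite{BDH98}; the small factor $m_q$, simply dropped here, is what will drive the contraction of the charged sector in Theorem~\ref{RGcontrol}.
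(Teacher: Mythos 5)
Your proposal has the same architecture as the paper's proof: both start from the representation $\cF_1 k_q(X,\phi)=\sum_{\overline{Y}^L=LX}k_q^{\#}(Y,\phi_L)$ of (\ref{07122011b}), note that the map preserves Fourier modes, bound the Gaussian integration and the rescaling separately (the rescaling via Lemma \ref{e0051}(2) for $q\neq 0$ and Lemma \ref{e0051}(3) for $q=0$), and finish with the $\Gamma$-index shifts from (\ref{07132011}) and the $L^2$ volume factor from the sum over $LX$; your neutral-sector and combinatorial steps coincide with the paper's. The one substantive difference is the fluctuation step in the charged sector. The paper uses the generic Gaussian-integration bound of Lemma \ref{rges}(3), which costs $2^{\abs{Y}}$ and the shift $\kappa\to\kappa+\delta\kappa$ but loses no analyticity, so the contraction $h\to h(1-\eta/2)$ produced by Lemma \ref{e0051}(2) survives intact on the right-hand side, giving the inequality exactly as stated. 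You instead use the contour-shifted bound of Lemma \ref{e0051}(1), which gains $m_q=\exp[-(\abs{q}-1/2)C(0)]$ but costs $h\to h+N_C$; composing, the right-hand radius becomes $h(1-\eta/2)+N_C$ (equal to $h$ for the choice $\eta=2h^{-1}N_C$), which is strictly larger than $h(1-\eta/2)$. Since $\norm{k_q}_{h(1-\eta/2)}\leq\norm{k_q}_{h(1-\eta/2)+N_C}$, your route does not literally deliver the stated inequality, and discarding $m_q\leq 1$ throws away the very factor that compensates for the larger radius. What your route does prove, namely $\norm{\cF_1 k_q}_{h,\cdot}\leq\cO(1)\,m_q e^{2N_C\abs{q}}L^2\norm{k_q}_{h,\cdot}$, is in fact the $q$-summable estimate the paper invokes in the charged-sector step of Theorem \ref{RGcontrol}, so your version is arguably the more useful lemma; but to obtain the statement as written you should use Lemma \ref{rges}(3) for the fluctuation step, as the paper does, and reserve the $m_q$/$N_C$ trade-off for the application.
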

\begin{proof}
By \ref{07122011b} and Lemma \ref{rges},
\be \label{07122011c}
\begin{split}
\norm{\cF_1K(X)}_{h,G(\kappa+\delta\kappa)} 
\leq &\sum_{\overline{Y}^L=LX } \norm{K^{\#}(Y)}_{h,G_L(\kappa+\delta\kappa)} \\
\leq &\sum_{\overline{Y}^L=LX } \norm{K(Y}_{h,\hat{G}_{}(\kappa)} 2^{\abs{Y}}\\
\leq &\sum_{\overline{Y}^L=LX } \norm{\cS K(L^{-1}Y)}_{h,G_{L^{-1}}(\kappa)} 2^{\abs{Y}}\\
\end{split}
\ee
by the property $\hat{G}\geq G$.
Then by Lemma \ref{e0051} for the charged sectors ,  we have
\be
\begin{split}
\norm{\cF_1k_q(X)}_{h,G(\kappa+\delta\kappa)} 
\leq & \sum_{\overline{Y}^L=LX }\cO(1) e^{\eta h\abs{q}}\norm{k_q(Y)}_{h(1-\eta/2),G_{}}2^{\abs{Y}}
\end{split}
\ee

By combining the  property \ref{07132011} of the large set regulator $\Gamma_p$ ,  we have
\[\norm{\cF_1(k_q)}_{h,G(\kappa+\delta\kappa),\Gamma_p}\leq \cO(1) e^{\eta h \abs{q}}L^2\norm{k_q}_{h(1-\eta/2),G,\Gamma_{p-r}}  \]
where the $L^2$ factor is from the summation over $LX$.

Similarly for the neutral sectors.
\end{proof}
Note that the formula \ref{07052011} holds for any polymer activity $K$, not only for $k_q$ or $k_0$. And similar to \ref{07122011c}, we have
\be \label{07122011d}
\norm{\cF_1(K)}_{h,G(\kappa+\delta\kappa),\Gamma_p}\leq \cO(1)L^{2}\norm{K}_{h,G,\Gamma_{p-r}}
\ee

\subsubsection{Summary}   \label{partitionsummary}
Now we  specify the norms in different scales . As in \cite{DH00}, we shall use a family of polymer activity norms defined for  the corresponding scales $j=0,1,2,...$ 
by
\be \norm{K}_j=\norm{K}_{G(\kappa_j),h_j, \Gamma}  \ee

And the large field regulator  $G(\kappa_j)$ is associated with
\be \kappa_j=\kappa_0 \left(\sum^{j}_{k=0} 2^{-k}\right) \ee
Since $\kappa_j$ increases slowly in $j$,
therefore the domain of analyticity which is defined by
\be h_j=h_{\infty} \left(1+\sum^{\infty}_{k=j+1} 2^{-k}\right) \ee
with  $h_{\infty}=\kappa_0^{-1/2}$ (so  $h_{\infty} \geq \cO (L^{s/2} )$),
will  also decrease slowly in $j$, but not shrink to 0 eventually.

Note that those constants satisfy $ \cO(1) h_j^2\leq 1/\kappa_j\leq \cO(1) h_j^2$ for any $j$.

Here we summarize the requirement of our renormalization analysis. 

\begin{thm} \label{RGcontrol}
\mbox{} 
There exists $\epsilon>0$ so that
if  $\sigma_j<2^{-j}\epsilon$ and $\norm{K_j}<2^{-j}\epsilon$ , then one can choose extraction so
\begin{align}
\sigma_{j+1}&=\sigma_j+\alpha_j(K_j)\\
K_{j+1}&=\cL_j K_j+g_j(\sigma_j,K_j)
\end{align}
where $\alpha_j$ and $\cL$ are linear, and $g_j$ is smooth with derivatives bounded uniformly in $j$ and both $g(0,0)=0$ and $Dg(0,0)=0$.

If $\dim \overline{K}_j \geq 4$, then
\begin{align}
\abs{\alpha_j(K_j)}&\leq \cO(1)\norm{K_j}_j\\
\norm{\cL_j K_j}_{j+1}&\leq \delta\norm{K_j}_j 
\end{align}
with $\delta=\cO(1)\max({L^{2-\beta/4\pi},L^{-2}})$, and $\dim \overline{K}_{j+1} \geq 4$.
\end{thm}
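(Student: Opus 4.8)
The plan is to read the one-step map $(\sigma_j,K_j)\mapsto(\sigma_{j+1},K_{j+1})$ off the fluctuation/scaling/extraction machinery of Section~\ref{fundmentaltransform} and Lemma~\ref{ve000}, peel off its linear part, and prove the contraction mode by mode using Lemma~\ref{e0051} and Lemma~\ref{06162011}. First I fix notation. With $V_j(\De,\phi)=\sigma_j\int_\De(\pa\phi)^2$, the choices of Section~\ref{VEstimates} give the extraction $F_j=F_{j,0}+F_{j,1}$ prescribed by (\ref{06302011b})--(\ref{06272011b}), the update $\sigma_{j+1}=\sigma_j-\de\sigma_j/2\beta$, and $K_{j+1}=\cE((\cS\cB K_j)^{\natural},F_{j,0},F_{j,1})$; this extraction is admissible because, by Lemma~\ref{ve000}(2) together with $\sigma_j,\sigma_{j+1}=\cO(2^{-j}\ep)$, the stability condition (\ref{stabilitycondition}) holds with $\de f$ as in (\ref{05292011}), and $\norm{\de f}_{\Gamma_p}\le\cO(1)\norm{K_j}_j$ by (\ref{07122011}). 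I then \emph{define} $\cL_j$ to be the part of $K_j\mapsto K_{j+1}$ that is linear in $K_j$ at $\sigma=0$, $\al_j(K_j)$ the part of $\sigma_{j+1}-\sigma_j$ linear in $K_j$, and $g_j(\sigma_j,K_j)$ the remainder. The structural claims are then soft: the map is entire in $K_j$ (the circle product $\cE xp$, the Gaussian integral, the rescaling, and the contour representations inside $\cE$ and $(e^F-1)^+$ are polynomial or analytic) and analytic in $\sigma_j$ through $e^{-V_j}=e^{-\sigma_j(\cdot)}$; at $(\sigma_j,K_j)=(0,0)$ one has $V_j=0$, hence $P(\De)=0$, $\cB 0^{\#}=0$, $F_j=0$, so $K_{j+1}=0$ and $\sigma_{j+1}=\sigma_j$, giving $g_j(0,0)=0$; the only $\sigma_j$-linear, $K_j$-independent contribution to $(\cS\cB K_j)^{\natural}$ is the \emph{constant} $\cO(1)\sigma_jL^2\De C(0)$ of (\ref{07052011}), which is absorbed into $\de\cE_j$ and not into $K_{j+1}$ (the matching $\al^{(0)}$ cancels it in $\cE_1$), so $Dg_j(0,0)=0$; and the derivative bounds are uniform in $j$ since $h_j\in[h_\infty,2h_\infty]$, $\ka_j\in[\ka_0,2\ka_0]$ make the norms $\norm{\cdot}_j$ uniformly equivalent while Lemma~\ref{rges} and Lemma~\ref{07182011} give $j$-independent analyticity radii and constants.

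Next I identify the linear operator. By (\ref{07052011}) the $K_j$-linear part of $(\cS\cB K_j)^{\natural}$ is $\cF_1K_j(X)=\sum_{\overline Y^L=LX}K_j^{\#}(Y,\phi_L)$ (the $\sigma$-constant dropped as above), and to linear order the prescribed $F_j$ subtracts exactly the $\dim<4$ neutral part of $\cF_1K_j$ on small sets of scale $j+1$; since $\cE_1(K,F)=K-Fe^{-V}$ equals $K-F$ at $\sigma=0$, it follows that $\cL_jK_j=\cF_1K_j$ on the charged sectors and on large sets, and $\cL_jK_j=(\text{the }\dim\ge4\text{ neutral part of }\cF_1K_j)$ on small sets. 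The hypothesis $\dim\overline K_j\ge4$ guarantees that this $\dim<4$ content comes only from the fluctuation integral, so that $F_j$ is of size $\le\cO(1)L^2\norm{K_j}_j$ and the split $K_{j+1}=\cL_jK_j+g_j$ is clean. For $\al_j$, one has $\al_j(K_j)=-\de\sigma_j/2\beta$ with $\de\sigma_j$ read off the linearized activity $\cF_1K_j$; Lemma~\ref{ve000}(3) and (\ref{07122011d}) then give $\abs{\al_j(K_j)}\le\cO(1)\norm{K_j}_j$.

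The core is the contraction $\norm{\cL_jK_j}_{j+1}\le\de\norm{K_j}_j$. I decompose $K_j=\sum_{q\ne0}k_q+k_0$ into Fourier modes in $\phi$; reblocking and Gaussian integration respect this, and scaling mixes a mode only with itself up to a constant field shift. \emph{Charged sector} $q\ne0$: combine the fluctuation gain of Lemma~\ref{e0051}(1) --- the factor $m_q=\exp[-(\abs{q}-1/2)C(0)]$ with $C(0)=\cO(1)\log L$, the inverse temperature entering through the per-scale fluctuation covariance $\beta C$ so that $m_1=L^{-\beta/4\pi}$ --- with the scaling bound Lemma~\ref{e0051}(2) and the reblocking factor $L^2$, choosing $\eta\asymp N_C/h_j=\cO(h_\infty^{-1})$ so that $e^{\eta h_j\abs{q}}=\cO(1)^{\abs{q}}$ and the temporary $N_C=\cO(1)$ widening of the analyticity strip in the fluctuation step is recovered; summing over $q\ne0$ (geometric, dominated by $\abs{q}=1$ for $L$ large) gives $\cO(1)L^{2-\beta/4\pi}\norm{K_j}_j$, which is strictly smaller than $\norm{K_j}_j$ precisely because $\beta>8\pi$. \emph{Neutral small sets}: $k_0^{\#}$ may contain $\dim 0$ and $\dim 2$ pieces, but these are exactly what $F_j$ removes, and the $\dim\ge4$ remainder is scaled down by $L^{-\dim}\le L^{-4}$ via Lemma~\ref{e0051}(3), so with the $L^2$ from reblocking this sector contributes $\cO(1)L^{-2}\norm{K_j}_j$. \emph{Large sets} (scale $j+1$): no extraction occurs, and the contraction is the standard large-set gain of \cite{BY90,DH00}, encoded in (\ref{07122011d}) together with $\Gamma_p(L^{-1}\overline X^L)\le\cO(1)\Gamma_{p-q}(X)$ from (\ref{07132011}), spending a fixed fraction of the large-set regulator $\Gamma$. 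Collecting the three sectors yields $\de=\cO(1)\max\{L^{2-\beta/4\pi},L^{-2}\}$ with an $L$-independent $\cO(1)$. Finally, the invariant $\dim\overline K_{j+1}\ge4$ holds because the linear part of $\overline K_{j+1}$ on small sets is $\overline{(\cS\cB K_j)^{\natural}}-F_j$, which has $\dim\ge4$ by the defining equations (\ref{06302011b}), while the nonlinear terms of $\cE(K,F_0,F_1)$ restricted to small neutral sets are products (dimensions add) of factors $\tilde K$ --- whose $\dim<4$ content is cancelled by the same $F_j$ --- with differences $e^{-F}-1$ combined with $e^{-V'}$ as arranged in Section~\ref{extraction}; this is the dimension bookkeeping of \cite{DH00} (proof of Theorem~18), which carries over verbatim since the only change is the covariance, $N_C=\cO(1)$ still being finite for our finite-range $C$.

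The main obstacle is delivering the charged-sector contraction $\cO(1)L^{2-\beta/4\pi}<1$ rigorously. This requires simultaneously extracting the $m_q$ gain from the Gaussian integration --- i.e.\ shifting the $\zeta$-contour by $i\si_qC(\cdot-x)$, which costs the analyticity loss $N_C$ of Lemma~\ref{e0051}(1) --- and recovering that loss in the scaling step via Lemma~\ref{e0051}(2) at the price of $e^{\eta h\abs{q}}$, all while keeping the widths $h_j$ inside the narrow band forced by the regulator constraints ($\ka c^{-1}L^{2s-2}$ small, $\ka h^2=\cO(1)$), and then summing $\sum_{q\ne0}\cO(1)^{\abs q}L^2m_q$ uniformly in $j$. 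Everything else is bookkeeping that assembles Lemma~\ref{rges}, Lemma~\ref{ve000}, and Lemma~\ref{e0051}.
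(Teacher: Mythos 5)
Your proposal is correct and follows essentially the same route as the paper's proof: the same decomposition of $K_{j+1}$ into the linear part $\cE_1\cF_1 K_j$ split over large sets, charged small-set modes, and neutral small sets, the same use of Lemma \ref{e0051} and Lemma \ref{06162011} to produce the factors $L^{2-\beta/4\pi}$ and $L^{-2}$, the same cancellation of the $\cO(1)L^2\sigma_j\De C(0)$ constant in the extraction, and the same Cauchy-bound/analyticity argument (via Lemma \ref{rges} and Lemma \ref{07182011}) for the uniform control of the nonlinear remainder $g_j$.
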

\begin{remark} Note that Theorem \ref{RGcontrol} cannot be iterated until further modification because of the $j-$dependent condition on $\sigma_j$. 

\end{remark}
\begin{proof} 
The standard forms of
\begin{align}
\sigma_{j+1}&=\sigma_j+\alpha_j(K_j)
\end{align}
are constructed in Lemma \ref{ve000}.

The bound on $\alpha_j$ is straightforward by Lemma \ref{ve000}.

The essential part is to  control the growth of $K_j\rightarrow K_{j+1}$.
To get the precise estimate of $K_{j+1}\rightarrow K_j$, we follow the method in \cite{DH00}. For simplicity, write 
\[K_{j+1}=\cE((\cS\cB K_j)^{\natural},F_j)=\cR (K_j,F_j)\]
 as the corresponding Renormalization Group transformation, where $F_j$ is the extraction. Then  under the assumptions as Lemma \ref{rges},   we make the decomposition as:
\be\label{4terms}
K_{j+1}=\cR_{\geq 2}(K_j,\sigma_j) +  \cE_1\cF_1(K_j)+\cE_1(\cO(1)L^2\sigma_j\De C(0))
\ee
where $\cR_{\geq 2}(K_j,\sigma_j)$ means the higher order terms in $(K_j,\sigma_j)$.

Now note that the extraction is made after reblocking, scaling and Gaussian integration, so the polymer activity $K$ in the extraction formula \ref{06302011b} should be replaced by $(\cS_1\cB_1 K)^{\natural}$ defined in \ref{07052011}. Then the extraction formula will be
\be \label{07142011d}
\begin{cases}
V_{F_0}(\De)&=\sum\limits_{X\supset\De} F_0(X,\De) = \de E+\cO(1)L^2\sigma_j\De C(0)\\
V_{F_1}(\De)&=\sum\limits_{X\supset\De} F_1(X,\De)=-\frac{\de\sigma}{2\beta}\int_{x\in\De} (\partial\phi)^2   
\end{cases} 
\ee
In other words $\cE_1(\cO(1)L^2\sigma_j\De C(0))=0$  since it is $\phi-$independent and canceled in the extraction step.

For the linear term $\cE_1\cF_1(K_j)$, we decompose it as
\be \label{07172011}
 \cE_1\cF_1(K_j)=
\cE_1\cF_1( K_j \cdot\chi_{\bar \cS})+ \cE_1\cF_1\left(\sum_{q\ne 0}  k_q\cdot\chi_\cS\right)  +  \cE_1\cF_1(k_0 \cdot\chi_\cS)
\ee
where $\chi_\cS$ means the characteristic function on small sets and $\chi_{\bar \cS}$ for large sets.
We shall treat those  terms independently. From now on, the term $\de f$ in Lemma \ref{rges} will be bounded by $K$ by the arguement in Section \ref{VEstimates}.
\begin{enumerate}

\item Estimates on the large set terms $\cE_1\cF_1( K_j \cdot\chi_{\bar \cS})$:

Since there is no extraction made from large sets and there is no contribution from $\sigma$ by \ref{07052011}, by the built-in regulator property 
\be \Gamma_p(L^{-1}\overline{X}^L)\leq L^{-4}\Gamma_p(X) \qquad \mbox{for large sets $X$} \ee we have:
\be \label{07172011b}
\begin{split}
\norm{\cE_1\cF_1( K_j \cdot\chi_{\bar \cS})}_{j+1}&=\norm{\cE_1\cF_1\left(K_j \cdot \chi_{\bar{\cS}}      \right)}_{j+1}\\
&\leq \cO(1) L^{-2}\norm{K_j}_j \\
&\leq  \frac{\de} {4}\norm{K_j}_j
\end{split}
\ee

\item Estimates on the charged terms $\cE_1\cF_1(\sum_{q\neq 0}  k_q\cdot\chi_\cS)$:

Since there is no extraction made from charged terms, by choosing $\eta=2h^{-1}N_C\leq 1$ in Lemma \ref{06162011}
\be\begin{split}
&\norm{\cE_1\cF_1(\sum_{q\ne 0}  k_q\cdot\chi_\cS)}_{j+1} \\
\leq &\cO(1)L^2 \sum_{q\neq 0}e^{ -|q|(\beta C(0)-2N_{\beta C})  +\beta C(0)/2}\norm{K_j}_j  
\end{split}
\ee
Here we have used the fact that $\norm{k_q}_j\leq\norm{K}_j$ according to the definition \ref{06282011}.
Recall that 
\be
C(0)=\int_1^L \frac{1}{l}u(0)dl=\frac{1}{2\pi}\log L
\ee
Therefore for $\beta>8\pi$
\be \begin{split}
\norm{\cE_1\cF_1\Big(\sum_{q\ne 0}  k_q\cdot\chi_\cS\Big)}_{j+1} 
\leq &   \cO(1) L^{2-\beta/4\pi}\norm{K_j}_j\\
\leq  &\frac{\de}{4}\norm{K_j}_j
\end{split}\ee

\item Estimates on the neutral terms $\cE_1\cF_1(k_0 \cdot\chi_\cS)$:


Recall that the extraction is made from the  neutral terms. More precisely, the $\sigma-$independent part of the extraction is made from $\cF_1(k_0\chi_S)=(\cS_1\cB_1(k_0\chi_S))^{\natural}$. 
Since $k_0=\overline{K_j}$, by $\dim \overline{K}_j\geq 4$ and Lemma \ref{06162011} we have 
\be \begin{split}
\norm{\cF_1 k_0}_{h,G(\kappa+\de\kappa),\Gamma_p} &\leq \cO(1) L^{2-\dim k_0}\norm{k_0}_{h,G,\Gamma_{p-r}}\\
&\leq \cO(1) L^{-2}\norm{k_0}_{h,G,\Gamma_{p-r}}
\end{split}\ee
Then
\be \label{06302011}
\begin{split}
&\norm{\cE_1\cF_1(k_0 \cdot\chi_\cS)}_{h,G(\kappa+\de\kappa),\Gamma_p}\\
\leq & \norm{\cF_1(k_0 \cdot\chi_\cS)-F(\cF_1(k_0 \cdot\chi_\cS))}_{h,G(\kappa+\de\kappa),\Gamma_{p}}\\
\leq &\cO(1) L^{-2}\norm{k_0}_{h,G,\Gamma_{p}} + \norm{F(\cF_1(k_0 \cdot\chi_\cS))}_{h,G(\kappa+\de\kappa),\Gamma_{p}}
\end{split} \ee
Note that the extraction $F(\cF_1(k_0 \cdot\chi_\cS))$ is made from $\cF_1(k_0 \cdot\chi_\cS)$, which implies
\be \begin{split} \label{07142011}
 \norm{F(\cF_1(k_0 \cdot\chi_\cS))}_{h,G(\kappa+\de\kappa),\Gamma_{p+2}}
  \leq &\cO(1)\norm{\cF_1(k_0 \cdot\chi_\cS)}_{h,G(\kappa+\de\kappa),\Gamma_{p+2}}\\
   \leq &\cO(1) L^{-2}\norm{k_0}_{h,G,\Gamma_{p}}
\end{split}\ee
Substitute this bound into \ref{06302011} and combine with the bound $\norm{k_0}_j\leq \norm{K_j}_j$, we have
\be
\norm{\cE_1\cF_1(k_0 \cdot\chi_\cS)}_{j+1}\leq   \cO(1)L^{-2}\norm{K_j}_j \leq    \frac{\de}{4}\norm{K_j}_j   
\ee
Also note that here the hypothesis $\dim \overline{K}_j\geq 4$   implies 
\[\dim \overline{K}_{j+1}=\dim (\cF\overline{K}_{j}-F(\overline{K}_{j}))\geq 4\] 
by the construction of the extraction \ref{06302011} for any $j$. And it is true for any $j$ by induction.
\end{enumerate}

%

In summary,  for $K_{j+1}=K_{j+1}(K_j,\sigma_{j})$, say the linearization  of $K_{j+1}$ at $(0,0)$ with respect to $K_j$ as $\cE_1\cF_1 K_j=\cL_jK_j$, then
\be
K_{j+1}=\cL_jK_j+g_j(\sigma_j,K_j)
\ee
This linear operation here is from the last three terms of \ref{4terms}. And the bounds follows immediately by \ref{kfform}.
\be
\norm{\cL_j K_j}_{j+1}= \norm{\cE_1\cF_1 K_j}_{j+1} \leq \delta\norm{K_j}_j \label{kfform}
\ee

Also $g_j(\sigma_j,K_j)=\cR_{\geq 2}(K_j,\sigma_j)$, which is the higher order terms. The condition $g_j(0,0)=Dg_j(0,0)=0$ follows from the construction of section \ref{fundmentaltransform}.
We use a Cauchy bound to show the uniform boundedness in $j$. Note that $\cR (sK,sF)$ is well defined for all complex $s$ in the disc $\abs{s}\leq D$, where $D$ is a constant to be chosen. Then the bound
\[
\norm{\cR(sK,sF)}_{j+1}\leq\cO(1)DL^2 (\norm{K}_j+\abs{\sigma_j/\de\kappa_j})
\]
follows directly from the estimate \ref{nrges}. Here we bounded the $\de f$ term by $\norm{K}$ by \ref{07122011}. Since 
\[\sigma_j<2^{-j}\epsilon\qquad\mbox{and}\qquad \de\kappa_j=2^{-j-1}\kappa_0\]
so 
\be
\norm{\cR(sK_j,sF_j)}_{j+1}\leq\cO(1)DL^2 (\norm{K_j}_j+\abs{\epsilon/\kappa_0})
\ee
Also by the analyticity of $\cR(sK_j,sF_j)$ in $s$, we apply the residue theorem as
\be\cR_{\ge 2}(K_j,\sigma_j)=\frac{1}{2\pi i}\int_{\abs{s}=D}\frac{\cR(sK_j,sF_j)}{s^2(s-1)} ds \label{residue00}
\ee
This is true for any $D\geq 2$. By applying the norms to \ref{residue00}, we will have:
\be \begin{split}\norm{\cR_{\ge 2}(K_j)}_{j+1} \leq& \cO(1)D^{-2}\sup_{\abs{s}=D}\norm{\cR(sK_j,sF_j)}_{j+1}\\
\leq &\cO(1)D^{-1}L^2 (\norm{K_j}_j+\abs{\epsilon/\kappa_0})
\end{split}\ee
This bound is independent of $j$ by choosing $D=2$ for example.

And since $g(K_j,\sigma_j)$ is analytic in $K_j$ and $\sigma_j$, so the derivatives of $g$ are also bounded uniformly in $j$ by a Cauchy bound.

This completes the proof of Theorem \ref{RGcontrol}.

\end{proof}
\subsection{Tuning Program} \label{tuning}
So far we have established the R.G. flow $(V_j,K_j)\rightarrow(V_{j+1},K_{j+1})$ in the actual form of $(\sigma_j,K_j)\rightarrow(\sigma_{j+1},K_{j+1})$. The $\cE_j$ terms are usually interpreted as the energy densities and $\sigma_j$ terms as the field strength in QFT. The flow apparently has a trivial fixed point $(0,0)$. Therefore the scaling limit problem  as $M\rightarrow\infty$ arises as the tuning program, which drives the flow to the fixed point.
Here we quote a version of Stable Manifold Theorem in \cite{B09}. Another similar usage of this approach can be found in  \cite{D09}.

For any Banach space $X$, $B_{X,r}$ denotes the open ball of radius $r$ centered on the origin. For $j\in\bN_0$, let $E_j,F_j$ be Banach spaces. Let $B_{E_j,r}\subset E_j$ and $B_{F_j,r}\subset F_j$ be balls of radius $r$ centered at the origin. Suppose for each $j\in\bN$ we have a map from $B_{E_{j-1},r}\times B_{F_{j-1},r}$ to $E_j\times F_j$ given by
\begin{align}
x_j&=A_jx_{j-1}+B_jy_{j-1}+f_j(x_{j-1},y_{j-1})\\
y_j&=C_jy_{j-1}+g_j(x_{j-1},y_{j-1})
\end{align}
where $A_j,B_j,C_j$ are linear and $f_j,g_j$ are smooth functions satisfying $f_j(0,0)=0=g_j(0,0)$, $Df_j(0,0)=0=Dg_j(0,0)$.
\begin{lem}(Stable Manifold Theorem) \label{SMT}
For $j\in\bN$, let $f_j,g_j$ be smooth functions uniformly in $j$, let $A_j$ be invertible, $\sup\limits_{j,k}\norm{A_j^{-1}}\norm{C_k}<1$, $\sup\limits_j\norm{C_j}<1$ and $\sup\limits_j\norm{B_j}<\infty$. Then there exists a ball $B_{F_0,\rho}$ and a smooth function $h:B_{F_0,\rho}\rightarrow B_{E_0}$ such that if $(x_0,y_0)$ lies in the graph $\{(h(y),y):y\in B_{F_0,\rho}\}$, then $(x_j,y_j)\rightarrow 0$.
\end{lem}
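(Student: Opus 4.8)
The plan is to prove this by the Lyapunov--Perron method: recast the existence of a forward orbit tending to $0$ with a prescribed $F$-component at time $0$ as a contraction fixed-point problem in a space of exponentially weighted sequences. Write $a=\sup_j\norm{A_j^{-1}}$, $\lambda=\sup_j\norm{C_j}$, $b=\sup_j\norm{B_j}$; by hypothesis $a,b<\infty$, $\lambda<1$ and $a\lambda<1$, so one can fix a rate $\mu$ with $\lambda<\mu<1$ and $a\mu<1$. Let $\mathcal X$ be the Banach space of sequences $p=(x_j,y_j)_{j\ge0}$, $x_j\in E_j$, $y_j\in F_j$, with $\norm{p}_\mu:=\sup_{j\ge0}\mu^{-j}\bigl(\norm{x_j}_{E_j}+\norm{y_j}_{F_j}\bigr)<\infty$. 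An element of $\mathcal X$ with $\norm{p}_\mu<r$ automatically stays inside every ball $B_{E_j,r}\times B_{F_j,r}$ and has $(x_j,y_j)\to0$, so it suffices to exhibit the stable orbits as small elements of $\mathcal X$.

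Given a small $y_0\in F_0$, I would define $\mathcal T_{y_0}:\mathcal X\to\mathcal X$, $p\mapsto(\tilde x,\tilde y)$, by running the contracting $F$-component forward,
\be
\tilde y_0=y_0,\qquad \tilde y_j=C_j\tilde y_{j-1}+g_j(x_{j-1},y_{j-1})\ \ (j\ge1),
\ee
and the expanding $E$-component backward through the Perron series
\be
\tilde x_j=-\sum_{k>j}\Big(\prod_{i=j+1}^{k}A_i^{-1}\Big)\big(B_k y_{k-1}+f_k(x_{k-1},y_{k-1})\big).
\ee
Telescoping $\prod_{i=j+1}^{N}A_i^{-1}$ and letting $N\to\infty$ (the boundary term is $\le a^{-j}(a\mu)^N\norm{p}_\mu\to0$ since $a\mu<1$) shows that a fixed point of $\mathcal T_{y_0}$ in $\mathcal X$ is exactly a decaying solution of the given recursion with $F$-datum $y_0$. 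Since $f_j,g_j$ are smooth uniformly in $j$ with $f_j(0,0)=g_j(0,0)=0$ and $Df_j(0,0)=Dg_j(0,0)=0$, on the ball of radius $\delta$ they are $\cO(\delta)$-Lipschitz uniformly in $j$; differences of $\tilde y$ solve a linear recursion with the same $C_j$ and vanishing initial term, hence are $\cO(\delta)$-small in $\norm{\cdot}_\mu$, and feeding this into the $\tilde x$-series (whose geometric factor $\sum_{m\ge1}(a\mu)^m$ is finite) shows $\mathcal T_{y_0}$ is $\cO(\delta)$-Lipschitz on the $\delta$-ball $\mathcal B\subset\mathcal X$ and $\norm{\mathcal T_{y_0}(0)}_\mu\le\cO(1)\norm{y_0}$. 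Choosing $\delta$ small, then $\rho$ small, makes $\mathcal T_{y_0}$ a self-map of $\mathcal B$ and a uniform contraction there for all $y_0\in B_{F_0,\rho}$; Banach's theorem then gives a unique fixed point $p(y_0)=(x(y_0),y(y_0))\in\mathcal B$, and I set $h(y_0)=x_0(y_0)$, which lies in $B_{E_0}$ once $\delta<r$, with $h(0)=0$.

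For smoothness of $h$ I would invoke smooth dependence of fixed points on parameters: $(y_0,p)\mapsto\mathcal T_{y_0}(p)$ is smooth on $B_{F_0,\rho}\times\mathcal B$ because the forward recursion and the uniformly convergent Perron series are smooth in their arguments, uniform-in-$j$ derivative bounds propagating through the weighted norm, and $\mathrm{Id}-D_p\mathcal T_{y_0}$ is invertible at the fixed point since $\norm{D_p\mathcal T_{y_0}}<1$ on $\mathcal B$; the Banach implicit function theorem then makes $y_0\mapsto p(y_0)$ smooth, hence $h$, a bounded-linear ($E_0$-coordinate) image of $p(y_0)$, is smooth. Finally, if $(x_0,y_0)$ lies on the graph $\{(h(y),y):y\in B_{F_0,\rho}\}$, the orbit it generates under the recursion is precisely $p(y_0)\in\mathcal X$, so $\norm{x_j}+\norm{y_j}\le\mu^j\norm{p(y_0)}_\mu\to0$, which is the claim.

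The main obstacle is the $E$-direction: because $A_j$ need not contract, one cannot iterate forward there and must instead represent $x_j$ by the backward Perron series, and the gap hypothesis $\sup_{j,k}\norm{A_j^{-1}}\norm{C_k}<1$ is used precisely to select a weight $\mu$ for which both that series and the telescoped boundary term converge. The remaining work is bookkeeping, the one point needing care being that $\delta$ and $\rho$ be chosen independently of $j$ — which is exactly what the ``uniformly in $j$'' hypotheses on $f_j,g_j$ and the finiteness of the suprema $a,\lambda,b$ guarantee.
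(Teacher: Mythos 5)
Your proof is correct in outline, and since the paper itself does not prove this lemma --- it defers entirely to \cite{B09}, adding only the remark that the relevant Banach space is the weighted sequence space with norm $\sup_j\mu^{-j}\max\{\norm{x_j},\norm{y_j}\}$ for $\mu\in(\norm{C_j},\norm{A_j^{-1}}^{-1})\cap(0,1)$ --- your Lyapunov--Perron argument supplies essentially the proof the paper is pointing at: your window $\lambda<\mu<\min(1,1/a)$ is exactly that admissible interval, the backward Perron series in the expanding direction is the standard device, and your verification that fixed points of ${\cal T}_{y_0}$ in the weighted space coincide with decaying orbits (via the telescoped boundary term $a^{-j}(a\mu)^N\norm{p}_\mu\to0$) is right.

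There is, however, one step that does not hold as written: the claim that ${\cal T}_{y_0}$ is $\cO(\delta)$-Lipschitz on the $\delta$-ball. With $\tilde x_j$ defined through the \emph{input} sequence $y_{k-1}$, the difference ${\cal T}_{y_0}(p)-{\cal T}_{y_0}(p')$ contains the purely linear contribution $-\sum_{k>j}\bigl(\prod_{i=j+1}^{k}A_i^{-1}\bigr)B_k(y_{k-1}-y'_{k-1})$, whose weighted norm is of order $ab/(1-a\mu)\cdot\norm{p-p'}_\mu$; the hypothesis only gives $b=\sup_j\norm{B_j}<\infty$, not small, so this operator is not a contraction in general (already in the linear case $f_j=g_j=0$ its Lipschitz constant in the $x$-slot is of this order). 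Your verbal argument --- ``feeding the $\cO(\delta)$-small differences of $\tilde y$ into the $\tilde x$-series'' --- is in fact the correct fix, but it presupposes a different operator from the one you displayed: compute $\tilde y$ first from $(y_0,p)$ and then put $B_k\tilde y_{k-1}$, not $B_k y_{k-1}$, into the Perron series. Since $\tilde y=y$ at a fixed point, this change does not alter the fixed-point equation, and the modified operator really is $\cO(\delta+\rho)$-to-$\cO(\delta)$-Lipschitz as you claim (alternatively, one can keep your operator and note that its square is a contraction). With that repair, and granting the routine verification that the uniform-in-$j$ smoothness of $f_j,g_j$ makes $(y_0,p)\mapsto{\cal T}_{y_0}(p)$ a smooth map of Banach spaces so the implicit function theorem applies, the argument is complete and consistent with the source the paper cites.
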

The proof of the stable manifold theorem can be found in \cite{B09}. And we think there is no other short way to claim the existence of the R.G. flow except this lemma. Here the point is  the smoothness of the maps and the completeness of the Banach space play important roles in the proof the stable manifold theorem.

By the remark in \cite{B09}, if we define a Banach space $\cZ$ with the norm
\be
\norm{u}=\sup_j \mu^{-j} \max\{\norm{x_j},\norm{y_j}\}
\ee
then by the proof of Theorem \cite{B09}, the parameter $\mu$ should satisfy
\be   \mu\in \left(\norm{C_j},\norm{A^{-1}_j}^{-1}\right)\cap(0,1)\qquad\text{for any $j$} \label{07082011}\ee
Also it is easy to see that \[\norm{x_j}\leq \mu^j \norm{u}\qquad \text{and}\qquad\norm{y_j}\leq \mu^j \norm{u}\]

Now let us specify the Banach space on which we need $K_{j+1}$ is smooth. Note that $K_{j+1}$ can be regarded as
\be
K_{j+1}=K_{j+1}(\sigma_j,K_{j})
\ee
is a map from a ball $\bR  \times \cB_{j}$ as Section 6.1 of \cite{B09}. Here each $\cB_j$ is a Banach space with the norm on the corresponding polymer activities. The completeness of the space under this norm is also very crucial in the proof of the stable manifold theorem. And we assume the domain of $K_{j+1}$ is a ball in $\bR  \times \cB_{j}$ with radius $r$ defined by
\be
\abs{\sigma_j}\leq r\qquad  \norm{K_j}\leq r   \label{062820113}
\ee


\subsection{Main Result on Partition Function}

Finally, based on the previous analysis, we can conclude as the following result, which is also the restatement of Theorem \ref{pthm}: 
\begin{thm} \label{pproof}
For sufficiently small $z$ and $\sigma_0=\sigma(z)$, and $0<\epsilon<1$,
\be
Z(\La_M,z,\sigma(z))=e^{\cE_j}\int \cE xp(\Box e^{-V_j} +K_j)(\La_{M-j},\phi)  d\mu_{\beta v_{M-j}}(\phi)
\ee
hold for any $0\leq j\leq M-1$, where
\begin{align}
V_j(\Delta)&=\sigma_j\int_{\Delta} (\partial\phi)^2\\
\cE_j&=\sum_{k=0}^{j-1}\de\cE_k\\
\sigma_j&=\sigma_0-\sum_{k=1}^{j}\de\sigma_k
\end{align}
and with such $\sigma_0$, the flow $(\sigma_j,K_j)\rightarrow (0,0)$. And $\dim \overline{K}_j \geq 4$.

More precisely, they satisfy the bounds: 
\begin{align}
\abs{\delta \cE_j}&\leq \tilde{C}(L)\de^j\epsilon \abs{\La_{M-j}} \\
\abs{ \sigma_j}&\leq  \de^j\epsilon  \\
\norm{K_j}_j&\leq \de^j \epsilon \label{e1}
\end{align}
where $\delta=\cO(1)\max({L^{2-\beta/4\pi},L^{-2}})$, $\cO(1)$ is a constant independent of $L$ and $j$, and $\tilde{C}(L)$ is a constant which  only depends on $L$.

\end{thm}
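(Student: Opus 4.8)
The plan is to iterate the one-step renormalization map of Section \ref{fundmentaltransform} starting from the initial data described just before Lemma \ref{k00estimate}, and then to invoke the Stable Manifold Theorem (Lemma \ref{SMT}) to pin down $\sigma_0=\sigma(z)$ so that the $j$-dependent smallness hypotheses of Theorem \ref{RGcontrol} are preserved along the entire flow. \textbf{Step 1.} First I would record the data at scale $j=0$: the factorization preceding Lemma \ref{k00estimate} gives $Z(\La_M,z,\sigma_0)=\int\cE xp(\Box e^{-V_0}+K_0)(\La_M,\phi)\,d\mu_{\beta v_M}(\phi)$ with $\cE_0=0$, $V_0(\De)=\sigma_0\int_\De(\pa\phi)^2$, and $K_0$ as displayed there; Lemma \ref{k00estimate} gives $\norm{K_0}_0\le\cO(1)\abs{z}^{1-\ep}$, and the $j=0$ case of Lemma \ref{ve000} gives $\dim\overline{K}_0\ge4$. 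Hence for $\abs{z}$ small the hypotheses of Theorem \ref{RGcontrol} hold at $j=0$, the only caveat being that $K_0$ itself still depends on the unknown $\sigma_0$; this is dealt with in Step 3.

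\textbf{Step 2.} Next I would assemble one RG step. Combining the fluctuation formula (\ref{flucformula}), the rescaling formula (\ref{scaleformula}) and the extraction formula (\ref{extraformula}) with Lemma \ref{ve000}, one pass sends $\cE xp(\Box e^{-V_j}+K_j)$ integrated against $d\mu_{\beta v_{M-j}}$ to $e^{\de\cE_j}$ times $\cE xp(\Box e^{-V_{j+1}}+K_{j+1})$ integrated against $d\mu_{\beta v_{M-j-1}}$, with $\de\cE_j=\de E_j\,\abs{\La_{M-j-1}}$, $\sigma_{j+1}=\sigma_j+\alpha_j(K_j)$ and $K_{j+1}=\cL_jK_j+g_j(\sigma_j,K_j)$, where by Theorem \ref{RGcontrol} the maps $\alpha_j$ and $\cL_j=\cE_1\cF_1$ are linear and bounded uniformly in $j$, $\norm{\cL_jK_j}_{j+1}\le\de\norm{K_j}_j$ with $\de=\cO(1)\max\{L^{-2},L^{2-\beta/4\pi}\}$, $g_j$ is smooth with $j$-uniform derivative bounds and $g_j(0,0)=Dg_j(0,0)=0$, and $\abs{\de E_j}\le\tilde C(L)(\norm{K_j}_j+\abs{\sigma_j})$ (the $\phi$-independent term $\cO(1)L^2\sigma_j\De C(0)$ of the extraction being absorbed only into $\de E_j$).

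\textbf{Step 3.} I would then match this recursion with the scheme of Lemma \ref{SMT}: the marginal variable is $x_j=\sigma_j$ with $A_j=\mathrm{id}$ and $B_j=\alpha_j$, and the contracting variable is $y_j=K_j$ with $C_j=\cL_j$. Since $\norm{A_j^{-1}}=1$, $\sup_j\norm{C_j}\le\de<1$ and $\sup_j\norm{B_j}<\infty$ once $L$ is large and $\beta>8\pi$, Lemma \ref{SMT} produces $\rho>0$ and a smooth $h\colon B_{F_0,\rho}\to B_{E_0}$ whose graph is invariant and attracted to the origin, and the choice of $\mu$ in (\ref{07082011}) gives, for any $\mu\in(\de,1)$, the bounds $\abs{\sigma_j}\le\mu^j\norm{u}$ and $\norm{K_j}_j\le\mu^j\norm{u}$. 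It remains to choose $\sigma_0$ with $(\sigma_0,K_0(z,\sigma_0))$ on this graph, i.e. $\sigma_0=h(K_0(z,\sigma_0))$; since $\pa_{\sigma_0}K_0$ is $\cO(\abs{z}^{1-\ep})$ in the scale-$0$ norm while $h$ has bounded derivative and $h(0)=0$, the map $\sigma_0\mapsto h(K_0(z,\sigma_0))$ is a contraction on a small ball for $\abs{z}$ small, and its unique fixed point is the desired $\sigma_0=\sigma(z)$, which tends to $0$ as $z\to0$.

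\textbf{Step 4.} Finally I would close the loop and read off the bounds. Enlarging the $\cO(1)$ in $\de$ so that $\mu=\de$ is admissible in Step 3, and shrinking $\abs{z}$ so that $\norm{u}\le\ep$, gives $\abs{\sigma_j}\le\de^j\ep$ and $\norm{K_j}_j\le\de^j\ep$; since $\de<\tfrac12$ for $L$ large these imply $\abs{\sigma_j}<2^{-j}\ep$ and $\norm{K_j}_j<2^{-j}\ep$, so Theorem \ref{RGcontrol} applies legitimately at every scale, the iteration of Step 2 is valid, and $\dim\overline{K}_j\ge4$ propagates by the induction in Theorem \ref{RGcontrol}. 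The energy estimate follows from $\abs{\de E_j}\le\tilde C(L)(\norm{K_j}_j+\abs{\sigma_j})\le\tilde C(L)\de^j\ep$ and $\abs{\La_{M-j-1}}=L^{-2}\abs{\La_{M-j}}$, so $\abs{\de\cE_j}\le\tilde C(L)\de^j\ep\,\abs{\La_{M-j}}$; and $\cE_j=\sum_{k=0}^{j-1}\de\cE_k$, $\sigma_j=\sigma_0-\sum_{k=1}^{j}\de\sigma_k$ with $\de\sigma_k=-\alpha_{k-1}(K_{k-1})$ are just the telescoping of Step 2. \emph{The main obstacle} is Step 3: the bound of Theorem \ref{RGcontrol} is only valid under a $j$-dependent smallness condition that is only known to hold once the flow converges, i.e. once one sits on the stable manifold, so this circularity has to be broken through Lemma \ref{SMT}, and for that lemma one needs the one-step map to be genuinely defined and smooth on a full neighbourhood of the origin in the scale-dependent Banach spaces $\cB_j$ --- which is exactly where the completeness built into the norm of Section \ref{NormSection} and the uniform smoothness from Step 2 are used; the dependence of $K_0$ on $\sigma_0$ is the minor extra twist handled by the auxiliary contraction.
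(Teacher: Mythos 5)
Your overall architecture (initial cluster expansion, one RG step via Theorem \ref{RGcontrol}, Stable Manifold Theorem, tuning of $\sigma_0$) is the same as the paper's, and your explicit fixed-point argument for solving $\sigma_0=h(K_0(z,\sigma_0))$ is a nice touch that the paper leaves implicit. However, Step 3 has a genuine gap at exactly the point you flag as ``the main obstacle'' but do not actually resolve. You match the recursion to Lemma \ref{SMT} with $x_j=\sigma_j$, $A_j=\mathrm{id}$, $C_j=\cL_j$ on fixed balls. But Lemma \ref{SMT} requires the nonlinear parts $g_j$ to be smooth \emph{uniformly in $j$ on balls of a fixed radius $r$}, and in the un-rescaled variables this fails: the Cauchy bound on $\cR_{\ge 2}$ in the proof of Theorem \ref{RGcontrol} controls $\norm{\cR(sK_j,sF_j)}_{j+1}$ by $\cO(1)DL^2(\norm{K_j}_j+\abs{\sigma_j/\de\kappa_j})$, and since $\de\kappa_j=2^{-j-1}\kappa_0$ shrinks geometrically, $\abs{\sigma_j/\de\kappa_j}=2^{j+1}\abs{\sigma_j}/\kappa_0$ is unbounded in $j$ on any fixed ball $\abs{\sigma_j}\le r$. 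This is precisely why the paper's Remark after Theorem \ref{RGcontrol} says that theorem ``cannot be iterated until further modification.''

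The paper breaks the circularity by the change of variables $\tilde\sigma_j=2^j\sigma_j$, $\tilde K_j=2^jK_j$, which converts the $j$-dependent smallness conditions $\abs{\sigma_j}<2^{-j}\epsilon$, $\norm{K_j}_j<2^{-j}\epsilon$ into fixed balls of radius $\epsilon$, makes $\abs{\sigma_j/\de\kappa_j}=2\abs{\tilde\sigma_j}/\kappa_0$ uniformly bounded, and lets the $2^{j+1}$ prefactor in $\tilde g_j(\tilde\sigma_j,\tilde K_j)=2^{j+1}g_j(2^{-j}\tilde\sigma_j,2^{-j}\tilde K_j)$ be absorbed by the quadratic vanishing of $g_j$ at the origin. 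The price is $A_j=2$ and $C_j=2\cL_j$, which still satisfy $\sup_{j,k}\norm{A_j^{-1}}\norm{C_k}\le\de<1$ for $L$ large, and the final bounds $\abs{\sigma_j},\norm{K_j}_j\le\de^j\epsilon$ are then read off from $\norm{\tilde\sigma_j},\norm{\tilde K_j}\le\mu^j\norm{u}$ after undoing the rescaling. Without this (or an equivalent) modification, your invocation of Lemma \ref{SMT} in Step 3 does not meet its hypotheses, and the subsequent steps, although correctly organized, rest on an unjustified application of the theorem.
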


\begin{proof} \mbox{}

First we need to show the existence of the tuning coupling constant by verifying the hypothesis of the stable manifold theorem Lemma \ref{SMT}. The argument is analogous to dipole gas model \cite{D09}. However here the hypothesis in Theorem \ref{RGcontrol}, such as $\sigma_j<2^{-j}\epsilon$, does not fit in the Stable Manifold Theorem \ref{SMT} very well. To show the existence, further modification is needed.

So far we have established the estimate $(V_j,K_j)\rightarrow (V_{j+1},K_{j+1})$, which may also be regarded as $(\sigma_j,K_j)\rightarrow (\sigma_{j+1},K_{j+1})$. Now we write 
\be\tilde{\sigma_j}=2^j\sigma_j\qquad\mbox{and}\qquad\tilde{K_j}=2^jK_j \label{05022011}\ee
Then this implies the radius $r=\epsilon$ of the ball in \ref{062820113}  of  the Theorem \ref{SMT}:
 \be   \abs{\tilde{\sigma_j}}\leq \epsilon     \qquad\mbox{and}\qquad \norm{\tilde{K_j}}_j\leq \epsilon \ee
By our construction as Theorem \ref{RGcontrol}, we have
\be
\tilde{\sigma}_{j+1}=2\tilde{\sigma_j}+2\alpha_j(\tilde{K_j})
\ee
so $A_j=2, B_j=\alpha_j$ and $f_j(0,0)=Df_j(0,0)=0$ in Lemma \ref{SMT}.

And for
\be
\tilde{K}_{j+1}=2\cL_j\tilde{K_j}+2^{j+1}g_j(2^{-j}\tilde{\sigma_j},2^{-j}\tilde{K_j})
\ee
i.e. $C_j=2\cL_j$ in Lemma \ref{SMT}.
Since by our previous estimates  $\cL_j$ is contractive with the factor $\delta$, so if  $L$ large, then $\delta$ is small,  and we have
\be
\sup_{j,k}\norm{A_j^{-1}}\cdot\norm{C_k}<1 \qquad \mbox{and}\qquad\sup_j\norm{C_j}<1
\ee
And the higher orders of $g_j$ satisfy $g_j(0,0)=Dg_j(0,0)=0$.

Let \[\tilde{\alpha}_j(\tilde{K}_j)=2\alpha_j(\tilde{K}_j)\qquad \mbox{and}\qquad \tilde{g_j}(\tilde{\sigma}_j,\tilde{K}_j)=2^{j+1}g_j(2^{-j}\tilde{\sigma}_j,2^{-j}\tilde{K}_j)\] 
First we need to check the smoothness and uniform bounds of new $\tilde{f}_j$ and $\tilde{g}_j's$ in $j$. And for $\tilde{g}_j$, the smoothness condition follows from Theorem \ref{RGcontrol} similarly. And for the uniform boundedness, we consider
\be
\begin{split}
&\tilde{g}_j(\tilde{\sigma}_j,\tilde{K}_j)\\
=&\tilde{g}_j(\tilde{\sigma}_j,\tilde{K}_j)-\tilde{g}_j(0,0)\\
=&\int_0^1 \frac{d}{dt}\tilde{g}_j(t\tilde{\sigma}_j,t\tilde{K}_j) dt\\
=&\int_0^1 \left( \tilde{\sigma}_j D_{\tilde{\sigma_j}} \tilde{g}_j   +  < \tilde{K}_j,  D_{\tilde{K_j}} \tilde{g}_j  >                                     \right) dt\\
=&\int_0^1 \left( 2\tilde{\sigma}_j D_{\sigma_j} g_j (2^{-j}\tilde{\sigma}_j,2^{-j}\tilde{K}_j)  +   2<\tilde{K}_j , D_{K_j} g_j                                    (2^{-j}\tilde{\sigma}_j,2^{-j}\tilde{K}_j) >  \right) dt
\end{split} \ee
The last step is by chain rule of the Frechet derivatives. Then by applying norms onto the above equation, the uniform bound follows similarly as Theorem \ref{RGcontrol}. And the derivatives argument is also similar. 


This completes the verification of the stable manifold theorem, so there exists $\tilde{\sigma}_0>0$ such that the flow $(\tilde{\sigma}_j,\tilde{K}_j)\rightarrow 0$. Therefore immediately we have the existence of the flow $(\sigma_j,K_j)$ as a consequence.

The forms  of $V_j$  follow from Lemma \ref{ve000}. And by the construction $\sigma_{j+1}=\sigma_j-\delta\sigma_j$.\,  we are removing a small quantity from $\sigma_0=\sigma$ each time during the renormalization group transformations.

 We prove the estimates on $K_j$ by induction. For the initial cluster expansion, by Lemma \ref{k00estimate}, $\norm{K_0}_0\leq \cO(1)\delta^0\abs{z}^{1-\epsilon}\leq \cO(1)\delta^0\epsilon$ for the activity $\abs{z}\leq \cO(1)\epsilon$.
Note that for $L$ large, the choice such as $\mu=3\sup_j\norm{\cL_j}=\de$ fits the condition \ref{07082011}.  Therefore by the stable manifold theorem, we have for all $j$
\be
\norm{K_j}_j\leq \de ^j \norm{u}\leq \de ^j\epsilon \quad\text{and}\quad \abs{\sigma_j}\leq \de ^j \norm{u} \leq\de ^j \epsilon
\ee 
Then by Lemma \ref{ve000},  \ref{07052011} and \ref{07122011d}, we have
\be
\begin{split}
\abs{\delta \cE_j}
&=\delta E_j\abs{\Lambda_{M-j-1}} \leq \norm{(\cS_1\cB_1 K_j)^{\natural}}_j\abs{\Lambda_{M-j-1}} \\
&  \leq \cO(1)(L^2\norm{ K_j}_j+\cO(1)AL^2\abs{\sigma_j})\abs{\Lambda_{M-j-1}}        \leq \tilde{C}(L)\de^j\epsilon\abs{\La_{M-j}}
\end{split}
\ee
This completes the proof of Theorem \ref{pproof}.
\end{proof}

\section{Extension to Correlations}

This section is the extension of the renormalization group analysis in the previous sections onto the correlation problem.

The generating functional for charge correlations is defined as:
\be  
Z(\rho)=<e^{i(\partial\phi,\rho)}>=Z(0)^{-1}\int e^{i(\partial\phi,\rho)} e^{zW-\sigma V} d\mu_{\beta v_M}(\phi)
\ee
where $(\phi,\rho)=\int \phi(x)\rho(x)dx$ and $Z(0)$ is the partition without any external charges.

We consider particularly \be(\rho,\partial\phi)=\la_1\bn_1\cdot\partial\phi(a)+\la_2\bn_2\cdot\partial\phi(b)\ee for $a,b\in\La_M$ , $\abs{a-b}\gg1$ and $\bn_1$ and $\bn_2$ are any unit vectors. Then $\int \rho=0$. 
Note that here we are anticipating the fact that at the low temperature the external charges will form into dipoles  with the nearest opposite charges, therefore the natural consideration of correlations shall be the dipole-dipole correlation, or $\partial\phi$ correlation analogue to dipole gas correlations. And we will need the analyticity in $\la$ on a small ball with the radius $\lambda_0$ ,i.e. $\abs{\la_1}\leq \la_0$ and $\abs{\la_2}\leq \la_0$. 

Then the truncated dipole$-$dipole field correlation function will be
\be  
\left<\partial\phi(a)\partial\phi(b)\right>^T=(-1)\cdot \left[\frac{\partial^2}{\partial \la_1\partial\la_2}\log Z(\rho) \right] \Big|_{\rho=0}
\ee

\subsection{R.G. Analysis}

\subsubsection{Initial Cluster Expansion}
As the vacuum case, the initial cluster expansion is the basic set up for the iterating of renormalization group transformations. Now for the non-vacuum case, we still write the interaction as a sum over the unit blocks, make a Mayer expansion, then group together into connected polymers.
And we will use $2\pi$ translation invariance as the previous argument. That is the advantage of $\partial\phi$ correlation instead of $\phi$.
\begin{align}
&Z'(\rho)\\=&\int \exp(i(\partial\phi,\rho)+zW(\La,\phi)-\sigma V(\La,\phi))d\mu_{\beta v_M}(\phi)\\
=&\int \exp(i(\la_1\bn_1\cdot\partial\phi(a)+\la_2\bn_2\cdot\partial\phi(b))+zW(\La,\phi)\sigma V(\La,\phi))d\mu_{\beta v_M}(\phi)
\end{align}

Let
\be
(\partial \phi,\rho)=\sum_{\De\subset\La_M}(\partial \phi,\rho_{\De})
\ee
where 
\be
(\partial \phi,\rho_{\De})=\la_1\bn_1\cdot\partial\phi(a)\chi_{\De\ni a}+\la_2\bn_2\cdot\partial\phi(b)\chi_{\De\ni b}
\ee

Now we write the interaction term as:
\be
\begin{split}
&\exp \left(i(\partial\phi,\rho)+zW(\La_M)-\sigma V(\La_M)\right) \\
=&\prod_{\Delta\subset\La_M} \exp \left( i(\partial \phi,\rho_{\De})+zW(\Delta)-\sigma V(\Delta)\right)\\
=&\prod_{\Delta\subset\La_M} \left[  \left(e^{zW(\Delta)+i(\partial \phi,\rho_{\De})}-1 \right)e^{-\sigma V(\Delta)}   +e^{-\sigma V(\Delta)}  \right] \\
=&\sum_{X\subset \La_M}   \prod_{\Delta\in X}  \left(e^{zW(\Delta)+i(\partial \phi,\rho_{\De})}-1 \right)e^{-\sigma V(\Delta)}   \prod_{\Delta\notin X}   e^{-\sigma V(\Delta)}\\
=&\cE xp\left(\Box e^{-V_0(\phi)}+K_0(\phi,\rho)\right)(\La_M)
\end{split} \ee
where 
\be
\begin{split}
K_0(X,\rho)=&\prod_{\Delta\subset X}K_0(\Delta,\phi,\rho)\\
=&\prod_{\Delta\subset X}(\exp(i(\partial \phi,\rho_{\De})+zW(\Delta,\phi))-1)e^{-\sigma V(\Delta,\phi)}
\end{split}
\ee
Note that the non-vacuum activities are $\rho-$ dependent. Furthermore, they have \textit{pinning property} , i.e, $K_0(X,\rho)=K_0(X,\rho=0)$ unless the polymer $X$ contains one or both external charges $a$ and $b$. Later the pinning property will guarantee the convergence of the series.  First of  all, we need an estimate on $K_0$. Note as the vacuum case, the estimate on $V_0$ is trivial.

\begin{lem}Estimate on norm of $K_0$
\label{07162011c}

For $\abs{z}$ sufficiently small and $0<\epsilon<1$,
\begin{align}
\norm{K_0(X)}_{h,G} \leq
\begin{cases}
(\abs{z}^{1-\epsilon})^{\abs{X}} & \text{if $X$ contains no external charges;}\\
(\abs{z}^{1-\epsilon})^{\max(\abs{X}-m,0)}C_{\la} & \text{if $X$ contains $m$ external charges.}
\end{cases}
\end{align}
where  $C_{\la}$ is a  constant
 \be C_{\la}= \la_0\exp(h+\abs{z}\la_0^{-1}e^{2h})\ee
 with $\la_0$ small.
\end{lem}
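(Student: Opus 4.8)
The plan is to factorize $K_0(X)$ over unit blocks using the norm property (\ref{normproperty}) and then estimate the single-block activities according to whether the block contains an external charge. Since $K_0(X,\phi,\rho)=\prod_{\Delta\subset X}K_0(\Delta,\phi,\rho)$ and each factor depends on $\phi$ only in $\Delta$, the multiplicativity of the norm gives $\norm{K_0(X)}_{h,G}\leq\prod_{\Delta\subset X}\norm{K_0(\Delta)}_{h,G}$, so it suffices to bound each $\norm{K_0(\Delta)}_{h,G}$. For a block $\Delta$ containing no external charge, $\rho_{\Delta}=0$ and $K_0(\Delta)=(e^{zW(\Delta)}-1)e^{-\sigma V(\Delta)}$, which is exactly the vacuum single-block activity already estimated in the proof of Lemma \ref{k00estimate}: $\norm{e^{zW(\Delta)}-1}_{h,G=1}\leq\abs{z}^{1-\epsilon}$ (via the contour integral $e^{zW}-1=\frac{1}{2\pi i}\int\frac{e^{z\zeta W}}{\zeta(\zeta-1)}d\zeta$ over $\abs{\zeta}=\abs{z}^{-1+\epsilon/2}$) and $\norm{e^{-\sigma V(\Delta)}}_{h,G}\leq 2$ by Lemma \ref{06152011}. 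This yields the first case, $\norm{K_0(X)}_{h,G}\leq(\abs{z}^{1-\epsilon})^{\abs{X}}$, after absorbing the harmless factors of $2$ into the implicit $\cO(1)$ as in Lemma \ref{k00estimate}.

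For a block $\Delta$ containing an external charge (either $a$ or $b$), $K_0(\Delta)=(e^{zW(\Delta)+i(\partial\phi,\rho_\Delta)}-1)e^{-\sigma V(\Delta)}$, and here we lose the smallness in $z$ but gain smallness in $\lambda_0$. The key step is to estimate $\norm{e^{zW(\Delta)+i(\partial\phi,\rho_\Delta)}-1}_{h,G=1}$. I would again write this as a contour integral $\frac{1}{2\pi i}\int\frac{e^{\zeta(zW(\Delta)+i(\partial\phi,\rho_\Delta))}}{\zeta(\zeta-1)}d\zeta$, but now choose the contour radius $\abs{\zeta}=\lambda_0^{-1}$ (tuned to the external-charge scale rather than the activity scale). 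On this contour one needs $\norm{e^{\zeta z W(\Delta)}}_{h,1}\leq\exp(\abs{\zeta}\abs{z}e^{2h})=\exp(\abs{z}\lambda_0^{-1}e^{2h})$ from the computation in Lemma \ref{k00estimate}, and $\norm{e^{i\zeta(\partial\phi,\rho_\Delta)}}_{h,1}\leq\exp(\abs{\zeta}\cdot\abs{\lambda_i}\cdot\norm{\bn_i\cdot\partial\phi}\cdot(\text{derivative factor}))$; since $(\partial\phi,\rho_\Delta)$ is linear in $\phi$ its first $\phi$-derivative has norm $\leq\abs{\lambda_i}$ (using $\norm{\bn_i\cdot\partial(\cdot)}_{C^r}\leq 1$ on a unit block in the relevant norm) and higher derivatives vanish, contributing at most $\exp(h\abs{\lambda_i}\lambda_0^{-1})\leq\exp(h)$ when $\abs{\lambda_i}\leq\lambda_0$. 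Multiplying by the residue factor $\cO(1)\lambda_0$ from $\oint\frac{d\zeta}{\zeta(\zeta-1)}$ evaluated on $\abs{\zeta}=\lambda_0^{-1}$ gives $\norm{e^{zW(\Delta)+i(\partial\phi,\rho_\Delta)}-1}_{h,1}\leq\lambda_0\exp(h+\abs{z}\lambda_0^{-1}e^{2h})=C_\lambda$. Combined with $\norm{e^{-\sigma V(\Delta)}}_{h,G}\leq 2$ and the factorization over the remaining $\max(\abs{X}-m,0)$ charge-free blocks, this gives the second case.

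The main obstacle I anticipate is the bookkeeping on the norm $\norm{i(\partial\phi,\rho_\Delta)}_{h,G=1}$: one must check that the linear functional $\phi\mapsto\bn_i\cdot\partial\phi(a)$ is bounded in the strong polymer-activity norm (\ref{varnorm}), i.e. that evaluating a derivative of $\phi$ at a point is controlled by the $C^r(\Delta)$ norm of test fields, which is where $r\geq 1$ and the Sobolev embedding $H(X)\hookrightarrow C^r(X)$ enter; and that the large-field regulator $G$ does not interfere, since $(\partial\phi,\rho_\Delta)$ grows only linearly in $\phi$ and hence, after the contour shift, contributes an exponential of a linear form that is absorbed harmlessly (as in the $e^{zW}$ estimate, where $G=1$ suffices because $W$ is bounded — here $(\partial\phi,\rho_\Delta)$ is unbounded in $\phi$ but linear, so one should keep track that the $h$-sum still converges and the regulator stays trivial on this factor). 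A secondary point is ensuring $\abs{z}\lambda_0^{-1}$ is small, i.e. that $\lambda_0$ is not taken too small relative to $z$; this is consistent with the hypothesis "$\abs{z}$ sufficiently small" and "$\lambda_0$ small" provided one fixes $\lambda_0$ first and then shrinks $\abs{z}$, which is the natural reading of the statement.
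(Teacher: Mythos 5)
Your proposal is correct and follows essentially the same route as the paper: factorization of $\norm{K_0(X)}$ over unit blocks, reduction of charge-free blocks to the vacuum estimate of Lemma \ref{k00estimate}, and for a block containing $a$ or $b$ a Cauchy/residue formula on the $z$-independent contour $\abs{\gamma}=\la_0^{-1}$ applied to $e^{zW(\De)+i(\pa\phi,\rho_\De)}-1$, yielding exactly $C_\la=\la_0\exp(h+\abs{z}\la_0^{-1}e^{2h})$. The only cosmetic difference is that the paper bundles $zW(\De)+i\la_1\bn_1\cdot\pa\phi(a)$ into a single activity $\tilde W$ and resums its functional derivatives (using $\|\tilde W_n\|\leq\abs{z}+\abs{\la_1}$, where the $\abs{\la_1}$ comes from the point evaluation $\bn_1\cdot\pa f_1(a)$ bounded by $\|f_1\|_{C^r(\De)}\leq 1$, precisely the bookkeeping point you flag), rather than splitting the exponential into two factors.
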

\begin{proof}

\mbox{}

If both $a\notin \Delta$ and $b\notin \Delta$, then by the pinning property $K_0(\Delta,\rho)=K_0(\Delta,\rho=0)$, we have
\be
\norm{K_0(\Delta,\rho)}_{h,G}\leq \cO(1)\abs{z}^{1-\epsilon}
\ee

If  $a\in \Delta$, then by the assumption that $a$ and $b$ are far apart, $b\notin \Delta$ since $\Delta$ is the unit block. Then say
\be
K_0(\Delta,\rho)=
\exp(i(\la_1\bn_1\cdot\partial\phi(a))+zW(\Delta,\phi))-1
\ee
Let $\tilde{W}(\rho)=\tilde{W}(\Delta,\rho,z,\phi):=i\la_1\bn_1\cdot\partial\phi(a)+zW(\Delta,\phi)$, then $K_0=e^{\tilde{W}(\rho)}-1$. Note that the estimate on $e^{\sigma V}$ term is given by Lemma \ref{k0estimate}. We shall estimate the norm of $K_0$ following by the similar steps as the vacuum case. 
\begin{align}
&\tilde{W}_n(\Delta,\phi,\rho;f_1,\dots,f_n)\\
=& \frac{\de^n}{\de t_1\dots\de t_n}  \tilde{W}(X,\phi+t_1f_1+\dots+t_nf_n,\rho) \Big| _{t=0}\\
=& zW_n(\Delta,\phi;f_1,\dots,f_n)\\
&\qquad\qquad+\frac{\de^n}{\de t_1\dots\de t_n}  i\la_1\bn_1\cdot(\partial\phi(a)+t_1\partial f_1(a)+\dots+t_n\partial f_n(a)) \Big| _{t=0}\\
=&
\begin{cases}
zW_n(\Delta,\phi;f_1,\dots,f_n)+i\la_1\bn_1\cdot\partial f_1(a) & \text{if $n=1$}\\
zW_n(\Delta,\phi;f_1,\dots,f_n) &\text{if $n \geq 2$} 
\end{cases}
\end{align}
which implies
\begin{align}
\norm{\tilde{W}_n(\Delta,\phi,\rho)}
 =&\sup_{f_i \in H(X)\atop \|f_i\|_{\infty,r,X} \leq 1} \abs{\tilde{W}_n(\Delta,\phi,\rho;f_1,...,f_n)}\\
\leq & \abs{z}\norm{W_n(\Delta,\phi)}+\abs{\la_1}\\
\leq & \abs{z}+\abs{\la_1}
\end{align}
The last step is by the direct computation that $\norm{W_n(\Delta,\phi)}\leq 1$.
Furthermore, the $n-th$ derivative is bounded by
 \begin{align}
  \frac{(2h)^n}{n!}\norm{\left(e^{\tilde{W}(\De)}\right)_n (\phi)}
 &\leq \exp\left[ \sum_{n=0}^{\infty}  \norm{\tilde{W}_n(X,\phi,\rho)}              \right]\\
 &\leq \exp\left[\abs{\la_1}h+\sum_{n=0}^{\infty} \frac{(2h)^n}{n!} |z| \right]
 \end{align}
  so by taking the supremum over $\phi$, multiplying by $2^{-n}$ and summing over $n$, we have
 \be
 \norm{e^{\tilde{W}(\De)}}_{h,G=1}\leq2\exp\left[\abs{\la_1}h+ |z|e^{2h}  \right] \ee
 For the last step, we use the Residue Formula:
 \be
 e^{\tilde{W}(\De)}-1=\frac{1}{2\pi i}\int_{|\gamma|=\la_0^{-1}}\frac{e^{\gamma \tilde{W}(\De)}}{\gamma(\gamma-1)} d\gamma
 \ee
Note that the choice of the contour $|\gamma|=\la_0^{-1}$ here is different from the vacuum case, and should be $z-$independent otherwise it will be out of control because of  the external $\rho$.  Therefore, if $\la_0$ is sufficiently small, then
 \be
 \begin{split}
 \norm{e^{\tilde{W}(\De)}-1}_{h,G=1}&\leq\frac{1}{\pi} \la_0\norm{e^{\gamma \tilde{W}(\De)}}_{h,G=1}\leq \frac{2\la_0}{\pi}\exp[|\gamma|\cdot(|\la_1|h+|z|e^{2h})]\\
 & \leq \la_0\exp(h+\abs{z}\la_0^{-1}e^{2h})
 \end{split}
 \ee
Finally, by the factorization property of $K_0$, we complete the proof of the conclusion.
\end{proof}
\begin{remark}
\mbox{}
\begin{enumerate}
	\item Note for these non-vacuum polymer activities, they are not necessarily  small quantities even if the activity $\abs{z}$ is sufficiently small. For example, if for a unit block which contains the external charge $a$, $\abs{X}=m=1$, then $\norm{K_0(X)}$ will be constant quantities. But when the radius of analyticity $\la_0$ is small, $K_0$ will be sufficiently small.
	\item  If there are infinitely many external charges,  our argument will fail since it will lead us to a divergent series. However, if there are only finitely many different terms from vacuum case, the pinning property will provide the necessary conditions, since the difference between them is just a few unit blocks by the factorization property of norms on polymer activities.

\end{enumerate}

\end{remark}

\subsubsection{$\rho-$dependent Extraction} \label{05252011}
Now to continue the following renormalization group transformations, we need to construct the $\rho-$dependent polymer activities $V_j$ and $K_j$ in each step, which are mainly deduced by the extraction we made in each step. 
As the vacuum case, the extraction is still made from the neutral sectors 
\be\bar K(X,\phi)=\frac{1}{2\pi}\int_{-\pi}^{\pi} K(X,\Phi+\phi)d\Phi\ee 
on small sets, and composed of two pieces: relevant and marginal $K(X,\phi,\rho=0)$ and $\rho-$dependent $K(X,\phi,\rho)$.

Let \be F(X)=\sum_{\De\subset X} F(X,\De,\phi,\rho)\ee and with the corresponding $F=F_0+F_1$,
\begin{align}
	F(X,\De,\phi,\rho)=    \alpha^{(0)}(X,\rho)
	+ & \sum_{\mu,\nu} \alpha_{\mu,\nu}^{(2)}(X,\rho=0)  \int_{\De}(\partial_{\mu}(\phi))(\partial_{\nu}(\phi))\\
+&\sum_{\mu,\nu\rho}\alpha_{\mu,\nu\rho}^{(2)}(X,\rho=0)\int_{\De}(\partial_{\mu}(\phi))(\partial_{\nu\rho}^2(\phi))
\end{align}
Here we determine those coefficients to be:
\be   
\begin{cases}
\alpha^{(0)}(X,\rho)=\frac{1}{|X|}\bar K_0(X,\phi,\rho) \cdot\chi_{\cS}(X)\Big|_{\phi=0} \\   
\alpha^{(2)}_{\mu,\nu}(X,\rho=0)=\frac{1}{2|X|}\bar K_2(X,\phi,\rho;x_{\mu},x_{\nu}) \cdot\chi_{\cS}(X)\Big|_{\phi=0\atop \rho=0}\\
\alpha^{(2)}_{\mu,\nu\rho}(X,\rho=0)=\frac{1}{2|X|}\bar K_2(X,\phi,\rho;x_{\mu},x_{\nu}x_{\rho}) \cdot\chi_{\cS}(X)\Big|_{\phi=0\atop \rho=0} \\
\end{cases}
\ee
Here $\chi_{\cS}$ is the characteristic function of small sets.
Later we will see this extraction is sufficient. 

Under the specified extraction, the first step of $\rho-$dependent extraction will be:
\be
\cE xp(\Box e^{-V}+K(\rho))(\La)=\cE xp(\Box e^{-V'(F(\rho))}+\cE(K(\rho),F(\rho)))(\La)
\ee
However, we expect the new renormalization group transformation goes as:
\be \begin{split}
\label{ge01}
&\int\cE xp(\Box e^{-V(\phi_L+\zeta)}+K(\phi_L+\zeta,\rho))(L\La)d\mu_C(\zeta)\\
=&e^{\de\cE(\rho)}\cE xp(\Box e^{-V''(\phi)}+K'(\phi,\rho))(\La,\phi)
\end{split}\ee
In other words, we want to keep $V_j\rightarrow V_{j+1}$ to be independent of $\rho$. Therefore we need to take further steps for the extraction as:
\be
\cE xp(\Box e^{-V}+K(\rho))(\La)=e^{\cE(\rho)}\cE xp(\Box e^{-V''}+\cE'(K(\rho),F(\rho)))(\La)
\ee
To accomplish this idea, we need some properties for these extraction terms. Note that
\be   
\begin{cases}
\sum_{X\supset\De}\alpha^{(0)}(X,0)=\de E \\  
\sum_{X\supset\De}\alpha^{(2)}_{\mu,\nu}(X,0)=-\frac{1}{2\beta}\de_{\mu\nu}\de\sigma \\
\sum_{X\supset\De}\alpha^{(2)}_{\mu,\nu\rho}(X,0)=0 \\
\end{cases}
\ee
where $\de E$ and $\de \sigma$ are constants. Then
\begin{align}
\cE(\rho)=\sum_{X\subset\La}F_0(X,\rho)
         =\sum_{X\subset\La} \alpha^{(0)}(X,\rho)
\end{align}
will be $\rho-$dependent. And by the 
fact  \ref{05262011}, 
\begin{align}
e^{-V''}(X)
&= \prod_{\De\subset X}e^{-V(\De)+\sum_{Y\supset \De}F_1(Y,\De)    } 
\end{align}
will be $\rho-$ independent.

Therefore the flow $V_j\rightarrow V_{j+1}$ will be $\rho-$absent and $\cE_j(\rho)\rightarrow\cE_{j+1}(\rho)$ and $K_j(\rho)\rightarrow K_{j+1}(\rho)$ will be $\rho-$ dependent.

Also the stability condition and necessary convergent condition hold similarly as in Theorem \ref{RGcontrol} if $\la_0$ is sufficiently small.

Similarly as the partition function, the extraction we made in our RG transformation is from $(\cS_1\cB_1 K(\rho))^{\natural}$. Then with the extraction specified above, by induction we have
\be
\dim \overline{K}_j(\rho) \geq 2\qquad \text{for any $j$}
\ee
This is not as good as 
\be
\dim \overline{K}_j(\rho=0) \geq 4\qquad \text{for any $j$}
\ee
but here it will suffice due to the pinning property.

\subsubsection{Bounds on $K_j(\rho)$}
The most important term turns out to be the analysis on $\cE_j$. But first of all, we need to make sure the existence of the RG flow, which requires the estimate on $K_j\rightarrow K_{j+1}$. 

A polymer activity $K(X,\phi,\rho)$ defined on $\La_{M-j}$ has pinning property at $\{a,b\}\subset\La_M$ if 
\[K(X,\phi,\rho)=K(X,\phi,\rho=0)\qquad\text{ for $X\cap\{L^{-j}a,L^{-j}b\}=\emptyset$}\] 
\begin{lem} \label{NKestimate}
For $0\leq j\leq M-1$,
\be
\norm{\cE_1\cF_1 K_{j}(\rho)}_{j+1} \leq \frac{3\de}{4}\norm{K_{j}(\rho)}_{j}
\ee
where 
$\de=\cO(1)\max(L^{-2},L^{2-\beta/4\pi})$.
\end{lem}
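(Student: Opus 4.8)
## Proof Proposal for Lemma \ref{NKestimate}

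The plan is to mirror the proof of the bound on $\cE_1\cF_1 K_j$ from Theorem \ref{RGcontrol}, but now tracking the $\rho$-dependence through the decomposition into large-set, charged, and neutral pieces. First I would write, exactly as in the vacuum case,
\[
\cE_1\cF_1(K_j(\rho)) = \cE_1\cF_1\!\left(K_j(\rho)\cdot\chi_{\bar\cS}\right) + \cE_1\cF_1\!\left(\sum_{q\neq 0} k_q(\rho)\cdot\chi_\cS\right) + \cE_1\cF_1\!\left(k_0(\rho)\cdot\chi_\cS\right),
\]
where $k_q(\rho)$ are the Fourier components in the $\Phi$-variable as in \eqref{06282011}. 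The key observation is that each of the three norm estimates carried out in the proof of Theorem \ref{RGcontrol} is purely an estimate on polymer-activity norms and is insensitive to whether the activity carries an extra analytic parameter $\rho$: the multiplicative property \eqref{normproperty}, the reblocking/rescaling bounds of Lemma \ref{rges}, the charged/neutral scaling bounds of Lemma \ref{e0051}, and the extraction bound of Lemma \ref{rges}(1) all hold verbatim with $K$ replaced by $K(\rho)$, since $\norm{\cdot}_j$ is defined (Section \ref{NormSection}) for activities that are smooth in $\phi$ and analytic in a small ball in $\rho$, and the $\rho$-extraction of Section \ref{05252011} preserves localization and $2\pi$-periodicity.

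The large-set term is bounded by $\cO(1)L^{-2}\norm{K_j(\rho)}_j \leq \frac{\de}{4}\norm{K_j(\rho)}_j$ exactly as in \eqref{07172011b}, using $\Gamma_p(L^{-1}\bar X^L)\leq L^{-4}\Gamma_p(X)$ for large sets. The charged term: since no extraction is made from $k_q(\rho)$ with $q\neq 0$, choosing $\eta = 2h^{-1}N_{\beta C}\leq 1$ in Lemma \ref{06162011} and summing the geometric series in $q$ gives, for $\beta>8\pi$, the bound $\cO(1)L^{2-\beta/4\pi}\norm{K_j(\rho)}_j\leq\frac{\de}{4}\norm{K_j(\rho)}_j$; here I use $\norm{k_q(\rho)}_j\leq\norm{K_j(\rho)}_j$ from \eqref{06282011}. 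The neutral term is where the $\rho$-case genuinely differs: in the vacuum case one has $\dim\bar K_j\geq 4$ and hence a gain $L^{2-\dim k_0}\leq L^{-2}$, but for the correlation activities Section \ref{05252011} only gives $\dim\bar K_j(\rho)\geq 2$, which by Lemma \ref{06162011} yields merely $L^{2-\dim(k_0)}\leq L^0 = \cO(1)$ — not a contraction. The resolution is the \emph{pinning property}: write $k_0(\rho) = k_0(\rho=0) + \big(k_0(\rho)-k_0(\rho=0)\big)$. The first summand satisfies $\dim\geq 4$ and is handled exactly as \eqref{06302011}–\eqref{07142011}, contributing $\cO(1)L^{-2}\norm{K_j(\rho)}_j$. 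The second summand is supported only on polymers $X$ meeting $\{L^{-j}a,L^{-j}b\}$; since $\abs{a-b}\gg 1$ these are at most two disjoint ``pinned'' regions, and the reblocking/rescaling sum over such $X$ carries only a bounded combinatorial factor rather than the volume factor, so after extraction this term is also $\cO(1)L^{-2}\norm{K_j(\rho)}_j$ — the extra $L^2$ that would normally come from the sum over $L$-blocks $LX$ is absent because the pinned polymers are anchored. Combining, the neutral term is $\leq\frac{\de}{4}\norm{K_j(\rho)}_j$.

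Adding the three pieces gives $\norm{\cE_1\cF_1 K_j(\rho)}_{j+1}\leq \frac{3\de}{4}\norm{K_j(\rho)}_j$, as claimed. The main obstacle is the neutral-sector estimate: one must argue carefully that the failure of $\dim\bar K_j(\rho)\geq 4$ is confined to the $\rho$-dependent part of $k_0$, and that on that part the anchoring of the pinned polymers at the two fixed external charges suppresses the volume/$L^2$ factor that would otherwise destroy the contraction. Everything else is a routine transcription of the vacuum estimates with $\rho$ carried along as an inert analytic parameter, using that $\norm{\de f}_{\Gamma_p}\leq\cO(1)\norm{K_j(\rho)}_j$ by \eqref{07122011} and that the stability condition \eqref{stabilitycondition} holds for small $\la_0$ as noted in Section \ref{05252011}.
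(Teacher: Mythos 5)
Your proposal follows essentially the same route as the paper's proof: the same three-way split into large-set, charged, and neutral pieces with the vacuum estimates reused for the first two, and the same key step for the neutral sector, namely decomposing $k_0(\rho)$ into its $\rho=0$ part (which after extraction has $\dim\geq 4$ and contracts as in the partition-function case) and the pinned difference $k_0(\rho)-k_0(0)$ (where $\dim\geq 2$ supplies $L^{-2}$ and the pinning property replaces the $L^2$ volume factor by $\cO(1)$). This matches the paper's $J_1$/$J_2$ decomposition of $\cF_1(k_0\cdot\chi_\cS)$ term for term.
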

\begin{proof}
\mbox{}

Note that the new $\rho-$dependent extraction is made for the linear part of the neutral sectors from small sets, i.e, the  term $\cR_1(k_0 \cdot\chi_\cS)$. Therefore the estimate on the first two terms still holds, the rest is to control the growth of the $\rho-$dependent neutral sectors.

Note that 
\be \label{07192011d}
\cE_1\cF_1(k_0 \cdot\chi_\cS)=\cE_1(  \cF_1(k_0(\rho)\cdot\chi_{\cS}),F ) =\cF_1(k_0\cdot\chi_{\cS})-F(\cF_1(k_0\cdot\chi_{\cS}))  
\ee
where 
$F$ is the extracted part. Moreover, for simplicity let
\be
J(X,\phi,\rho)=\cF_1(k_0\cdot\chi_{\cS})(X,\phi,\rho)
\ee
and then write
\be
\begin{split}
J(X,\phi,\rho)=\underbrace{   J(X,\phi,\rho)-J(X,\phi,\rho=0)}_{\mbox{say as } J_1(X)} +\underbrace{J(X,\phi,\rho=0)}_{\mbox{say as }J_2(X)}
\end{split}
\ee
Now similarly as the vacuum case, we consider the term $\cF_1 k_0$ first.
\begin{enumerate}
	\item 
For term $J_1$:
By the pinning property 
\be
K_j(Y,\phi,\rho)=K_j(Y,\phi,\rho=0) \quad \mbox{if} \quad Y\cap\{L^{-j}a,L^{-j}b\}=\emptyset
\ee
we have
\be \begin{split}
&\cF_1 k_0(X,\phi,\rho)-\cF_1 k_0(X,\phi,\rho=0) \\
=&\sum_{\overline{ Y} ^L=LX \atop Y\cap\{L^{-j}a,L^{-j}b\}\neq \emptyset} \left(  k_0^{\#}(Y,\phi_L,\rho)- k_0^{\#}(Y,\phi_L,\rho=0)\right)
\end{split}\ee
since $k_0$ also have the pinning property.
Now note that
\be
\dim \left(  k_0(\rho) -k_0(\rho=0)  ) \right)\geq 2 \label{07152011}
\ee
Then by \ref{07122011c} and Lemma \ref{e0051},
\be
\begin{split}
&\norm{\cF_1 k_0(X,\rho)-\cF_1 k_0(X,\rho=0)}_{h,G(\kappa+\de\kappa)} \\
\leq&\sum_{\overline{ Y} ^L=LX \atop Y\cap\{L^{-j}a,L^{-j}b\}\neq\emptyset} \norm{  k_0^{\#}(Y,\rho)- k_0^{\#}(Y,\rho=0)}_{h,G_L(\kappa+\de\kappa)}\\
\leq&\sum_{\overline{ Y} ^L=LX \atop Y\cap\{L^{-j}a,L^{-j}b\}\neq\emptyset} \norm{ \cS( k_0^{}(L^{-1}Y,\rho)- k_0^{}(L^{-1}Y,\rho=0))}_{h,G_{L^{-1}}(\kappa)} 2^{\abs{Y}}\\
\leq&\sum_{\overline{ Y} ^L=LX \atop Y\cap\{L^{-j}a,L^{-j}b\}\neq\emptyset}L^{-2} \norm{  k_0^{}(Y,\rho)- k_0^{}(Y,\rho=0)}_{h,G(\kappa)} 2^{\abs{Y}}
\end{split}
\ee
Now by summing over all small polymers, the pinning property will suppress the growth in volume $L^2$ and  just gives a constant $\cO(1)$ term. Therefore we have
\be \label{07192011}
\begin{split}
\norm{J_1}_{j+1} 
\leq&\cO(1)L^{-2} \norm{  k_0^{}(Y,\rho)- k_0^{}(Y,\rho=0)}_{j}\\
\leq&\cO(1)L^{-2} \norm{  K(\rho)}_{j}\\
\end{split}
\ee

\item
For term $J_2$, by the extraction we made, 
we have
\be
\dim \left(  K^{\dagger}(\rho=0) -F(  K^{\dagger}(\rho=0)  ) \right)\geq 4
\ee
which is the same as the estimate \ref{06302011} in the vacuum case/ partition function. 
So for the term $J_2$, we have
\be
\norm{J_2}_{j+1} 
 \leq \cO(1)L^{-2}\norm{K_j(\rho)}_j  \label{07192011b}
\ee
\end{enumerate}
Then for the extraction part  $F(\cF_1(k_0\cdot\chi_{\cS}))$, it is bounded by
\be \label{07192011c}
\norm{F(\cF_1(k_0\cdot\chi_{\cS}))}_{j+1}\leq \cO(1)\norm{\cF_1(k_0\cdot\chi_{\cS})}_{j+1} \leq \cO(1) L^{-2} \norm{K_j(\rho)}_j
\ee
by combining term $J_1$ and $J_2$.

Substitute the estimate \ref{07192011}, \ref{07192011b} and \ref{07192011c} into the norm of \ref{07192011d}, we  have
\be
\norm{\cE_1\cF_1(k_0 \cdot\chi_\cS)}_{j+1}\leq  \cO(1)L^{-2}\norm{K_j(\rho)}_j  \leq  \frac{\de}{4}\norm{K_j(\rho)}_j
\ee

Combine with the similar  estimates on the  large set and charged terms, we have the claimed bound.
\end{proof}
This lemma provides the necessary shrinking conditions of the polymer activities to continue the RG transformations.
\subsubsection{Summary}
As Theorem \ref{RGcontrol} for the partition function, now we can summarize to the  result of the existence of R.G flow, which is also Theorem \ref{05182011}:
\begin{thm}  \label{07202011}
For sufficiently small $z$  and $0<\epsilon<1$,
\be \label{07162011}
Z'(\rho)=e^{\cE_j(\rho)}\int \cE xp(\Box e^{-V_j(\phi)} +K_j(\phi,\rho))(\La_{M-j})  d\mu_{\beta v_{M-j}}(\phi)
\ee
hold for any $0\leq j\leq M-1$, where $V_j$'s are $\rho-$independent, and the  polymer activities satisfy:
\be \label{07162011b}
\norm{K_{j}(\rho)}_{j}\leq \de^{j}\epsilon'
\ee
where $\epsilon'=\max(\epsilon,C_{\la})$ and $\de=\cO(1)\max(L^{-2},L^{2-\beta/4\pi})$.
And $K_j(\rho)$ has pinning property for any $j$.
\end{thm}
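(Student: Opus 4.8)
The plan is to reproduce the argument of Theorem \ref{pproof}, carrying the external source $\rho$ along through every fluctuation--scaling--extraction cycle, and to exploit the fact that the $\rho$-dependent extraction of Section \ref{05252011} is arranged so that only $\cE_j$ and $K_j$ — never $V_j$ — acquire a $\rho$-dependence. I would argue by induction on $j$. The base case is the initial cluster expansion, which already produces $Z'(\rho)=\cE xp(\Box e^{-V_0}+K_0(\rho))(\La_M)$ with $K_0(\rho)$ manifestly possessing the pinning property and, by Lemma \ref{07162011c}, $\norm{K_0(\rho)}_0\leq\cO(1)\max(\abs{z}^{1-\epsilon},C_{\la})\leq\epsilon'$ once $\abs{z}$ and $\la_0$ are taken small enough. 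For the inductive step one applies the RG transformation of Section \ref{fundmentaltransform} to the representation at scale $j$, obtaining $K_{j+1}(\rho)=\cE((\cS\cB K_j(\rho))^{\natural},F_j)$, and must check that the three assertions — the representation \ref{07162011}, the bound \ref{07162011b}, and the pinning property — propagate.

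For the $\rho$-independence of $V_{j+1}$ I would invoke the computation of Section \ref{05252011}: the marginal and relevant coefficients $\alpha^{(2)}_{\mu\nu}$, $\alpha^{(2)}_{\mu\nu\rho}$ are extracted at $\rho=0$, so $e^{-V''}=e^{-V(F_1)}$ depends only on the vacuum data and the flow $\sigma_j\to\sigma_{j+1}$ is literally the one fixed in Theorem \ref{pproof}; only $\de\cE_j(\rho)=\sum_X\alpha^{(0)}(X,\rho)$ and $K_j(\rho)$ carry $\rho$. For the bound \ref{07162011b} the key input is Lemma \ref{NKestimate}, $\norm{\cE_1\cF_1 K_j(\rho)}_{j+1}\leq\frac{3\de}{4}\norm{K_j(\rho)}_j$, which says the linearized map still contracts by a multiple of $\de<1$; combined with the Cauchy-bound control of the higher-order remainder $\cR_{\geq 2}$ exactly as in the proof of Theorem \ref{RGcontrol} — with $\de f$ bounded by $\norm{K_j(\rho)}$, the inhomogeneous $\sigma_j$-term cancelling in the extraction, and $\sigma_j=\cO(\de^j\epsilon)$ already known — this gives $\norm{K_{j+1}(\rho)}_{j+1}\leq\de\norm{K_j(\rho)}_j+(\text{quadratically small})$. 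Feeding this into the Stable Manifold Theorem \ref{SMT} with the $\sigma$-coordinate held fixed (no new tuning is needed) and $\mu=\de$, or equivalently iterating the contraction directly using the smallness of $\epsilon'$ and the geometric decay of $\sigma_j$, yields $\norm{K_j(\rho)}_j\leq\de^j\epsilon'$ for all $0\leq j\leq M-1$.

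The pinning property is preserved because every step of the RG map is spatially local. If a polymer $X$ at scale $j+1$ is disjoint from $\{L^{-j-1}a,L^{-j-1}b\}$, then under reblocking and rescaling it is assembled only from scale-$j$ polymers $Y$ and blocks $\De$ disjoint from $\{L^{-j}a,L^{-j}b\}$, on which the inductive hypothesis gives $K_j(Y,\phi,\rho)=K_j(Y,\phi,0)$ and $P(\De,\phi,\zeta)$ is $\rho$-free; the fluctuation integral does not enlarge supports, and the extraction $F$ is built from $\bar K$ on small sets, which inherits pinning. Hence $K_{j+1}(X,\phi,\rho)=K_{j+1}(X,\phi,0)$ for such $X$. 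Along the way one also records, as in Section \ref{05252011}, that the extraction only forces $\dim\overline{K}_j(\rho)\geq2$ rather than the $\geq4$ of the vacuum case, but this weaker bound suffices: in Lemma \ref{NKestimate} the volume factor $L^2$ that would otherwise accompany the $\rho$-dependent neutral sector is suppressed to $\cO(1)$ precisely because only the $\cO(1)$-many polymers meeting the two charge sites contribute there.

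The hard part — and the reason the pinning property is indispensable rather than cosmetic — is that $K_0(\rho)$, and therefore every $K_j(\rho)$, is \emph{not} small in the sense the vacuum activities are: on a unit block containing an external charge its norm is the constant $C_{\la}$, not $\cO(\abs{z}^{1-\epsilon})$. What rescues the iteration is that this non-smallness is confined to $\cO(1)$ polymers at each scale, so the dangerous $\rho$-dependent pieces never proliferate in volume and always contract by $\de$; quantitatively, this is what turns a would-be $L^2\cdot(\text{non-small})$ into $\cO(1)\cdot(\text{non-small})$ in Lemma \ref{NKestimate}, and it is also why one only needs $\dim\overline{K}_j(\rho)\geq2$. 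A secondary point to verify is that taking $\la_0$ small keeps $C_{\la}$ — hence $\epsilon'=\max(\epsilon,C_{\la})$ — inside the radius on which the RG map, the stability condition \ref{stabilitycondition}, and the hypotheses of Lemmas \ref{rges}, \ref{ve000} and \ref{SMT} are valid uniformly in $j$ and $M$.
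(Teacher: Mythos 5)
Your proposal is correct and follows essentially the same route as the paper: induction on $j$ with the base case from Lemma \ref{07162011c}, the decomposition $K_{j+1}(\rho)=\cL_j K_j(\rho)+g_j(\sigma_j,K_j(\rho))$ with Lemma \ref{NKestimate} controlling the linear part and a Cauchy bound controlling $\cR_{\geq 2}$, and the pinning/locality argument for the $\rho$-dependent pieces. The only elided detail is that the paper must take the Cauchy radius $j$-dependent, $D=(\de^j\epsilon)^{-1/4}$, so that the gain $(\de^j\epsilon)^{1/4}$ beats the $2^{j}\kappa_0^{-1}$ coming from $\abs{\sigma_j/\de\kappa_j}$ in the remainder bound -- this is the "geometric decay of $\sigma_j$" you invoke, made quantitative.
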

\begin{proof}
The proof is  similar to the proof of Theorem \ref{RGcontrol} and \ref{pproof}. The $K_j(\rho)$ in equation \ref{07162011} is well defined for any $j$. Now we prove the estimate \ref{07162011b} by induction. $j=0$ follows from Lemma \ref{07162011c}. Suppose true for $j$. Now as the vacuum case,
\be
K_{j+1}(\rho)=\cL_j K_j(\rho)+g_j(\sigma_j,K_j(\rho))
\ee
where $\cL_j K_j(\rho)$ is the linearization of $K_{j+1}(\rho)$ in $K_j(\rho)$. Then by Lemma \ref{NKestimate}
\be \label{07162011e}
\norm{\cL_j K_j(\rho)}_{j+1}=\norm{\cE_1\cF_1 K_j(\rho)}_{j+1}\leq \de \norm{K_j}_j\leq \de^{j+1}\epsilon'
\ee

Now for the higher order term  $g_j(\sigma_j,K_j)=\cR_{\geq 2}(K_j,\sigma_j)$, we use a Cauchy bound estimate as in Lemma  \ref{RGcontrol}. By \ref{residue00} and Lemma \ref{07182011}, we have
\be \begin{split}\norm{\cR_{\ge 2}(K_j)}_{j+1} \leq& \cO(1)D^{-2}\sup_{\abs{s}=D}\norm{\cR(sK_j,sF_j)}_{j+1}\\
\leq &\cO(1)D^{-1}L^2 (\norm{K_j}_j+\abs{\sigma_j/\de\kappa_j}) \label{07162011d}
\end{split}\ee
Recall that $j-$independent constant $D$ could be any number greater than 2( and satisfy the hypothesis of Lemma \ref{07182011}). Now we can choose $D=(\de^j\epsilon)^{-1/4}$ for example, then \ref{07162011d} becomes
\be \label{07182011d}\begin{split}\norm{\cR_{\ge 2}(K_j)}_{j+1} 
\leq &\cO(1)(\de^j\epsilon)^{1/4} L^2(\norm{K_j}_j+\abs{\sigma_j/\de\kappa_j})\\
\leq &\cO(1)\de^j\epsilon'\cdot(\de^j\epsilon)^{1/4} L^2\de\kappa_j^{-1}\\
\leq &\frac{1}{4}\de^j\epsilon'\cdot(\de^j)^{1/4}  2^{j}\kappa_0^{-1}\\
\leq &\frac{1}{4}\de^{j+1} \epsilon'
\end{split}\ee
Here we have used the condition
\be
\epsilon\leq\frac{1}{4^4}(\cO(1)L^{-2}\kappa_0\de)^4
\ee
and the  estimate
\be \begin{split}
(\de^j)^{1/4}  2^{j}\leq 1 
\Leftrightarrow   \de^j  \leq  2^{-4j}
\end{split}\ee
which is true for $L$ large.

Combine \ref{07182011d} with \ref{07162011e}, we have the estimate \ref{07162011b}.
\end{proof}
Our following correlation analysis only occurs  in the linear part of $K_{j+1}$ in $K_j$.
Also  we would like to absorb the constant factor $\cO(1)$ into the exponent as
\be
\begin{split}
\de\norm{K_j(\rho)}_j=\cO(1)\de_0\norm{K_j(\rho)}_j=\cO(1)\de_0^{\epsilon}\cdot\de^{1-\epsilon_0}\norm{K_j(\rho)}_j \leq \de_0^{1-\epsilon_0}\norm{K_j(\rho)}_j
\end{split}
\ee
for $L$ sufficiently large, where $\delta_0=\max(L^{-2},L^{2-\beta/4\pi})$.

\subsection{Main Result of the dipole$-$dipole decay}
\subsubsection{ Estimate}
Let
 \begin{align}
T_j(\rho)&=\cE_j(\rho)-\cE_{j-1}(\rho)-(\cE_j(\rho=0)-\cE_{j-1}(\rho=0))              \label{tdef}
\end{align}
Then it is easy to see
\be
\frac{\partial^2}{\partial \la_1\partial\la_2} \left[\cE_j(\rho)-\cE_{j-1}(\rho)\right] \Big|_{\la_1=\la_2=0}=\frac{\partial^2}{\partial \la_1\partial\la_2} \left[T_j(\rho)\right] \Big|_{\la_1=\la_2=0}
\ee
And from now on, throughout the rest paper, $\epsilon_0$ means a small quantity $0<\epsilon_0<1$. 

\begin{lem} First Estimate
\label{firstestimate}
\be 
\frac{\partial^2}{\partial\la_1\partial\la_2}T_j(\rho) \Big|_{\la_1=\la_2=0} =
0     \textrm{$\qquad$ if $0\leq j< I$}
\ee
where  $I$ is certain integer satisfying $L^I\leq\abs{a-b}< L^{I+1}$ and $0<\epsilon_0<1$.
\end{lem}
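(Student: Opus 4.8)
The plan is to exploit the \emph{pinning property} of the polymer activities established in Theorem~\ref{07202011} together with the finite range decomposition of the covariance. The key observation is that $\cE_j(\rho)$ is built from the extracted pieces $F_0(X,\rho)=\alpha^{(0)}(X,\rho)$ summed over polymers $X\subset\La_{M-j-1}$, and the $\rho$-dependence of $\alpha^{(0)}(X,\rho)$ enters only through $\bar K_0(X,\phi,\rho)\big|_{\phi=0}$. By the pinning property, $K_j(X,\phi,\rho)=K_j(X,\phi,\rho=0)$ unless $X$ (at scale $j$) contains one or both of the rescaled external charge locations $L^{-j}a$ or $L^{-j}b$. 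Hence $T_j(\rho)$, which by definition \eqref{tdef} subtracts off the $\rho=0$ contribution, receives contributions only from those polymers that touch $\{L^{-j}a, L^{-j}b\}$.

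First I would make precise the claim that a polymer $X$ contributing to $F_0$ is \emph{small}, so $\diam(X)=\cO(1)$ at scale $j$, i.e. $\diam$ of order $L^j$ in the original unrescaled coordinates. Next I would argue that for $\frac{\partial^2}{\partial\la_1\partial\la_2}T_j(\rho)\big|_{\rho=0}$ to be nonzero, one needs a single polymer $X$ (or a connected cluster arising in the extraction/reblocking construction of $\cE_j$) whose support simultaneously meets a neighborhood of $a$ and a neighborhood of $b$ — because differentiating once in $\la_1$ and once in $\la_2$ forces both charges $a$ and $b$ to be ``activated,'' and each charge can only be activated inside a polymer that contains it. Since at scale $j$ all the relevant polymers are small, of diameter $\cO(1)L^j$, such a polymer can reach both $a$ and $b$ only once $\cO(1)L^j \gtrsim \abs{a-b}$, i.e. once $L^{j+1}\gtrsim\abs{a-b}$, which fails precisely when $j<I$ where $L^I\leq\abs{a-b}<L^{I+1}$.

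Concretely, the steps I would carry out are: (i) write $T_j(\rho)$ explicitly in terms of the extracted quantities $\alpha^{(0)}(X,\rho)-\alpha^{(0)}(X,0)$, which by pinning vanish unless $X\cap\{L^{-j}a,L^{-j}b\}\neq\emptyset$; (ii) apply $\frac{\partial^2}{\partial\la_1\partial\la_2}\big|_{\rho=0}$ and note that $\rho=\la_1\bn_1\cdot\partial\phi(a)+\la_2\bn_2\cdot\partial\phi(b)$ depends on $\la_1$ only through the value at $a$ and on $\la_2$ only through the value at $b$, so the mixed second derivative of any term extracts exactly the part of that term that is jointly nontrivial in both $a$ and $b$; (iii) trace back through the reblocking, scaling, fluctuation and extraction maps that produced $\cE_j$ from $K_0$ to see that a term surviving this mixed derivative must come from a connected polymer at some earlier scale whose (rescaled) support contains both $L^{-k}a$ and $L^{-k}b$ for some $k\leq j$; (iv) bound the diameter of any such connected small polymer by $\cO(1)L^k\leq\cO(1)L^j$ and invoke the finite range property $v_{M-k}(x)=0$ for $\abs{x}\geq L^{k+1}$ together with $j<I$ to conclude the support cannot connect $a$ to $b$; hence no such term exists and the mixed derivative of $T_j$ is zero.

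The main obstacle I anticipate is step (iii): one must be careful that the reblocking operation $\cB$ and the extraction $\cE$ do not secretly create a polymer activity depending on \emph{both} $a$ and $b$ out of two separate $a$-dependent and $b$-dependent pieces when $a$ and $b$ are far apart. This is exactly where the overlap-connectedness structure of the extraction formula \eqref{extraformula00} and the finite-range support of $C$ must be used: two polymer activities can only be combined into one connected polymer in $\cB K$ or in $\cE(K,F)$ if their supports are within distance $L$ (at the current scale), so the ``distance traveled'' per RG step is only $\cO(1)L$ in rescaled units, and after $j$ steps the total reach in original units is $\cO(1)L^j$. Making this induction on scales precise — that the ``diameter of $\rho$-dependence'' of $K_j(\rho)$ and of $\cE_j(\rho)$ grows by at most a factor $L$ (up to $\cO(1)$) per step — is the heart of the argument, and everything else is bookkeeping.
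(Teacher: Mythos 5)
Your proposal is correct and follows essentially the same route as the paper: the paper likewise notes that $T_j(\rho)$ depends only on the extracted constants $\alpha^{(0)}(X,\rho)$, which live on \emph{small} sets, and that for $j<I$ the scaled separation $d_j(a,b)=\abs{a-b}/L^j$ exceeds the diameter of any small set, so no single extracted polymer can carry both the $\la_1$- and $\la_2$-dependence and the mixed derivative vanishes. Your worry in step (iii) about reblocking merging an $a$-piece with a $b$-piece is resolved more simply than by tracking the per-step reach: any such merged polymer would necessarily be a large set at scale $j<I$, and the extraction defining $\cE_j$ is made only from small sets.
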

\begin{proof}
According to the arguments in section \ref{VEstimates}, the new $\rho-$dependent energy terms $\de\cE_j$ will have the form:
\be
\cE_j(\rho)-\cE_{j-1}(\rho)=\sum_{X\subset \La_{M-j}}\alpha_j^{(0)}(X,\rho)
\ee
where $\alpha_j^{(0)}$ is defined in \ref{06272011} with $K$ replaced by $(\cS_1\cB_1 K(\rho))^{\natural}$,
\be 
\begin{cases}
V_{F_0}(\De)&=\sum\limits_{X\supset\De} F_0(X,\De) = \de E(\rho)+\cO(1)L^2\sigma_j\De C(0)\\
V_{F_1}(\De)&=\sum\limits_{X\supset\De} F_1(X,\De)=-\frac{\de\sigma}{2\beta}\int_{x\in\De} (\partial\phi)^2   
\end{cases} 
\ee
   which is also similar to \ref{07142011d}.

 Then the $L^2\sigma$ terms canceled in $T_j(\rho)$, which makes $T_j(\rho)$ to be a function of $\de E(\rho)-\de E(\rho=0)$ only.

Now the  term $\cE_j(\rho)-\cE_{j-1}(\rho)$ is no longer a constant as the partition function case because of the external charges $\rho$.
And note that the extractions are only made from small sets $X$.  

We denote $d_j(a,b)=\abs{a-b}/L^j$ as the scaling distance between the two dipoles. Then before $I$ R.G. steps, $d_j(a,b)\geq 2^2$, the two dipoles cannot fall into a single small set. Hence
 \be\frac{\partial^2}{\partial\la_1\partial\la_2}[\cE_j(\rho)-\cE_{j-1}(\rho)] \Big|_{\la_1=\la_2=0} = 0 \qquad \mbox{for}\quad j< I. \label{07142011b}\ee
For $j\geq I$, we get a non-zero contribution. Note that there is no contribution from the $\sigma$ terms in this case.
\end{proof}

Here note that if $\de_0=\max\{L^{-2},L^{2-\beta/4\pi}\}$ and  $L^I\leq\abs{a-b}< L^{I+1}$, then
	 \begin{align}\de_0^I =\max\{L^{-2I},L^{(2-\beta/4\pi)I}\}= \max\{\abs{a-b}^{-2},\abs{a-b}^{2-\beta/4\pi} \}
	 \end{align}
which implies the distance factor $\abs{a-b}$.

Furthermore, we can get the estimate on  the power-law decay of the correlations  by improving bound on $\cE_j$ with a concrete pinning argument.

Since the energy terms are from the extraction, it is sufficient to derive the bound on
\be\de K_j(\rho)=K_j(\rho)-K_j(\rho=0)\quad\text{for all $j\geq I$}\ee 
As the flow $K_j\rightarrow K_{j+1}$, we need a better estimate for the flow $\de K_j\rightarrow \de K_{j+1}$.

Denote
\begin{align}
\de \cR (K(\rho))&=\cR (K(\rho))-\cR (K(0))\\
\de(\cE_1\cF_1)(K(\rho))&=\cE_1\cF_1(K(\rho))-\cE_1\cF_1(K(0))
\end{align}
and so on.

Since $K(\rho)$'s are still $2\pi-$ translation invariant with respect to $\phi$, we can make the Fourier expansion for $\de K(\rho)$.
Let
\be 
\de k_q(X,\phi,\rho)  = \frac{1}{2\pi} \int_{-\pi} ^{\pi} e^{-iq\Phi}\de K(X, \Phi + \phi,\rho) d \Phi 
\ee
 for each $q$.
Then
\be \de K(X, \phi)= \de k_0(X,\f)+ \sum_{q\neq 0}  \de k_q(X,
\phi)\ee
The terms with $q\neq 0$ are called the  charged terms and the $q=0$
term is called the  neutral term. 
\begin{lem} For $z$ sufficiently small, $\beta>8\pi$,
\be\norm{\de K_{j}(\rho)}_{j}\leq  L^{-2(1-\epsilon_0)j}\epsilon'\ee
\end{lem}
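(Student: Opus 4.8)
The plan is to prove the estimate by induction on $j$, following the scheme of Theorems~\ref{RGcontrol} and~\ref{07202011} but squeezing the \emph{pinning property} of $\de K_j(\rho)=K_j(\rho)-K_j(\rho=0)$ much harder than in Lemma~\ref{NKestimate}. There, pinning was invoked only to control the $\rho$-dependent neutral sector, while the charged and large-set sectors were bounded at their vacuum rates. Here, every term of $\de K_j(\rho)$ is supported on polymers meeting the scaled dipoles $\{L^{-j}a,L^{-j}b\}$, so the volume factor $L^2$ produced by the reblocking step --- the sum over $LX$ in $\cF_1$, which for a pinned activity ranges over only $\cO(1)$ small polymers through the fixed charge rather than over $\cO(L^2)$ --- collapses to a constant in \emph{all} sectors: large sets, charged small sets, and neutral small sets alike. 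This upgrades the per-step contraction from $\de=\cO(1)\max(L^{-2},L^{2-\beta/4\pi})$ (which carries the $L^2$) to $\cO(1)L^{-2}$, which after absorbing the constant into the exponent becomes $L^{-2(1-\epsilon_0)}$.

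For the base case $j=0$ I would invoke Lemma~\ref{07162011c}: $\de K_0(\rho)$ vanishes on polymers with no external charge and is bounded by $\cO(1)C_{\la}$ on those carrying one, so summing the $\Gamma$-norm over polymers through a pin (the geometric series in $\abs{z}^{1-\epsilon}$ converges because each extra block costs $\abs{z}^{1-\epsilon}$) gives $\norm{\de K_0(\rho)}_0\le\cO(1)C_{\la}\le\epsilon'$ provided $\la_0$ is chosen small relative to $z$, e.g.\ $\la_0=\abs{z}^{1/2}$ so that $C_{\la}=\cO(\abs{z}^{1/2})$. For the inductive step I would use that $\cE_1\cF_1$ is linear in $K$ --- the extraction coefficients $\alpha^{(0)},\alpha^{(2)}$ are linear in $\bar K$, and the $\phi$-dependent, $\rho$-independent piece $F_1$ of the extraction is absorbed into the $\rho$-free potential $V_{j+1}$ --- so that $\de K_{j+1}(\rho)=\cE_1\cF_1(\de K_j(\rho))+g_j(\sigma_j,K_j(\rho))-g_j(\sigma_j,K_j(\rho=0))$, with the \emph{same} $\sigma_j$ in both higher-order terms; the pinning property of $\de K_{j+1}(\rho)$ then propagates exactly as in Section~\ref{05252011}.

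For the linear term I would split $\de K_j(\rho)$ by the Fourier series in $\Phi$ and by set size and estimate the pinned reblocking of each piece: (i) the large-set part contracts by $\cO(1)L^{-4}$, from $\Gamma_p(L^{-1}\overline{X}^L)\le L^{-4}\Gamma_p(X)$ for large $X$; (ii) the charged small-set part contracts by $\cO(1)\sum_{q\ne0}e^{\eta h|q|}m_q$ with $\eta=2h^{-1}N_{\beta C}$ and $C(0)=\tfrac{1}{2\pi}\log L$, which (the $q$-series converging for $L$ large) sums to $\cO(1)L^{-\beta/4\pi}<\cO(1)L^{-2}$ for $\beta>8\pi$, by Lemma~\ref{e0051}; (iii) the neutral small-set part contracts by $\cO(1)L^{-\dim(\de k_0)}\le\cO(1)L^{-2}$ since $\dim(\de k_0)\ge2$ by equation~\ref{07152011}, the $\rho$-dependent extraction only removing terms and hence not lowering the dimension. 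In each case pinning removes the $L^2$ from the reblocking, so $\norm{\cE_1\cF_1(\de K_j(\rho))}_{j+1}\le\cO(1)L^{-2}\norm{\de K_j(\rho)}_j\le\tfrac12 L^{-2(1-\epsilon_0)}\norm{\de K_j(\rho)}_j$ for $L$ large. The higher-order remainder I would bound exactly as in the proofs of Theorems~\ref{RGcontrol} and~\ref{07202011}, by a Cauchy bound on $\cR(sK_j(\rho),sF_j)$ (taking $D=(\de^j\epsilon)^{-1/4}$ and using that $\cR_{\ge 2}$ starts at quadratic order, so the difference is controlled by $\norm{\de K_j(\rho)}_j$ times the size of $K_j$); for $L$ large and $z$ (hence $\epsilon$) small this is $\le\tfrac12 L^{-2(1-\epsilon_0)}\norm{\de K_j(\rho)}_j$. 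Adding the two pieces and invoking the inductive hypothesis closes the induction: $\norm{\de K_{j+1}(\rho)}_{j+1}\le L^{-2(1-\epsilon_0)}\norm{\de K_j(\rho)}_j\le L^{-2(1-\epsilon_0)(j+1)}\epsilon'$.

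The main obstacle is the bookkeeping behind the phrase ``pinning removes the $L^2$'': one must check that the pinning property genuinely survives the full R.G.\ step --- reblocking, scaling, fluctuation integration, and especially the $\rho$-dependent extraction of Section~\ref{05252011} --- and that, for a pinned activity, the sum over $LX$ defining $\cF_1$ is over only $\cO(1)$ terms in \emph{each} sector, not just the neutral one as in Lemma~\ref{NKestimate}. A related point needing care is that $\dim(\de k_0)\ge2$ is preserved under reblocking and scaling and is not spoiled by the extraction, since this is exactly what furnishes the $L^{-2}$ contraction in the neutral sector and hence makes the claimed exponent $-2(1-\epsilon_0)$, rather than something weaker, available. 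Once these are in place, the bound feeds --- together with Lemma~\ref{firstestimate} and the summation over $j\ge I$ with $L^I\le\abs{a-b}<L^{I+1}$ --- into the power-law decay of Theorem~\ref{cthem}.
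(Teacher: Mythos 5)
Your proposal follows essentially the same route as the paper: induction on $j$, splitting $\de K_{j+1}$ into the linearized part (decomposed by Fourier charge and set size, with the pinning property collapsing the $\cO(L^2)$ reblocking sum to $\cO(1)$ terms so that the charged sector gives $L^{-\beta/4\pi}\le L^{-2}$ and the neutral sector gives $L^{-\dim(\de k_0)}\le L^{-2}$ via $\dim(\de k_0)\ge 2$) plus a higher-order remainder controlled by a Cauchy bound with a $j$-dependent contour radius. The only cosmetic differences are that you spell out the base case explicitly and bound the higher-order \emph{difference} bilinearly, whereas the paper bounds $\cR_{\ge 2}(K_j(\rho))$ and $\cR_{\ge 2}(K_j(0))$ separately; both are minor variants of the same Cauchy-bound argument.
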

\begin{proof}
We use the similar induction argument as in the proof of Theorem \ref{07202011}. Suppose true for $j$. 

\begin{enumerate}
\item Estimates on the higher order terms:  

Note that
\be
\norm{\de\cR_{\geq 2}( K_j(\rho))}_{j+1} =\norm{\cR_{\geq 2}( K_j(\rho))-\cR_{\geq 2}( K_j(0))}_{j+1} 
\ee
For the higher order term $\cR_{\geq 2}( K_j(\rho))$, we still have the similar Cauchy bound \ref{07182011d}. Then by choosing $D=2(L^{-2j}\epsilon)^{-1/4}$, \ref{07182011d} becomes
\be \norm{\cR_{\geq 2}( K_j(\rho))}_{j+1} \leq\frac{1}{8} (\cO(1)L^{-2})^{j+1} \epsilon'\leq \frac{1}{8} L^{-2(1-\epsilon_0)(j+1)}  \epsilon' \ee
Similarly for the term $\cR_{\geq 2}( K_j(0))$. Therefore for $\de\cR_{\geq 2}( K_j(\rho))$, we have
\be \norm{\de\cR_{\geq 2}( K_j(\rho))}_{j+1}\leq \frac{1}{4} L^{-2(1-\epsilon_0)(j+1)}  \epsilon' \ee

	\item Estimates on the charged terms:
	
Note that there is no extraction made from charged terms, and  the pinning property is preserved under the renormalization group transformation.
	\be
	\begin{split}
	\de(\cE_1\cF_1)\left(   k_q(\rho)\cdot\chi_\cS\right)
	&= \cE_1\cF_1(k_q(\rho)\cdot\chi_\cS)-\cE_1\cF_1(k_q(0)\cdot\chi_\cS)\\
	&= \cF_1(k_q(\rho)\cdot\chi_\cS)-\cF_1(k_q(0)\cdot\chi_\cS)\\
	\end{split}
	\ee
	And for small set $X$,
	\be
	\begin{split}
	&\cF_1(k_q(\rho)\cdot\chi_\cS)-\cF_1(k_q(0)\cdot\chi_\cS)\\
	=&\sum_{\bar{Y}^L=LX}k_q^{\#}(Y,\phi_L,\rho)-\sum_{\bar{Y}^L=LX}k_q^{\#}(Y,\phi_L,0)\\
	=&\sum_{\bar{Y}^L=LX} \left(k_q^{\#}(Y,\phi_L,\rho)-k_q^{\#}(Y,\phi_L,0)\right)\\
	=&\sum_{\bar{Y}^L=LX \atop Y\cap \{L^{-j}a,L^{-j}b\}\neq\emptyset} \left(k_q^{\#}(Y,\phi_L,\rho)-k_q^{\#}(Y,\phi_L,0)\right)\\
		=&\sum_{\bar{Y}^L=LX \atop Y\cap \{L^{-j}a,L^{-j}b\}\neq\emptyset} \de k_q^{\#}(Y,\phi_L,\rho)
	\end{split} \ee
  The last step is by the pinning property of $K$ (so as $k_q$). Therefore instead of summing over all $L^2$ terms of polymers which span $LX$, the last summation is just over $\cO(1)$ terms. So
  \be
	\begin{split}
	\norm{\de(\cE_1\cF_1)\left(   k_q(\rho)\cdot\chi_\cS\right)}_{j+1} 
	&\leq \cO(1) \norm{ \left(  \de k_q(\rho)\cdot\chi_\cS\right)}_j\\
	&\leq \cO(1)e^{2N_C\abs{q}}\norm{\de  k_q(\rho)}_{j} 
	\end{split} \ee
	Therefore for $\beta>8\pi$
	\be
\begin{split}
\norm{\de(\cE_1\cF_1)(\sum_{q\ne 0}   k_q\cdot\chi_\cS)}_{j+1}
= &\norm{\de \cF_1\left(\sum_{q\ne 0}   k_q\cdot\chi_\cS\right)}_{j+1}      \\
\leq &\cO(1)\sum_{q\neq 0}e^{2N_C\abs{q}}\norm{\de k_q}_j\\
\leq &\cO(1) \sum_{q\neq 0}\left(e^{ -|q|(\beta C(0)-2N_{\beta C})  +\beta C(0)/2}\right)\norm{\de K_j}_j\\
\leq  &  \cO(1) L^{-\beta/4\pi}\norm{\de K_j}_j
\end{split} \ee
The last step is by the property of the covariance $C(0)$.
Therefore for the charged sectors, by induction we have
\be \begin{split}
	\norm{\de(\cE_1\cF_1)\left(\sum_{q\ne 0}   k_q\cdot\chi_\cS\right)}_{j+1} & \leq    \cO(1) L^{-\beta/4\pi}\norm{\de K_j}_j  \\
	&\leq    \cO(1) L^{-2}\norm{\de K_j}_j \\
	& \leq  \frac{1}{4} L^{-2(1-\epsilon_0)} \norm{\de K_j}_j \\
	&\leq\frac{1}{4} L^{-2(1-\epsilon_0)(j+1)}  \epsilon'
\end{split}\ee
if $\beta>8\pi$.

	\item Estimates on the neutral terms:
	\be
	\norm{\de(\cE_1\cF_1)( k_0 \cdot\chi_\cS)}_{j+1} \leq   \frac{1}{4} L^{-2(1-\epsilon_0)(j+1)} \epsilon'
	\ee
This bound follows from the estimate on the neutral sectors $K_j(\rho)$ similarly.	
	\item Estimates on the large set terms:
	\be
	\norm{\de(\cE_1\cF_1)( K_j \cdot\chi_{\overline{\cS}})}_{j+1} \leq   \frac{1}{4} L^{-2(1-\epsilon_0)(j+1)} \epsilon'
	\ee
	This  bound follows from \ref{07172011b} similarly.
\end{enumerate}

Therefore by combining the charged, neutral, large set and large field terms, we have
\be
	\begin{split}
	 \norm{\de K_{j+1}(\rho)}_{j+1}
	 \leq    L^{-2(1-\epsilon_0)(j+1)}\epsilon'
	\end{split}
	\ee
	for $\beta>8\pi$.
 \end{proof}

\begin{lem} Second estimate \label{07212011}
	\be
\abs{	 \frac{\partial^2}{\partial \la_1\partial\la_2} \left[\cE_M(\rho)\right] \Big|_{\la_1=\la_2=0}}=\cO(1)\epsilon'\abs{a-b}^{-2(1-\epsilon_0)}
	\ee
	\end{lem}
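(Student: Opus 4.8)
The strategy is to sum the telescoping identity
\be
\frac{\partial^2}{\partial\la_1\partial\la_2}\,\cE_M(\rho)\Big|_{\la_1=\la_2=0}
=\sum_{j=1}^{M}\frac{\partial^2}{\partial\la_1\partial\la_2}\,\big[\cE_j(\rho)-\cE_{j-1}(\rho)\big]\Big|_{\la_1=\la_2=0}
=\sum_{j=I}^{M}\frac{\partial^2}{\partial\la_1\partial\la_2}\,T_j(\rho)\Big|_{\la_1=\la_2=0},
\ee
where the first replacement of $\cE_j-\cE_{j-1}$ by $T_j$ is allowed since $\cE_j(0)-\cE_{j-1}(0)$ is $\la$-independent, and the truncation of the sum to $j\geq I$ is exactly Lemma \ref{firstestimate}. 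So everything reduces to bounding each surviving term for $j\geq I$ and summing the resulting geometric-type series.

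\textbf{Main steps.} First I would express $T_j(\rho)$ through the extraction coefficient: as recalled in the proof of Lemma \ref{firstestimate}, $\cE_j(\rho)-\cE_{j-1}(\rho)=\sum_{X\subset\La_{M-j}}\alpha_j^{(0)}(X,\rho)$, and after the $\sigma$-terms cancel in $T_j$ one has $T_j(\rho)=\sum_{X\subset\La_{M-j}}\big(\alpha_j^{(0)}(X,\rho)-\alpha_j^{(0)}(X,0)\big)$, with $\alpha_j^{(0)}$ built from $\overline{(\cS_1\cB_1 K_{j-1}(\rho))^{\natural}}$ evaluated at $\phi=0$ on small sets. Applying $\partial^2/\partial\la_1\partial\la_2|_{\la=0}$ and using analyticity in $\la$ on the ball of radius $\la_0$ (a Cauchy bound in $\la$), the second $\la$-derivative of $T_j$ is controlled by $\la_0^{-2}$ times the norm of $\de K_j(\rho)=K_j(\rho)-K_j(0)$ composed with the R.G. maps, i.e. by $\cO(1)\la_0^{-2}\,\norm{\de K_{j-1}(\rho)}_{j-1}$ via Lemma \ref{NKestimate}-type estimates and the extraction bound of Lemma \ref{rges}. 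Here the pinning property is essential: the double $\la$-derivative forces $X$ to contain \emph{both} dipoles (a polymer meeting only $L^{-j}a$ contributes nothing to $\partial/\partial\la_2$, and symmetrically), so the sum over $X$ is not a volume sum but is $\cO(1)$. Second, I would invoke the bound $\norm{\de K_j(\rho)}_j\leq L^{-2(1-\epsilon_0)j}\epsilon'$ just proved, so that
\be
\abs{\frac{\partial^2}{\partial\la_1\partial\la_2}T_j(\rho)\Big|_{\la_1=\la_2=0}}\leq \cO(1)\,\epsilon'\,L^{-2(1-\epsilon_0)j}.
\ee
Third, I would sum over $j=I,\dots,M$: $\sum_{j\geq I}L^{-2(1-\epsilon_0)j}=\cO(1)L^{-2(1-\epsilon_0)I}$, and since $L^I\leq\abs{a-b}<L^{I+1}$ we get $L^{-2(1-\epsilon_0)I}=\cO(1)\abs{a-b}^{-2(1-\epsilon_0)}$, uniformly in $M$. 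This yields the claimed estimate.

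\textbf{Main obstacle.} The routine parts are the Cauchy bound in $\la$ and the geometric summation; the real content is establishing that the double $\la$-derivative of $T_j$ really picks out only the two-dipole-containing polymers and hence inherits the pinning suppression of the volume factor $L^2$, so that one genuinely gets a contractive geometric series with ratio $L^{-2(1-\epsilon_0)}$ rather than something growing like $L^{2}\cdot L^{-2(1-\epsilon_0)j}$ per step. Concretely: one must check that $\partial/\partial\la_1$ of any $\rho$-dependent polymer activity supported on a small set not containing $L^{-j}a$ vanishes, that this survives reblocking/scaling/Gaussian integration and the extraction, and that the single-dipole pieces (polymers containing $a$ but not $b$, or vice versa) contribute zero to the \emph{mixed} second derivative at $\la=0$. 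Granting the pinning structure carried along in Theorem \ref{07202011} and the last lemma, this is bookkeeping, but it is the step that makes the decay exponent come out correctly and where one has to be careful that the constant $\cO(1)$ is $M$-independent.
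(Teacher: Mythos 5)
Your proposal is correct and follows essentially the same route as the paper: telescoping $\cE_M$ into $\sum_j T_j(\rho)$, killing the $j<I$ terms by the first estimate, applying a Cauchy bound in $\la$ to control each term by $\cO(1)\sup_{\abs{\la_i}=\la_0}\norm{\de K_j(\rho)}$, and summing the geometric series $\sum_{j\geq I}L^{-2(1-\epsilon_0)j}$ with $L^I\leq\abs{a-b}<L^{I+1}$. Your additional discussion of why the pinning property suppresses the volume sum is consistent with (and slightly more explicit than) the paper's treatment, which delegates that point to the preceding lemmas.
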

\begin{proof}
Note that $K(\cdot,\rho)$ is analytic with respect to $\la_i$ in the  small ball with radius $\la_0$, therefore we can apply the Cauchy Bound to get:
\begin{align}
	\abs{\frac{\partial^2}{\partial \la_1\partial\la_2} \left[\cE_M(\rho)\right] \Big|_{\la_1=\la_2=0}}&=
\abs{	\frac{\partial^2}{\partial \la_1\partial\la_2} \left[\sum_{j=1}^{M} T_j(\rho)\right] \Big|_{\la_1=\la_2=0}}\\
	&\leq \cO(1)\cdot\sum_{j=1}^{M}\sup_{\abs{\la_i}=\la_0} \norm{\de K_j(\rho)}\\
	&=\cO(1)\epsilon' \sum_{j=I}^M L^{-2(1-\epsilon_0)j}\\
	&=\cO(1)\epsilon'\abs{a-b}^{-2(1-\epsilon_0)}
	\end{align}
	\end{proof}

\subsubsection{Conclusion}

Now we summarize into our main result:

\begin{thm}  
Let $\beta > 8 \pi$,  
let  $L$ be sufficiently large, 
and  let $\abs{z}$ be  sufficiently small.
Then
\be
\abs{\left<\partial\phi(a)\partial\phi(b)\right>^T}=\cO(1)\epsilon'\abs{a-b}^{-2(1-\epsilon_0)}
\ee
where $\cO(1)$ and $\epsilon'$ are constants independent of $M$.
\end{thm}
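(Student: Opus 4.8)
The plan is to evaluate the representation of Theorem~\ref{07202011} at the terminal scale $j=M$ and to read off the truncated correlation from the two $\rho$-dependent objects produced there, $\cE_M(\rho)$ and $K_M(\La_0,\cdot,\rho)$, both of which have already been estimated in the preceding lemmas.

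First I would record the form of the generating functional at $j=M$. There the volume is the unit torus $\La_0$, which consists of a single block, and $v_0\equiv 0$, so $d\mu_{\beta v_0}$ is the point mass at $\phi=0$. Since $\La_0$ admits no proper sub-polymers, $\cE xp(\Box e^{-V_M}+K_M(\rho))(\La_0,0)=e^{-V_M(\La_0,0)}+K_M(\La_0,0,\rho)=1+K_M(\La_0,0,\rho)$, using $V_M(\La_0,0)=\sigma_M\int_{\La_0}(\partial 0)^2=0$. Hence $Z'(\rho)=e^{\cE_M(\rho)}\bigl(1+K_M(\La_0,0,\rho)\bigr)$, and since $Z(\rho)=Z'(\rho)/Z'(0)$,
\[
\log Z(\rho)=\bigl(\cE_M(\rho)-\cE_M(0)\bigr)+\log\bigl(1+K_M(\La_0,0,\rho)\bigr)-\log\bigl(1+K_M(\La_0,0,0)\bigr).
\]
Differentiating in $\la_1,\la_2$ and setting $\rho=0$ annihilates the $\rho$-independent terms, so
\[
\left<\partial\phi(a)\partial\phi(b)\right>^T=-\frac{\partial^2}{\partial\la_1\partial\la_2}\cE_M(\rho)\Big|_{\rho=0}-\frac{\partial^2}{\partial\la_1\partial\la_2}\log\bigl(1+K_M(\La_0,0,\rho)\bigr)\Big|_{\rho=0}.
\]

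For the first term I would invoke Lemma~\ref{07212011} verbatim, which already gives the bound $\cO(1)\epsilon'\abs{a-b}^{-2(1-\epsilon_0)}$. For the second term, write $f(\rho)=K_M(\La_0,0,\rho)$; then $\partial^2_{\la_1\la_2}\log(1+f)=\frac{\partial^2_{\la_1\la_2}f}{1+f}-\frac{(\partial_{\la_1}f)(\partial_{\la_2}f)}{(1+f)^2}$, and each $\la$-derivative of $f$ at $\rho=0$ coincides with the corresponding $\la$-derivative of $\de K_M(\rho)=K_M(\rho)-K_M(0)$, since the $\rho$-dependence of $f$ is exactly $f(\rho)-f(0)$. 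A Cauchy estimate on the analyticity ball $\abs{\la_i}\leq\la_0$ then bounds $\abs{\partial_{\la_1}f}$, $\abs{\partial_{\la_2}f}$ and $\abs{\partial^2_{\la_1\la_2}f}$ at $\rho=0$ by $\cO(1)\sup_{\abs{\la_i}=\la_0}\norm{\de K_M(\rho)}_M$, and since $\abs{f}$ is small this yields $\bigl|\partial^2_{\la_1\la_2}\log(1+f)|_{\rho=0}\bigr|\leq\cO(1)\norm{\de K_M(\rho)}_M\leq\cO(1)L^{-2(1-\epsilon_0)M}\epsilon'$ by the decay bound established above. Finally, since $\abs{a-b}<L^{I+1}$ and $M\geq I+1$ (as $\abs{a-b}$ cannot exceed the torus diameter $\cO(L^M)$), one has $L^{-2(1-\epsilon_0)M}\leq L^{-2(1-\epsilon_0)(I+1)}\leq\abs{a-b}^{-2(1-\epsilon_0)}$, so this contribution is also $\cO(1)\epsilon'\abs{a-b}^{-2(1-\epsilon_0)}$. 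Adding the two pieces yields the theorem with a constant independent of $M$.

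The terminal-scale bookkeeping and the Cauchy estimate are routine; the genuine content sits in the earlier lemmas (the propagation of the pinning property, the $\de K_j$ decay, and Lemma~\ref{07212011}). The one point worth emphasizing is that the bound on the $K_M$ contribution must be uniform in $M$: this is precisely why the improved estimate $\norm{\de K_j(\rho)}_j\leq L^{-2(1-\epsilon_0)j}\epsilon'$ (rather than the crude $\de^j\epsilon'$) is needed, so that comparing $L^{-2(1-\epsilon_0)M}$ with $L^{-2(1-\epsilon_0)(I+1)}$ leaves only a convergent geometric remainder over the scales $I<j\leq M$ — the same mechanism already exploited in the proof of Lemma~\ref{07212011}. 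I would therefore expect no new obstacle here beyond assembling these ingredients.
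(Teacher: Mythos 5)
Your proposal is correct and follows essentially the same route as the paper: evaluate the RG representation at the terminal scale $\La_0$ to get $\log Z(\rho)=\cE_M(\rho)+\log(1+K_M(\La_0,\rho))-(\text{same at }\rho=0)$, bound the $\cE_M$ contribution by Lemma~\ref{07212011}, and bound the $\log(1+K_M)$ contribution by $\cO(1)\norm{\de K_M(\rho)}_M\leq\cO(1)\epsilon' L^{-2(1-\epsilon_0)M}$. Your added remarks (the Cauchy estimate making the $\la$-differentiation of the log term explicit, and the comparison $L^{-2(1-\epsilon_0)M}\leq\abs{a-b}^{-2(1-\epsilon_0)}$ via $M\geq I+1$) only make explicit what the paper leaves implicit.
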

\begin{proof}
Note that the equation (\ref{ge01}) holds for all $1\leq j\leq M-1$.  Now to complete the proof, we want to absorb all terms in last step into exponential factor.
We start with result \ref{07162011} in $M-1$ step:
\be \begin{split}
&e^{\cE_{M-1}(\rho)} 
 \int  \cE xp \left(\Box e^{-V_{M-1}(\phi)}+K_{M-1}(\phi,\rho)\right) \left(\La_{1}\right) d\mu_{\beta C} (\phi)\\
 =& e^{\cE_{M}(\rho)}  \cE xp \left(\Box +K_{M}(\rho)\right) \left(\La_{0}\right)
\end{split} \ee
since $\La_1$ coincides with the range L of covariance $C$ exactly. The above equation is a special case of the RG step \ref{ge01} with $\phi=\phi_L=0$ and  $e^{-V_M(\phi=0)}=1$. 

 Now with $\La_0=\bR^2/\bZ^2$, $\Box=\De_0$ is the only block in the final step, therefore
\be
\cE xp(\Box +K_M(\rho))(\La_{0})=1+K_M(\rho,\La_{0})
\ee
which implies
\be
\log Z(\rho)= \tilde{\cE}_M(\La_0,\rho)-\tilde{\cE_M}(\La_0,\rho=0) 
\ee
with 
\be\tilde{\cE}_M(\La_0,\rho)=\cE_{M}(\La_0,\rho)+\log(1+K_M(\La_0,\rho))\ee
Note that 
\be\begin{split}\abs{\log(1+K_M(\La_0,\rho))-\log(1+K_M(\La_0,0))}&\leq \cO(1)\norm{\de K_M(\rho)}_M\\
&\leq\cO(1)\epsilon'  L^{-2(1-\epsilon_0)M}\end{split}\ee
Hence by \ref{tcdef}, 
\be
\abs{\left<\partial\phi(a)\partial\phi(b)\right>^T}\leq \abs{	 \frac{\partial^2}{\partial \la_1\partial\la_2} \left[\cE_M(\rho)\right] \Big|_{\la_1=\la_2=0}}+\cO(1)\epsilon'  L^{-2(1-\epsilon_0)M}
\ee
then the result follows by Lemma \ref{07212011}.
\end{proof}
\begin{remark}
Note that for $\beta>8\pi$, the dipole$-$dipole decay behaves like the second order power decay. This supports the picture that over the extremely low temperature($\beta$ large), the two dimensional Coulomb gas behaves like dilute dipole gas. And the bound we have shown here is the upper bound. 
\end{remark}

\section{Appendix}
\subsection{Regulator Estimate} \label{regulatorestimatesection}
Now we prove the estimate \ref{regulatorestimate}. Consider for $0\leq t\leq 1$, the family of large field regulators
\be
G_t(X,\phi)=2^{t\abs{X}}G_L(X,\phi)^tG(X,\phi)^{1-t}
\ee
We want to show that for $0\leq t\leq 1$, 
\be
\int G_0(X,\phi+\zeta)d\mu_{tC}(\zeta)\leq G_t(X,\phi)
\ee
Then the estimate \ref{regulatorestimate} follows as the special case $t=1$.

Denote \begin{align}(\mu_C *G)(X,\phi)&=\int G(X,\phi+\zeta)d\mu_C(\zeta)\\
U(t,X)&=\log G_t(X,\phi)\\
C\left(\frac{\partial U}{\partial \phi},\frac{\partial U}{\partial \phi}\right)&=\int C(x,y)\frac{\partial U}{\partial \phi(x)}\frac{\partial U}{\partial \phi(y)}dxdy\\
\Delta_C U&=\frac{1}{2}\int C(x,y)\frac{\partial^2 U}{\partial \phi(x) \partial \phi(y)}dxdy
\end{align}
where $G_2(\phi;\zeta,\zeta)$ denotes the second order functional derivative in $\phi$.
Then note that 
\begin{align}
\label{05142011}&\De_CU+\frac{1}{2}C\left(\frac{\partial U}{\partial \phi},\frac{\partial U}{\partial \phi}\right)\leq \frac{\partial U}{\partial t}\\
\Rightarrow & \frac{\partial G_s}{\partial s}-\De_C G_s\geq 0 \quad\mbox{for $0\leq s\leq t$}\\
\Rightarrow & \mu_{(t-s)C}*\left(\frac{\partial G_s}{\partial s}-\De_C G_s\right)\geq 0 \quad\mbox{for $0\leq s\leq t$}\\
\Rightarrow & \frac{\partial}{\partial s}\mu_{(t-s)C}*G_s\geq 0\quad\mbox{for $0\leq s\leq t$}\\
\Rightarrow & \mu_{(t-s)C}*G_s\leq G_t\quad\mbox{for $0\leq s\leq t$}\\
\Rightarrow & \mu_{tC}*G_0\leq G_t
\end{align}
Therefore to show the estimate \ref{regulatorestimate}, it suffices to verify \ref{05142011}. Now substitute the definition of $U(t,X)$
\be
\begin{split}
U(t,X)=t\abs{X}\log 2+&\kappa\sum_{\Delta}\sum_{1\leq\abs{\alpha}\leq s}(tL^{2\abs{\alpha}-2}+1-t)\int_{\mathring{\Delta}}(\partial^{\alpha}\phi)^2\\
+&\kappa c\sum_{\abs{\alpha}=1}(tL+1-t)\int_{\partial X}(\partial^{\alpha}\phi)^2
\end{split}
\ee
into \ref{05142011}. 
 We investigate the term $C\left(\frac{\partial U}{\partial \phi},\frac{\partial U}{\partial \phi}\right)$ first. For simplicity, we write $Y=\sum_{\De\subset X}\mathring{\De}\subset X$. Then by direct computation,
\be \label{05242011}
\begin{split}
&C\left(\frac{\partial U}{\partial \phi},\frac{\partial U}{\partial \phi}\right)\\=
&2\kappa^2(tL^{2\abs{\alpha}-2}+1-t)^2\sum_{\alpha,\beta}\int_{Y\times Y} \partial^{\alpha+\beta}C(x,y)\partial^{\alpha}\phi(x)\partial^{\beta}\phi(y)dxdy\\
&+2\kappa^2c(tL^{2\abs{\alpha}-2}+1-t)(tL+1-t)\sum_{\alpha,\beta}\int_{Y\times \partial X} \partial^{\alpha+\beta}C(x,y)\partial^{\alpha}\phi(x)\partial^{\beta}\phi(y)dxdy\\
&+2\kappa^2c^2(tL+1-t)^2\sum_{\alpha,\beta}\int_{\partial X\times\partial X} \partial^{\alpha+\beta}C(x,y)\partial^{\alpha}\phi(x)\partial^{\beta}\phi(y)dxdy
\end{split}
\ee
where $\sum_{\alpha,\beta}$ means $\sum_{1\leq \abs{\alpha},\abs{\beta}\leq s}$.
Note that in \ref{05242011}, the term  
 \be I=\int_{Y\times Y} \partial^{\alpha+\beta}C(x,y)\partial^{\alpha}\phi(x)\partial^{\beta}\phi(y)dxdy\ee 
with $\abs{\alpha}=\abs{\beta}=1$ is subtle since there is no corresponding term in $\frac{\partial U}{\partial t}$ with $\abs{\alpha}=1$ to dominate it. Instead we use integration by parts. First by smoothness condition of $\phi$, 
we can replace $Y$ by $X$ and compute
 \be \label{05232011}
 \begin{split}
 I
 =&\int_{X\times X} \partial^{\alpha+\beta}C(x,y)\partial^{\alpha}\phi(x)\partial^{\beta}\phi(y)dxdy\\
 =&\int_X\left( \int_{\partial X} \partial^{\beta} C(x,y)\partial^{\alpha}\phi(x)\partial^{\beta}\phi(y)dx- 
 \int_{ X} \partial^{\beta} C(x,y)\partial^{2\alpha}\phi(x)\partial^{\beta}\phi(y)dx \right) dy\\
 =&\int_{\partial X}\left( \int_{\partial X}  C(x,y)\partial^{\alpha}\phi(x)\partial^{\beta}\phi(y)dy-\int_{ X}  C(x,y)\partial^{\alpha}\phi(x)\partial^{2\beta}\phi(y)dy\right)dx\\
 &-\int_{ X}\left( \int_{\partial X}  C(x,y)\partial^{2\alpha}\phi(x)\partial^{\beta}\phi(y)dy-\int_{ X}  C(x,y)\partial^{2\alpha}\phi(x)\partial^{2\beta}\phi(y)dy\right)dx
 \end{split}
 \ee 
Now under some conditions on the covariance $C$, each term in \ref{05232011} is dominated by $\partial U/\partial t$. The boundary term $\partial X$ in \ref{05232011} is the reason we have included the boundary term in the large field regulator \ref{gee}.

Since all the other terms in \ref{05242011} with $\abs{\alpha}\geq 2$ is dominated by $\partial U/\partial t$,  by summing over all possible $\alpha$,  $C\left(\frac{\partial U}{\partial \phi},\frac{\partial U}{\partial \phi}\right)$ which  is order $\kappa^2$ is bounded by $\partial U/\partial t$ which is order $\kappa$.

Similarly the term
\be
\begin{split}
&\De_C U\\
&= \sum_{1\leq\abs{\alpha}\leq s} \kappa(tL^{2\abs{\alpha}-2}+1-t)\int_{Y}\partial^{2\alpha} C(x,x)dx + \kappa c(tL+1-t)\int_{\partial X}\partial^{2}C(x,x)\\
&\leq \sum_{1\leq\abs{\alpha}\leq s} \kappa(tL^{2\abs{\alpha}-2}+1-t) \abs{Y}\sup_{x\in Y}\abs{\partial^{2\alpha} C(x,x)} \\
&\qquad+ \kappa c(tL+1-t)\abs{\partial X}\sup_{x\in \partial X}\abs{\partial^{2}C(x,x)}\\
&\leq \abs{X}\log 2
\end{split}
\ee
under some conditions on the covariance $C$ and $\kappa$.
 
Therefore, the equivalent condition of the inequality \ref{05142011} turns out to be the following quantities are sufficiently small:
\begin{align}
&\kappa L^{2s-2} \sup_{1\leq \abs{\alpha},\abs{\beta}\leq s} \sup_{x\in X}\abs{\partial_x^{\alpha}\partial_x^{\beta}C(x,x)}\\
&\kappa c^{-1}\sup_{0\leq \abs{\alpha},\abs{\beta}\leq s} \sup_{x\in X}\int_X\abs{\partial_x^{\alpha}\partial_y^{\beta}C(x,y)}dy\\
&\kappa c^{-1}\sup_{0\leq \abs{\alpha},\abs{\beta}\leq s} \sup_{x\in X}\int_{\partial X}\abs{\partial_x^{\alpha}\partial_y^{\beta}C(x,y)}dy
\end{align}
with some conditions on $\kappa$ to be specified. Now recall the covariance defined in Section \ref{FRC}, without loss of generality we may assume $g(x)$ attains its maximum at $x=0$ according to the $C^{\infty}$ condition on $g$. Then 
\be
\begin{split}
\int_X C(x)dx&=\int_X\int_1^L\frac{1}{l}u(x/l)dldy=\int_1^L\int_{l^{-1}X}u(y)dydl\\
&\leq \cO(1)\int_1^L l^{-2}L^2dl=\cO(1)L^2
\end{split}
\ee and for $\abs{\alpha}\geq 1$,
\be
\partial^{\alpha}_xC(x-y)=\int_1^L \frac{1}{l^{1+\abs{\alpha}}} u^{\alpha}\left(\frac{x-y}{l}\right)dl\leq \cO(1)
\ee
Also by the finite range property, $\int_X$ is bounded by $L^2$ and $\int_{\partial X}$ is  bounded by $L$. Therefore, the condition on $\kappa$ should be $\kappa c^{-1}L^{2s-2}$ with $s\geq 2$ is sufficiently small.

Also note that in the present paper, $\kappa_0=h_{\infty}^{-2}\leq \cO(L^{-s})$, so $\kappa_0 c^{-1}L^{2s-2}$ is sufficiently small. Similarly for $\kappa_j$. This completes the proof of regulator estimate \ref{regulatorestimate}.

\end{document}